\documentclass[11pt , reprint,
superscriptaddress,
nofootinbib,onecolumn,notitlepage,
amsmath,amssymb,
aps,longbibliography
]{revtex4-1}
\usepackage[english]{babel}
\usepackage[utf8x]{inputenc}
\usepackage[T1]{fontenc}
\usepackage{circuitikz}
\AtBeginDocument{%
    \newwrite\bibnotes
    \def\bibnotesext{Notes.bib}
    \immediate\openout\bibnotes=\jobname\bibnotesext
    \immediate\write\bibnotes{@CONTROL{REVTEX41Control}}
    \immediate\write\bibnotes{@CONTROL{%
    apsrev41Control,author="08",editor="1",pages="1",title="0",year="0"}}
     \if@filesw
     \immediate\write\@auxout{\string\citation{apsrev41Control}}%
    \fi
  }%

\usepackage[nodayofweek]{datetime}
\usepackage{amsmath}
\usepackage{url}
\usepackage{graphicx}
\usepackage{amsthm}
\usepackage{dcolumn}
\usepackage{bm}
\usepackage{amsmath}
\usepackage{amssymb}
\usepackage{amsfonts}
\usepackage{graphicx}
\usepackage{lipsum}

\usepackage{hyperref}
\usepackage{xcolor}
\definecolor{mylinkcolor}{rgb}{0,0,0.8} 
\hypersetup{unicode=true,%
  bookmarksnumbered=false,bookmarksopen=false,bookmarksopenlevel=1, %
  breaklinks=true,pdfborder={0 0 0},colorlinks=true}%
\hypersetup{%
  anchorcolor=mylinkcolor,citecolor=mylinkcolor, %
  filecolor=mylinkcolor,linkcolor=mylinkcolor, %
  menucolor=mylinkcolor,runcolor=mylinkcolor, %
  urlcolor=mylinkcolor}

\newtheorem{lemma}{Lemma}
\newtheorem{theorem}{Theorem}

\theoremstyle{definition}
\newtheorem{definition}{Definition}

\newtheorem{protocol}{Protocol}

\newcommand{\tr}{\mathrm{tr}}
\newcommand{\ket}[1]{| #1 \rangle}
\newcommand{\bra}[1]{\langle #1 |}
\newcommand{\braket}[2]{\langle #1|#2\rangle}
\newcommand{\ketbra}[2]{\ket{#1}\!\bra{#2}}
\newcommand{\proj}[1]{\ket{#1}\!\bra{#1}}
\newcommand{\id}{\openone}

\newcommand{\cI}{\mathcal{I}}

\newcommand{\ot}{\otimes}
\newcommand{\bin}{\mathrm{bin}}
\newcommand{\scorefunction}{S}
\newcommand{\score}{\omega}
\newcommand{\Ext}{\mathrm{Ext}}

\newcommand{\rate}{\mathrm{rate}}

\newcommand{\overlap}{\Theta}
\newcommand{\overlapfunction}{\mathcal{O}}
\newcommand{\entropy}{\mathcal{G}}
\newcommand{\entropytwo}{\mathcal{P}_1}

\newcommand{\comment}[1]{}
\def\M{\ensuremath\mathcal}
\def\B{\ensuremath\mathbf}
\usepackage{enumitem}
\usepackage{mdframed}
\begin{document}
\title{Higher rates for semi-device-independent randomness expansion by recycling input randomness}
\author{Rutvij Bhavsar}
\email{rutvij.bhavsar@kcl.ac.uk}
\affiliation{Department of Mathematics, King's College London, Strand, London, WC2R 2LS, United Kingdom}
\affiliation{Department of Mathematics, University of York, Heslington, York, YO10 5DD, United Kingdom}
\affiliation{School of Electrical Engineering, Korea Advanced Institute of Science and Technology (KAIST),
291 Daehak-ro, Yuseong-gu, Daejeon 34141, Republic of Korea}
\author{Hamid Tebyanian}
\email{h.tebyanian@qmul.ac.uk}
\affiliation{School of Physical and Chemical Sciences, Queen Mary University of London, London, E1 4NS, United Kingdom}
\affiliation{Department of Mathematics, University of York, Heslington, York, YO10 5DD, United Kingdom}
\author{Roger Colbeck}
\email{roger.colbeck@kcl.ac.uk}
\affiliation{Department of Mathematics, King's College London, Strand, London, WC2R 2LS, United Kingdom}
\affiliation{Department of Mathematics, University of York, Heslington, York, YO10 5DD, United Kingdom}
\date{\today}
\begin{abstract}
Although quantum random number generators rely on the inherent indeterminism of quantum mechanics, ensuring that the numbers produced are secure remains a significant challenge. We introduce two semi-device-independent randomness expansion protocols in a prepare-and-measure setting, where the source and measurement devices are treated as uncharacterised and we assume trust only in testing device, which could be implemented using a photodiode. One protocol achieves expansion by recycling the input randomness, while the other uses a biased input distribution to achieve expansion in settings where recycling is not possible. The protocols are proven secure against quantum side information. Our results show that high randomness rates are achievable under experimentally realistic conditions, with expansion obtained in as few as $10^5$ to $10^6$ rounds with the recycling protocol.
\end{abstract}

\maketitle

\section{Introduction} 

Random numbers serve diverse purposes, spanning cryptography, gambling, and scientific experiments. For cryptographic applications, there are two key requirements: the numbers in the output should be uniformly distributed and they should be unguessable by any third party. Protocols for randomness expansion aim to take some initial seed randomness and generate a longer string of output random numbers satisfying both of these requirements.

The standard quantum way to generate random numbers relies on having a detailed physical model of the devices used in the protocol. Security is proved based on the model and if the real world device deviates from it, the security proof may no longer apply. Furthermore, the properties of real-world devices can change over time due to factors such as temperature changes, component degradation or even tampering. To address this, device-independent (DI) randomness expansion protocols~\cite{ColbeckThesis,CK2,PAMBMMOHLMM} were introduced. These protocols achieve randomness expansion without relying on any assumptions about the inner workings of the devices used in the protocol, removing the need for a detailed device model and hence protecting against the aforementioned problems\footnote{For instance, components may still degrade, but the protocol will automatically abort if the devices are not working sufficiently well to generate secure output.}. However, implementing DI protocols experimentally poses significant challenges, primarily due to the need for robust entanglement distribution and detection. Consequently, while DI protocols offer very high security, current technology limits their applicability to only a few select applications~\cite{LLR&,Shalm2021}.

By using physically well-motivated assumptions about the experimental setup and the devices, semi-device-independent protocols aim to achieve high levels of security while being significantly easier to implement and having higher randomness rates compared to DI protocols. Since distributing entanglement presents a primary challenge when implementing DI protocols, most semi-DI protocols use a prepare-and-measure setup~\cite{s22,semiDI-Li,wang2023provably}, which is generally more feasible to implement experimentally.

In this work, we propose semi-device-independent (semi-DI) randomness expansion protocols based on the prepare-and-measure framework introduced in~\cite{VSoriginal}, which has since been studied for randomness certification tasks in semi-DI settings~\cite{VS, RVMBWPB, teb_cv, Brask2017, Rusca2020, avesani2020, Tebyanian_2021}\footnote{It may be possible to modify some of the existing protocols to not only certify randomness, but also expand it. This would require modifications to the protocols, especially in terms of the input distribution.}. Unlike certification protocols, which aim to verify the unpredictability of a given string, our focus is on randomness expansion -- that is, generating an output string whose length exceeds the amount of initial randomness consumed. This task becomes particularly relevant in scenarios where the input randomness is a private bit string and cannot be treated as a free resource.

Many existing semi-device-independent randomness expansion protocols, such as source-device-independent~\cite{cao2016source} and measurement-device-independent schemes~\cite{wang2023provably}, rely on partial characterisation of either the source or the measurement device. In contrast, our assumption model is weaker: we do not characterise \emph{either} the source or the measurement device, and instead require trust in a testing device, which could be implemented using a photodiode. This testing device acts as a binary on/off detector, whose behaviour is significantly easier to model and verify in practice compared with a power meter that would be required to justify energy-based assumptions as in~\cite{VS,VSoriginal}. All the quantities needed in our analysis are inferred directly from the \textit{in-protocol} detection statistics, rather than imposed as external assumptions that would require running a separate composable protocol to certify source properties~\cite{Bhavsar2025Composable}. Running a separate composable state-certification protocol using standard techniques is itself resource intensive~\cite{wiesner2024quantum}.

We establish security using the Entropy Accumulation Theorem (EAT)~\cite{DFR,DF,MFR}, which provides security against adversaries holding quantum side information and avoids any i.i.d.\ assumption in the protocol, in contrast to many existing semi-DI randomness certification and expansion protocols.A precise statement of the physical and adversarial assumptions underlying our model is provided in Section~\ref{sec: security assumptions}.

Our main protocol achieves randomness expansion by recycling input randomness: part or all of the seed used in the extractor is fed back into the next round of the protocol. This substantially reduces the seed requirement while preserving composable security, and, as we show, enables high expansion rates even at relatively small numbers of rounds. For completeness, we also describe a second variant based on heavily biased input choices, in the spirit of spot‑checking protocols in the DI setting (see, e.g.,~\cite{LLR&}). This variant may be of interest in scenarios where the input randomness cannot be recycled, for instance when it originates from a public randomness beacon.

Given the minimal physical assumptions in our framework, one might expect a significant loss in performance. However, our results show that this is not the case. The recycling protocol achieves strong asymptotic expansion rates under realistic experimental parameters, and its performance is particularly notable in the finite‑size regime. Positive expansion can be obtained with as few as $10^5$ rounds in favourable experimental conditions, and with approximately $10^6$ rounds even under realistic imperfections. These results demonstrate that, despite operating under strictly weaker assumptions than existing semi‑DI approaches, our protocol delivers both high certified rates and practical finite‑size efficiency.

A key reason for the improved asymptotic performance of our protocols is that we estimate the single-round von Neumann entropy directly, rather than first lower-bounding it via the min-entropy and employing semidefinite programming relaxations. This leads to tighter entropy bounds under realistic experimental parameters.

\section{General framework}\label{sec: general_semi-DI protocol}

\begin{figure}
    \centering
    \includegraphics[width=1 \textwidth]{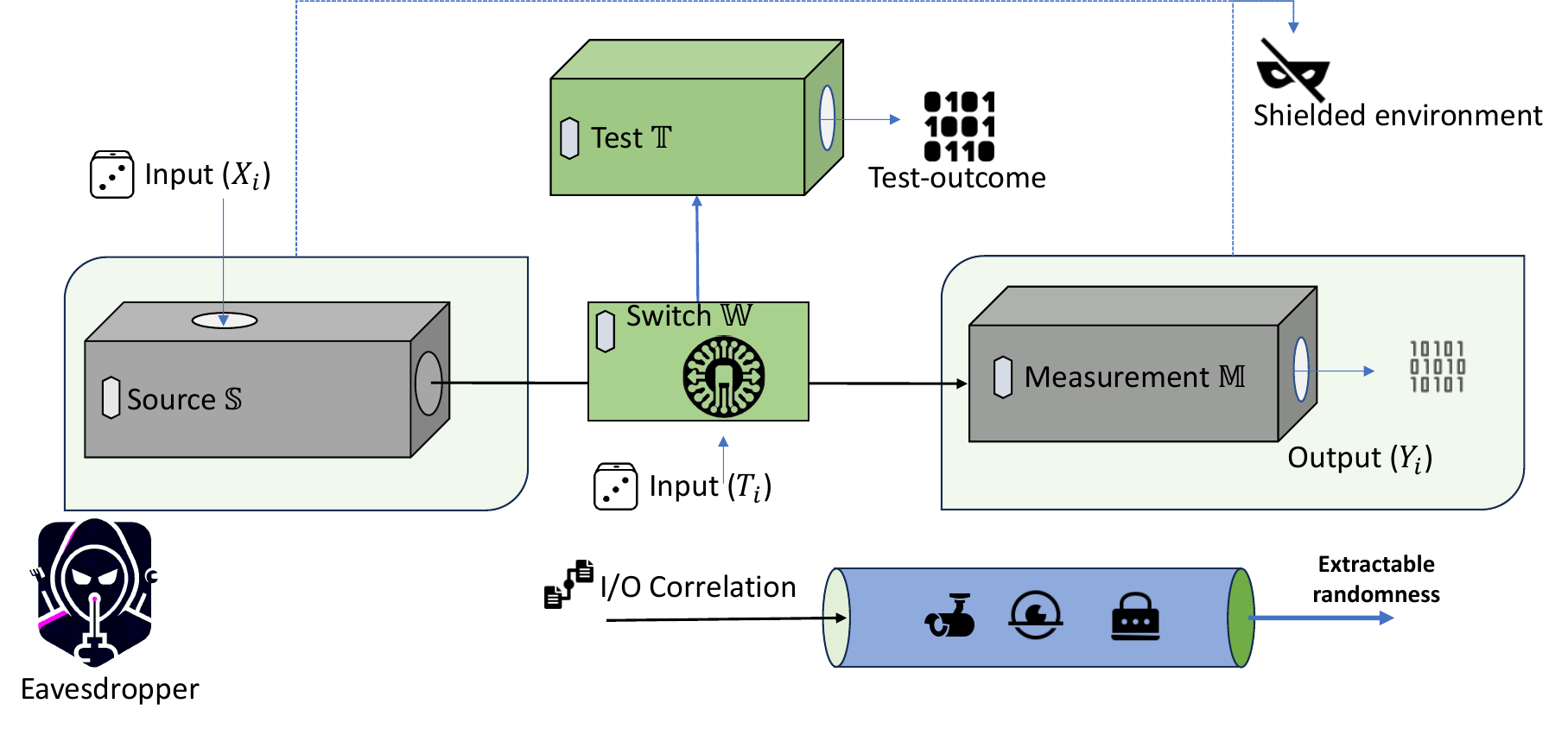}
    \caption{Schematic illustration of our protocols. The setup consists of $4$ different components: the source $\mathbb{S}$, switch $\mathbb{W}$, measurement device $\mathbb{M}$ and testing device $\mathbb{T}$. On round $i$, the source receives random input $X_i$ and generates a corresponding state which is sent to the switch. The switch takes random input $T_i$, corresponding on whether it sends the state to the trusted testing device or measurement device, where a measurement is performed giving outcome $Y_i$. The experiment is carried out in a secure lab, from which, it is assumed that information leak cannot happen. The devices in the black boxes are treated as uncharacterized, while the green components are fully trusted. Our two protocols differ only in the details of the extraction step, which also requires additional input randomness.}\label{fig: semi_DI_diag}
\end{figure}

Figure~\ref{fig: semi_DI_diag} gives a schematic illustration of the steps in our protocols. An honest implementation of the protocol is performed by a desired `honest source' \(\mathbb{S}^{\mathrm{honest}}\) and an desired `honest measurement device' \(\mathbb{M}^{\mathrm{honest}}\) (these are the devices that an experimenter would ideally like to set up). Both $\mathbb{S}^{\mathrm{honest}}$ and $\mathbb{M}^{\mathrm{honest}}$ behave in an i.i.d.\ fashion (note though that we do not assume this honest behaviour when analysing security; it is used to illustrate what we would ideally like the device to do). We describe each component of the protocol below.
\begin{itemize}
    \item \textbf{Source $\mathbb{S}$} (\textit{uncharacterized}): 
In round \(i\), an input \(X_{i} \in \{0, 1\}\) is generated according to an input probability distribution \(p_{X}\) provided to the source. Based on this input, the source prepares a state \(\rho^{x}_{i}\) (called a signal). Note that the source can prepare different states in each round (i.e., $\rho^{x}_{i}$ need not be equal to $\rho^{x}_{j}$ for $i \neq j$). \\ 
An honest source \(\mathbb{S}^{\mathrm{honest}}\) is a memoryless device that prepares a fixed pure state depending upon the input received, i.e., when $X_i=x$ the source $\mathbb{S}^{\mathrm{honest}}$ prepares $\rho^{x}_{i} = \proj{\psi^{x}}$ (independent on the round number $i$) and sends the state to the switch. 

\item \textbf{Switch $\mathbb{W}$:} The switch $\mathbb{W}$ takes an input $T_{i} \in \{ 0 , 1\}$.  If $T_{i} = 0$, then the switch sends the signal to the measurement device $\mathbb{M}$, otherwise the signal is sent to the testing device $\mathbb{T}$. The bias $p_{T}(0)$ is chosen to be approximately $1$, so that most of the time the signal is sent to the measurement device. The switch can be uncharacterized and in principle we ought to add an abort condition to the protocol for when there are too many detector clicks not aligned with the switch value. We do not consider this aspect in detail in this work because in practice good switches are straightforward to construct.

\item \textbf{Testing device $\mathbb{T}$} (\textit{characterized}): Upon receiving the signal the testing deive $\mathbb{T}$ performs a projective measurement \(\{\Pi_{0}, \id - \Pi_{0}\}\), where \(\Pi_{0} = \proj{0}\) is the projection on the ground state. If the ground state is measured, then the variable \(Y_{i} \) is set to $0$ otherwise it is set to $1$. The overlap $\Theta$ is estimated from input-output statistics of the test rounds. It is defined as the average probability of measuring the ground state:
\begin{eqnarray}
\overlap  := \frac{1}{2} \left( p(Y  = 0 | X  = 0, T  = 1) + p(Y  = 0 | X  = 1, T  = 1) \right).
\end{eqnarray}
For our protocols, since we do not assume i.i.d.\ behaviour, the quantity $\overlap$ is computed from the observed statistics collected during the protocol. Here $p(Y = y | X = x, T = t)$ are estimated by computing the average frequency of observing $Y = y$ when $X = x$ and $T = t$ in the statistics collected over many rounds.\footnote{Note that the definition of $\overlap$ above is independent of the input probability distribution $p_{X}$, i.e., the factor of $1/2$ should not be replaced with $p_{X}(x)$ when the input distribution is not uniform.}

\item \textbf{Measurement $\mathbb{M}$} (\textit{uncharacterized}): The role of the measurement device is to output a bit \(Y_{i} \in \{0, 1\}\) upon receiving the signal. If \(Y_{i} = X_{i}\), we consider the round won; otherwise, the round is lost. Similar to the test rounds, the score is estimated from input-output statistics of the measurement rounds. It is defined as the average winning probability:
\begin{eqnarray}
\score := \frac{1}{2} \left( p(Y  = 0 | X  = 0, T = 0) + p(Y  = 1 | X  = 1, T  = 0) \right).
\end{eqnarray}
As with $\overlap$ to compute $\score$ for our protocols, we estimate the conditional probabilities $p(Y = y | X = x, T = t)$ from the observed frequencies in the collected data.
An honest measurement device \(\mathbb{M}^{\mathrm{honest}}\) is a device that performs a pre-defined two-outcome measurement \(\{M_{0}, \id - M_{0}\}\) in each round.

\end{itemize}

Essentially, the protocol involves the source and the measurement device playing a state discrimination game, where the measurement device tries to guess the value $x$ of $X_i$ given $\rho_i^x$. Periodically, a ``spot-check'' is performed using the testing device to ensure that the source produces two states with significant overlap $\overlap$ with the ground state $\proj{0}$, ensuring that the states cannot be perfectly distinguished. Achieving a high enough score then ensures that the outputs $Y_i$ of the measurement device contain extractable randomness.   

Not all possible values of \((\score, \overlap)\) lead to randomness expansion. To illustrate this, we give a classical strategy that achieves overlap $\overlap$ for any $\overlap>1/2$. Suppose that when \(X = 0\) the source prepares the state \(\rho^0 = \proj{0}\), and when \(X = 1\), the source prepares \(\rho^1 = (2 \overlap - 1) \proj{0} + 2(1 - \overlap) \proj{1}\). The state \(\rho^1\) can be constructed using a classical binary random variable \(\Lambda\), where \(\Lambda = 0\) occurs with probability \(p_\Lambda(0) = 2 \overlap - 1\). Additionally, a score \(\score = 3/2 - \overlap\) can be obtained if the measurement device that performs the measurement \(\{\proj{0}, \id - \proj{0}\}\). In such cases, given the input \(X\), and access to the classical variable \(\Lambda\) the value of the bit \(Y\) can be determined with certainty. Thus, for any overlap \(\overlap>1/2\), for \(\score < \frac{3 - 2 \overlap}{2}\), the conditional entropy \(H(Y|XE)=H(XY|E) - H(X|E)\) is equal to \(0\), and hence randomness expansion cannot be achieved\footnote{Here $\Lambda$ is contained within $E$.}.

Note also that for a given overlap $\overlap$, a genuine quantum strategy exists that achieves a score of $\score = \frac{1}{2} + \sqrt{\overlap (1 - \overlap)}$. For this, the states $\rho^{0} = \proj{\psi_{0}}$, where $\ket{\psi_0} = \sqrt{\overlap} \ket{0} + \sqrt{1 - \overlap} \ket{1}$, and $\rho^{1} = \proj{\psi_{1}} = \sqrt{\overlap} \ket{0} - \sqrt{1 - \overlap} \ket{1}$ can be used. A score of $\frac{1}{2} + \sqrt{\overlap (1 - \overlap)}$ can be achieved using the measurement with $M_0 =P_+$, where $P_+$ is the projector onto the positive eigenspace of the operator $\proj{\psi_0} - \proj{\psi_1}$. The measurement $\{M_0,\id-M_0\}$ is the Helstrom measurement~\cite{hel69,hol73}, and is optimal for these states.

As the overlap, $\overlap$, approaches $1$, the fidelity between each of the prepared states and $\proj{0}$ increases, the states become more difficult to distinguish, and the maximum achievable score tends to $1/2$. For such values of $\overlap$, a small amount of experimental noise can lead to values of $\score$ that give no randomness. On the other hand, for $\overlap\approx1/2$, it is possible to obtain high scores; however, for $\overlap\approx1/2$ a wide range of scores can be achieved using a strategy that involves outputting pre-shared randomness. 
In Section~\ref{sec: results}, we provide plots (cf.\ Figure~\ref{fig:Gfunct}) showing the regime for which the protocol can be used for randomness expansion, and indicating the values of $(\score, \overlap)$ that are most useful for running the protocol.

\section{Security assumptions}\label{sec: security assumptions} 
We use the composable security definition given in Appendix~\ref{app: security definition}. The assumptions used in our work are as follows:
\begin{enumerate}
    \item\label{ass: DI-Quantum} Quantum theory is correct and complete. 
    \item\label{ass: DI-Sheilding} All protocols take place in a laboratory that is shielded from the outside world.
    \item\label{ass: carrier} The source and the measurement device communicate solely via the intended (and known) carrier of information. This carrier has an energy spectrum with the following properties:
    \begin{enumerate}
    \item\label{ass: sDI-ground} The ground state of the system prepared by the source is unique. 
    \item\label{as: sDI-gap} There is a gap between the ground state and the first excited state. 
    \end{enumerate}
    \item\label{ass: sDI-photodiode} The testing device is fully characterized.
    \item\label{ass: sDI-measurement} The measurement device does not share entanglement with any other system (the measurement device can share classical randomness with the source as well as with the adversary).
\end{enumerate}

Assumptions~\ref{ass: DI-Quantum} and~\ref{ass: DI-Sheilding} are fundamental for any randomness expansion protocol based on quantum theory, including device-independent (DI) protocols. Assumption~\ref{ass: carrier} is crucial for our protocol to prevent the source from communicating $X$ to the measurement device using an alternative carrier of information. If the measurement device could deterministically learn $X$ via the alternative carrier, it could then produce the output $Y$ using $X$ and some pre-shared classical randomness to achieve any desired score. For instance, in the case where $X$ is uniformly distributed, upon learning \(X\), the measurement device outputs \(Y = X\) with probability \(\score\) and \(Y = 1 - X\) with probability \(1-\score\). This random choice can be made using randomness that is preshared with the adversary, meaning that with access to $X$ the adversary could determine the output $Y$ with certainty. The additional carrier might not be detected by the testing device, allowing the source to prepare signals with any desired overlap. However, if light is the intended carrier, additional physical assumptions and time stamping can reasonably justify Assumption~\ref{ass: carrier}. More details are provided in Appendix~\ref{app: discussion_on_carrier_information}.

Assumption~\ref{ass: sDI-photodiode} is arguably weaker than assuming a trusted power meter (as in other works) because characterizing a power meter requires knowledge of the full energy spectrum of the system, whereas the testing device we use only needs to distinguish between the ground state and any excited state (in a photonic implementation, this can be achieved using a photodiode, for instance). We would ideally like to weaken this assumption, although it cannot be removed entirely (without another assumption replacing it).

Assumption~\ref{ass: sDI-measurement} posits that the source and measurement device do not share entanglement at any stage of the experiment. This assumption is used for theoretical convenience and ideally should be avoided. With present technology, it is difficult to store entanglement for more than a few seconds, so the source and measurement device can be isolated for a small amount of time can remove entanglement. We leave the relaxation of this assumption for future work.

\section{Protocols} \label{sec: protocols}
We now introduce two protocols for randomness expansion using this setup. The protocols have similarities with the spot-checking approach used in DI protocols. They use trusted random number generators (or some initial source of randomness) to generate inputs \(X\) and \(T\) respectively (indicated as dice in Figure~\ref{fig: semi_DI_diag}). We first consider the protocol that recycles input randomness.

\begin{center}
\begin{mdframed}[linecolor=black, roundcorner=5pt, skipabove=10pt, skipbelow=10pt, backgroundcolor=white, splittopskip=10pt, splitbottomskip=10pt]

\begin{protocol}\label{protocol: Semi-DI recycling}
\noindent\textbf{Parameters}:\\
$n$ -- number of rounds \\
$p_{0} $ -- probability that $X = 0$ (taken to be at most $1/2$)\\ 
$\gamma$ -- probability of a test (taken to be at most $1/2$)\\ 
$\overlap_{\text{exp}}$ -- expected overlap (taken to be greater than $1/2$) \\
$\delta_{\overlap}$ -- confidence width for the overlap \\
$\omega_{\text{exp}}$ -- expected score \\
$\delta_{\score}$ -- confidence width for the score \\
\begin{enumerate}
    \item\label{st: 1} Set $i=1$ for the first round, or increase $i$ by 1.
    \item Randomly choose $X_i\in\{0,1\}$, which is input to the source device $\mathbb{S}$. Here $X_i=0$ occurs with probability $p_{0}$. The device $\mathbb{S}$ sends a system to the switch $\mathbb{W}$. 
    \item Randomly choose $T_i\in\{0,1\}$, where $T_i=0$ has probability $1 - \gamma$, and input $T_i$ to $\mathbb{W}$. $\mathbb{W}$ sends the system to the measurement device $\mathbb{M}$ if $T_i = 0$ or sends it to the testing device $\mathbb{T}$ if $T_i = 1$. 
    \item \begin{enumerate}
        \item If $T_i  = 0$ (measurement round): $\mathbb{M}$ receives the system and outputs $Y_i\in\{0,1\}$.
        \item  If $T_i  = 1$ (test round): $\mathbb{W}$ receives the system and outputs $Y_i\in\{0,1\}$. 
    \end{enumerate} 
    \item Return to Step~1 unless $i=n$. 
        \item Set  $U_i = (T_i , X_i , Y_i)$ and compute the empirical scores $\overlap_{\#}$ and $\score_{\#}$ as 
        \begin{eqnarray}
        \overlap_{\#} := \frac{1}{2}\sum_{x} \frac{|\{i : U_{i} = (1 , x , 0) \}|}{n p_{X}(x)\gamma}, \quad 
        \score_{\#} := \frac{1}{2}\sum_{x}\frac{|\{i :U_{i} = (0 , x , x)\}|}{n p_{X}(x)(1 - \gamma) }  \nonumber .
        \end{eqnarray}
    \item\label{st: 8} Abort the protocol if either $\overlap_{\#} < \overlap_{\text{exp}} - \delta_{\overlap}$ or $\score_{\#} < \score_{\text{exp}} - \delta_{\score}$.
\item\label{st: 9} Process the concatenation of all the outputs with a quantum-proof strong extractor $\Ext$ to yield $\Ext(\mathbf{XYT},\mathbf{R})$, where $\mathbf{R}$ is a random seed for the extractor. Since a strong extractor is used, the final outcome can be taken to be the concatenation of $\mathbf{R}$ and $\Ext(\mathbf{XYT},\mathbf{R})$.
\end{enumerate}
\end{protocol}

\end{mdframed}
\end{center}

The final step of Protocol~\ref{protocol: Semi-DI recycling} uses both the input strings ($\mathbf{T},\mathbf{X}$) and the output string $\mathbf{Y}$ in the extractor (the use of $\mathbf{T}$ and $\mathbf{X}$ we term as recycling the input randomness). This recycling is not needed for expansion, but, if recycling is not used, the input distribution $p_X$ cannot be taken to be uniform. A uniform distribution would imply that the input randomness per round is $1$ bit, and, since the output randomness is upper bounded by $1$ bit per round, expansion would be impossible. To achieve randomness expansion without recycling, we use a heavily biased input distribution $p_X(0) \approx 0$, so that the amount of input randomness consumed per round is significantly reduced. The resulting protocol differs from Protocol~\ref{protocol: Semi-DI recycling} only in its final step, where Step~\ref{st: 9} is replaced by Step~$\ref{st: 9}'$ as follows:

\begin{center}
\begin{mdframed}[linecolor=black, roundcorner=5pt, skipabove=10pt, skipbelow=10pt, backgroundcolor=white, splittopskip=10pt, splitbottomskip=10pt] \begin{protocol}\label{protocol: private to public} 
\noindent\textbf{Parameters:} Same as Protocol~\ref{protocol: Semi-DI recycling}.
\begin{enumerate} \item[\ref{st: 1}--\ref{st: 8}] Follow the corresponding steps in Protocol~\ref{protocol: Semi-DI recycling}. \item[$\ref{st: 9}'$] Process the concatenation of all outputs using a quantum-proof strong extractor $\Ext$, yielding $\Ext(\mathbf{Y}, \mathbf{R})$, where $\mathbf{R}$ is a uniformly random seed for the extractor.
\end{enumerate}
\end{protocol} 
\end{mdframed}
\end{center}

Protocol~\ref{protocol: private to public} is most relevant when the input randomness potentially becomes known to the eavesdropper during the protocol, and thus cannot be recycled. For instance, this would occur when the input is obtained from a trusted but public randomness source, such as a randomness beacon. In such situations, the adversary may learn the input string used in the protocol\footnote{If the input is taken from a beacon, it is crucial that the devices are no longer under adversarial control at the time the input becomes publicly available.}, as it is publicly accessible. Since it is generated externally and publicly, the input randomness can be treated as a free resource (the protocol can be thought of as turning public randomness into private randomness). In this case, using a heavily biased input distribution reduces the rate of output randomness in the asymptotic regime. Therefore, it is preferable to use an unbiased input distribution (i.e., $p_X=1/2$) for Protocol~\ref{protocol: private to public}. With an unbiased input distribution, the asymptotic rates of the protocol coincide with those of Protocol~\ref{protocol: Semi-DI recycling}.

We conclude the discussion of the protocols by detailing how an honest implementation of the protocol would be. One such honest implementation was discussed in Section~\ref{sec: general_semi-DI protocol}, in which the honest source prepares qubit states $\ket{\psi_0} = \sqrt{\overlap} \ket{0} + \sqrt{1 - \overlap} \ket{1}$ when $X = 0$, and $\ket{\psi_1} = \sqrt{\overlap} \ket{0} - \sqrt{1 - \overlap} \ket{1}$ when $X = 1$. The honest measurement device performs a projective measurement $\{P_+,\id-P_+\}$, where $P_+$ is the projection onto the positive part of $\proj{\psi_0} - \proj{\psi_1}$. 

Another possible honest implementation is to have a source preparing the coherent states $\proj{\alpha}$ if $X = 0$ and $\proj{-\alpha}$ if $X = 1$. The value of $\alpha$ is determined by the desired overlap $\overlap$ by
$|\alpha| = \sqrt{\ln\left( \frac{1}{\overlap} \right)}
$. The measurement device performs the optimal measurement that gives the highest value of the $\score$ given by the Helstrom bound $\score = \frac{1}{2} + \frac{\sqrt{1 - \overlap^4}}{2}$\footnote{In practice, such optimal measurements may be very difficult to implement \cite{SW2021}.}.

\section{Randomness rates}\label{sec: rates}

In this section, we outline the general structure for the raw randomness generation component of Protocols~\ref{protocol: Semi-DI recycling} and~\ref{protocol: private to public} (a deeper discussion of the finite-round rates is in Appendix~\ref{app: asymptotic rates}). Our aim is to present an informal argument to identify the relevant entropic quantity necessary for computing the randomness rate in these protocols. Before we discuss the computation of the randomness rate, we note that the protocol consists of two types of round, in terms of which the channel describing a single round of a protocol $\M{N}$ can be expressed as 
\begin{equation*}
    \M{N} = (1 - \gamma) \M{N}^{G} \ot \proj{0}_{T} + \gamma \M{N}^{T} \ot \proj{1}_{T}. 
\end{equation*}
Here $\M{N}^{G}$ and $\M{N}^{T}$ are the channels corresponding to the measurement and the test rounds, $\gamma$ is the testing probability, and $T$ is the classical register that records the input to the switch (i.e., whether a test is being performed or not). The detailed mathematical description of the channels $\M{N}^{G}$ and $\M{N}^{T}$ is given in Appendix~\ref{app: EAT channels}.

Computing the rates for our protocols is challenging, primarily due to the lack of structure in the problem that arises from the desire not to make many assumptions about how the devices operate. We need to account for arbitrary preparations, arbitrary measurements, and potentially adaptive strategies between rounds. The Entropy Accumulation Theorem (EAT)~\cite{DFR,ARV} reduces the complexity of the problem to that of computing the single-round von Neumann entropy of the outputs conditioned on the side information held by the adversary, as a function of the observed experimental statistics. Importantly, the EAT is tight in the asymptotic limit, and justifies using the single-round von Neumann entropy as the asymptotic rate of the protocol. The EAT is equipped to consider quantum side information and does not assume an i.i.d.\ behaviour when proving.

The EAT can be applied to compute the finite-round rates of the protocols by first determining the asymptotic rate and subtracting a penalty term that scales as $1/\sqrt{n}$, where $n$ is the number of rounds. The finite-round rates can be computed in terms of the single-round von Neumann entropy by constructing a \emph{min-tradeoff function}, which, roughly speaking, is the gradient of the rate function (i.e., the single-round von Neumann entropy bounds over all possible achievable values of $\score$ and $\overlap$, not necessarily those observed in the experiment). The explicit computation of the min-tradeoff function can be performed using existing techniques such as those presented in~\cite{BRC}. Thus, our results can be directly extended to enable finite-round analysis.

We now focus our discussion on the rates for individual protocols. For Protocol~\ref{protocol: Semi-DI recycling}, since both the input and output strings are used in the extraction step, the difference between the output randomness and the input randomness (in the asymptotic limit) is given by
\begin{eqnarray}
  \text{rand}_{\text{out}} - \text{rand}_{\text{in}}&=&  H(TXY|E) - H(TX) \nonumber\\
                                                    &=& p_{T}(0)H(Y|X,T= 0,E) +   p_{T}(1)H(Y|X, T=1 , E) \nonumber\\ 
                                                    &\geq&(1-\gamma)H(Y|X,T= 0,E) \label{eqn: rand_expansion_recycled}
\end{eqnarray}
where we have used the chain rule for the conditional von Neumann entropy and independence of $X$, $T$ and $E$ (by the assumption that $X$ and $T$ are chosen using perfect random number generators). We compute the bounds on the von Neumann entropy $H(Y|X,T=0,E)$ in Appendix~\ref{app: rates}. Note that the entropy $H(Y|X,T=0,E)$ in~\eqref{eqn: rand_expansion_recycled} is the randomness generated in the measurement rounds; a way to increase the rates would to use randomness generated in test rounds as well. Further, note that~\eqref{eqn: rand_expansion_recycled} does not contain a term corresponding to the random seed needed for the extractor. This is because if a strong quantum-proof extractor is applied, the seed required remains random after its use and is therefore not consumed. (Technically, the seed degrades by a very small amount -- a detailed discussion on this is given in~\cite{LLR&}.)

In Protocol~\ref{protocol: private to public}, we do not recycle the input randomness, so the asymptotic randomness expansion rate is given by 
\begin{eqnarray}
     \text{rand}_{\text{out}} - \text{rand}_{\text{in}} \geq (1 - \gamma)H(Y|T=0,E) - H(X) - H(T) .
\end{eqnarray} 
If Protocol~\ref{protocol: private to public} is used as a randomness certification protocol, then the amount of private randomness certified in the asymptotic limit becomes  
\begin{eqnarray}\label{eqn: rand_expansion}
\text{rand}_{\text{out}}  \geq (1 - \gamma)H(Y|X,T=0,E).
\end{eqnarray} 
Note that for both protocols, taking $\gamma \rightarrow 0$ is valid in the asymptotic limit, since a finite number of test rounds suffices to estimate $\overlap$ with arbitrarily high statistical confidence. As a result, $\gamma$ is not required in the asymptotic-rate plots presented later in this work. However, $\gamma$ plays a role in the finite-round analysis, as the confidence on $\overlap$ becomes statistical and must be accounted for when deriving finite-round bounds.

\section{Results}\label{sec: results}

Before presenting the rates of the protocol, we first focus on the rates obtained by restricting to qubit strategies (i.e., when the source prepares a qubit state and the measurement device performs a projective measurement). We denote the asymptotic rates achieved using qubit strategies by the function \( G_{p_X}(\score, \overlap) \). Figure \ref{fig:Gfunct} illustrates the behaviour of \( G_{p_X}(\score, \overlap) \) for two extremal input distributions: \( p_X(0) = 1/2 \), relevant for Protocol~\ref{protocol: Semi-DI recycling}, and \( p_X(0) = 1 - 10^{-2} \), relevant for Protocol~\ref{protocol: private to public}. 

As discussed in Appendix~\ref{app: reducing the strategy space}, the function \( G_{p_X}(\score, \overlap) \) also directly relates to achievable asymptotic rates. In particular, it provides the asymptotic rates achievable (without any restriction on the dimension of the system prepared by the source) with any honest measurement device\footnote{This is differs from Assumption \ref{ass: sDI-measurement}, which allows for the measurement device and the source to share classical correlations.} that performs a projective measurement. The function $G_{p_X}(\score, \overlap)$ contains regions in the parameter space of \((\score, \overlap)\) that are incompatible with any quantum strategy (shown as gray-shaded areas in Figure~\ref{fig:Gfunct}), as well as regions where no randomness can be certified.

\begin{figure}[h!]
    \centering
    \includegraphics[width=0.95\textwidth]{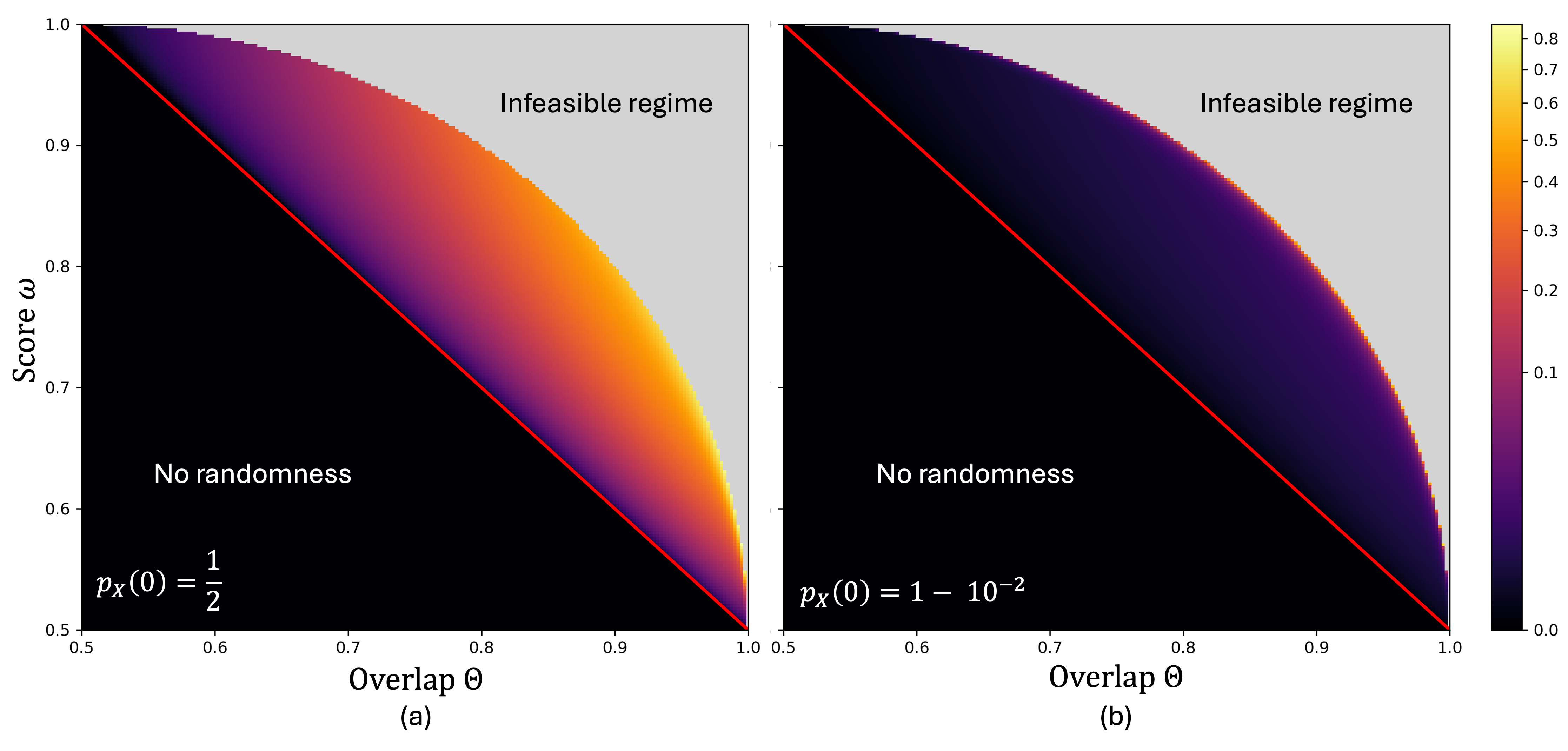}
    \caption{Lower bounds on the function \( G_{p_X}(\score, \overlap) \) for different values of \( p_X \). Panel (a) corresponds to the uniform input distribution relevant for Protocol~\ref{protocol: Semi-DI recycling}, while panel (b) pertains to the heavily biased input distribution associated with Protocol~\ref{protocol: private to public}. The shaded gray region represents parameter regimes where no quantum strategies achieving the corresponding \(\score\) and \(\overlap\) values exist. The red lines indicate $\score=(3 - 2\overlap)/2$; as discussed in Section~\ref{sec: general_semi-DI protocol}, below this line classical ``mixing'' strategies exist, so there cannot be any randomness.}
    \label{fig:Gfunct}
\end{figure}

The asymptotic randomness rate for the protocols (under the assumptions taken in Section~\ref{sec: security assumptions}) can be computed by taking the convex envelope (or convex lower bound) of \( G_{p_X}(\score, \overlap) \) (see Appendix~\ref{app: rates}; see also Appendix~\ref{app: LF transform} for the definition of the convex envelope).

\begin{figure}[h!]
    \centering
    \includegraphics[width=\textwidth]{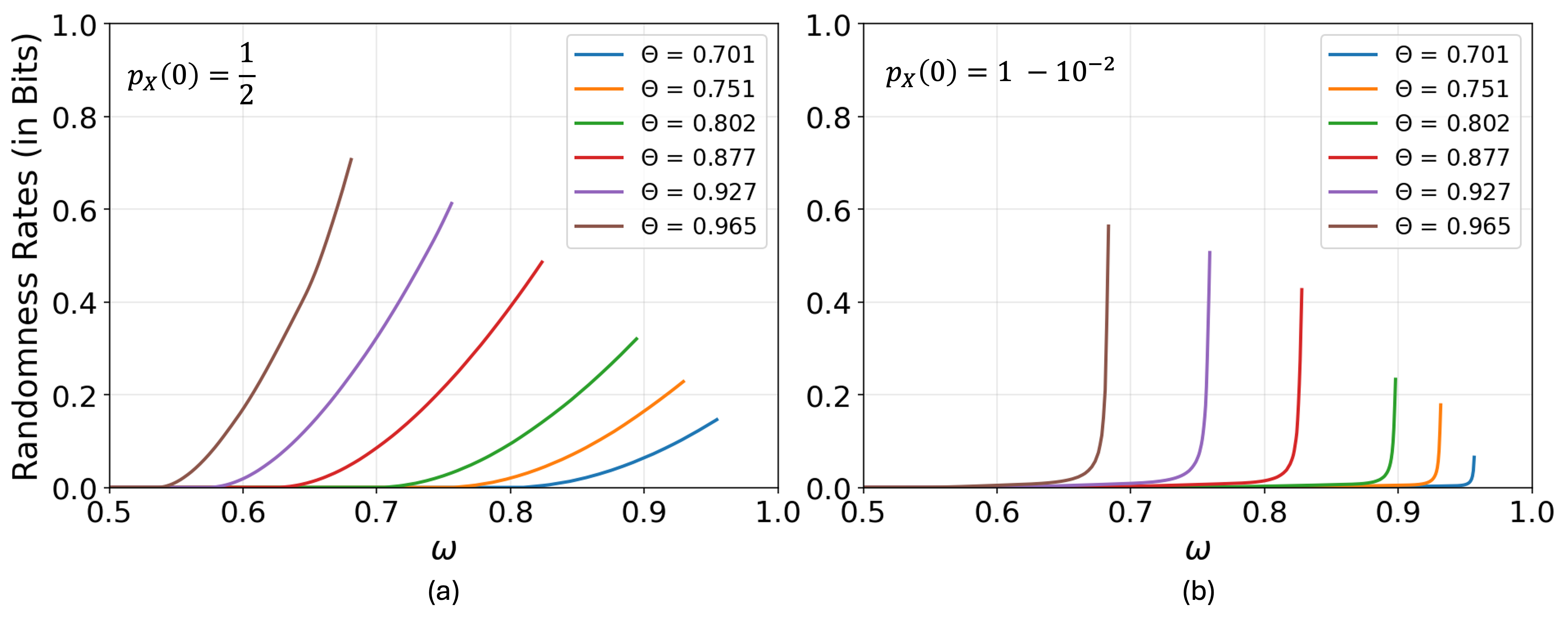}
    \caption{Asymptotic rates for the protocols as a function of the score $\score$ for different values of overlap $\overlap$. Figure (a) gives the rates for Protocol \ref{protocol: Semi-DI recycling} when $p_{X}(0) = 1/2$. Figure (b) gives the rates for Protocol \ref{protocol: private to public} from~\eqref{eqn: rand_expansion} when $p_{X}(0) = 1 - 1/100$. The testing probability $\gamma$ is taken to be approximately $0$ to compute the asymptotic rates.}
    \label{fig:asymptotic_rates_Semi_DI}
\end{figure}

In Figures~\ref{fig:asymptotic_rates_Semi_DI}(a) and~(b), we plot the rates for Protocols~\ref{protocol: Semi-DI recycling} and~\ref{protocol: private to public}, respectively, as a function of the score $\score$ for various values of overlap $\overlap$. For a fixed overlap, as the score $\score$ increases, the randomness rate increases until a maximum score is reached, beyond which there are no quantum strategies achieving the score $\score$ for that overlap.

As anticipated, for both protocols the randomness rate decreases with overlap for a fixed score, and, for smaller overlaps, higher scores become essential for generating randomness. Conversely, for large overlaps, only relatively small scores can be achieved, which can make large overlaps suboptimal for the protocol.

As a result, there is an optimal range of $\overlap$ values—not too large or too small—for which sufficiently high and experimentally achievable scores $\score$ are best suited for randomness generation. This range also depends on the input distribution $p_X$. A general observation is that for more biased input distributions, the suitable range of overlap for generating randomness is higher. This trend can be observed by comparing Figures~\ref{fig:asymptotic_rates_Semi_DI}(a) and~(b).

We have plotted the randomness for Protocol~\ref{protocol: private to public} for $p_{X}(0) =   1 -  1/100$ (See Figure~\ref{fig:asymptotic_rates_Semi_DI}(b)). As expected, this protocol provides a lower randomness rate compared to Protocol~\ref{protocol: Semi-DI recycling}. However, if the figure of merit is the ratio of output to input randomness, then this protocol performs better than Protocol~\ref{protocol: Semi-DI recycling}, as the input randomness is almost negligible here. Although the randomness rate is lower, Protocol~\ref{protocol: private to public} provides rates of about $0.1$ bits per round in the asymptotic limit for overlaps around $0.8$, which is also reasonable. Note that here $p_{X}(0)$ can be further reduced to get slightly better asymptotic rates for the protocol.

Our results, particularly concerning Protocol~\ref{protocol: Semi-DI recycling}, demonstrate strong performance of the protocols under realistic experimental conditions. Keeping realistic experimental settings in mind, a wide range of overlaps between $0.6$ and $0.9$ yield high rates of randomness expansion, as depicted in Figure~\ref{fig:asymptotic_rates_Semi_DI}. For instance, by employing the strategy $\overlap = 0.9$ and $\score \approx 0.8$, we achieve an estimated $0.537$ bits per round for Protocol~\ref{protocol: Semi-DI recycling}. In contrast, achieving similar randomness rates in the DI counterpart of the CHSH-based recycling protocol necessitates a CHSH score significantly higher than what can be achieved with state-of-the-art techniques~\cite{LLR&}. This shows that our protocol achieves competitive asymptotic rates while maintaining a high level of security: we allow for quantum side information, do not assume i.i.d.\ behaviour, and make no structural assumptions about the source or measurement devices beyond what is enforced in-protocol. 

The advantage becomes even more pronounced in the finite-size regime. Using the Entropy Accumulation Theorem (EAT), we show that positive expansion is achievable with as few as $10^5$ rounds under honest-device conditions (Fig.~\ref{fig:finite_rates}). Even with detection efficiencies in the range 0.6–0.8 (some of which are below the threshold needed for DI protocols with qubit pairs) Protocol~\ref{protocol: Semi-DI recycling} achieves positive expansion with about $10^6$ to $10^7$ rounds. Moreover, finite-size randomness rates remain high even under realistic imperfections, reaching about $0.1$ bits for modest scores that can be achievable in the finite size rates. These results indicate that composable semi-DI randomness expansion under weak assumptions is practical with hardware and short seed requirements, marking a significant step toward deployable semi-DI QRNGs.

\begin{figure}[t]
  \centering
  \includegraphics[width=\linewidth]{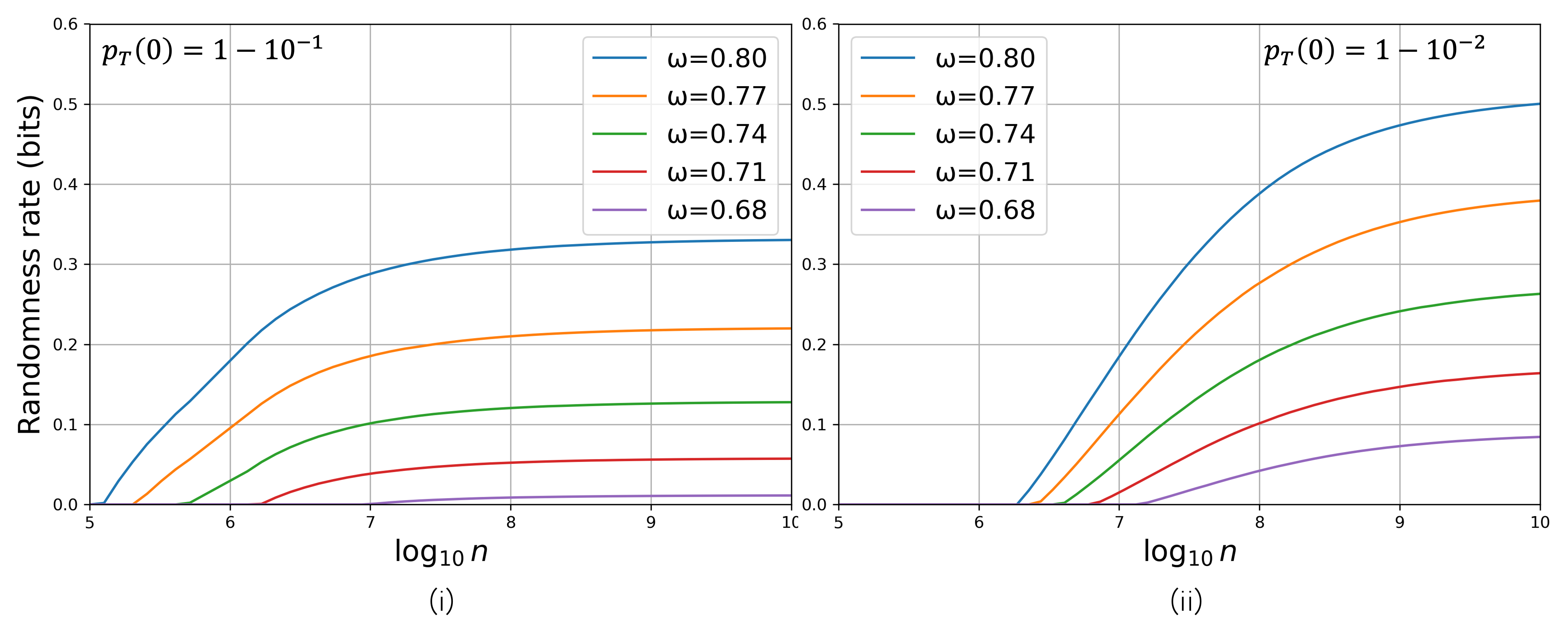}
  \caption{Finite-size randomness expansion rates for Protocol~\ref{protocol: Semi-DI recycling} as a function of the number of rounds $n$ (log scale) when $\overlap = 0.9$. Figure (i) corresponds to testing probability $\gamma=0.1$. Figure (ii) corresponds to testing probability $\gamma=0.01$. We have chosen the completeness error $\epsilon_C = 10^{-3}$ and soundness error to be $\epsilon_S = 10^{-6}$ to generate the curves.}
  \label{fig:finite_rates}
\end{figure}
The framework developed in the appendices enables us to compute rates by directly optimizing the von Neumann entropy. This is in contrast to much of the existing literature in semi-DI protocols, which typically involves lower bounding the von Neumann entropy in terms of the min-entropy and then optimizing the min-entropy (see, for example,~\cite{Tebyanian_2021}). Direct optimization of the von Neumann entropy results in higher randomness rates, even while making weaker security assumptions. For example, a protocol based on overlap discussed in Ref.~\cite{avesani2020} gives a maximum achievable rate using by lower bounding the single round min-entropy to $0.25$ bits per round, even under the stronger assumption that the source prepares two (unknown) coherent states and restricting to i.i.d.\ attacks. We note, however, that due to some simplifications carried out while computing rates, the bounds obtained by our method are not tight, and our rates could be improved further if a better technique is found. For instance, modifications of techniques such as~\cite{BFF} can potentially be used to find arbitrarily tight bounds on the von Neumann entropy for our protocol. This is left for future work.

\section{Discussion}\label{sec: discussion}  
In this work, we analyzed semi-DI randomness expansion protocols based on a prepare and measure setup. We proposed a protocol that recycles input randomness and another that achieves randomness expansion with heavily biased inputs. By using modest additional assumptions, our protocols are capable of generating higher randomness rates and are comparatively easier to implement than their DI counterparts.

It would be useful to extend this work to eliminate Assumption~\ref{ass: sDI-measurement}. More sophisticated mathematical techniques may be needed to do so, but this would be advantageous from the point of view of security.

Another avenue for extension is to broaden the scope of this work to be able to go beyond the case of $2$ inputs and $2$ outputs. The current proof relies on Jordan's lemma, limiting it to protocols with only $2$ inputs and $2$ outputs. Hence, a more general approach is desirable for other protocols. Recent advances in techniques for optimizing the von Neumann entropy~\cite{BFF2022} could potentially be leveraged to analyze these protocols for arbitrary numbers of inputs and outputs.

The structure and analysis of these protocols closely resemble DI protocols for randomness expansion, suggesting numerous opportunities for applying similar techniques and ideas. For example, it would be interesting to compute randomness rates conditioned on the full statistics $p(y|x)$ rather than a singular score.

Finally, it is worth noting that our protocols only require that the average overlap across both inputs surpasses a fixed value $\overlap$.  However, extending to the scenario where both values of individual overlaps for each input exceed a threshold value $\overlap$ could further enhance the rates.

\section{Additional Note} 
An earlier version of this work appeared in the PhD thesis of RB~\cite{Rutvij_thesis}. While writing this manuscript, a related work~\cite{pauwels2024prepare} appeared on arXiv, presenting a method to optimize Shannon entropies (i.e., under the assumption of a classical adversary). This work is also based on overlap constraints, though their protocols and security analysis differ significantly from ours.

\section{Acknowledgements} 

This work was supported by the UK Engineering and Physical Sciences Research Council (EPSRC) via the Quantum Communications Hub (Grant No.\ EP/T001011/1) and the Integrated Quantum Networks Hub (Grant No.\ EP/Z533208/1). RB thanks Stefan Weigert, Serge Massar and Lewis Wooltorton for their valuable comments on a previous version of this work.
\appendix

\section{Feasible Experimental Implementation}\label{sec: experimental}
In this appendix, we explore potential experimental implementations of our proposed protocol, focusing on various modulation, switching, and detection techniques suitable for different experimental conditions. We introduce possible approaches for state preparation, modulation, and detection, emphasizing that multiple alternative methods and setups can be employed based on specific experimental requirements. The key aspects of the protocol are:

\begin{itemize}
    \item \textbf{Source} $\mathbb{S}$: State preparation involves a laser source where modulation may employ an on-off keying (OOK) scheme. This method can be implemented using a mechanical shutter, an amplitude modulator, or by directly driving the laser to toggle between on and off states, effectively encoding information by either sending or not sending light. Such configurations generate either vacuum \(\ket{0}\) or coherent states \(\ket{\alpha}\). 
    \item \textbf{Switch} $\mathbb{W}$:  This unit determines whether it is a measurement or verification round by using Electro-Optic Modulators (EOMs) or Optical Switches (OS) to selectively direct the beam to a photodiode. The distribution of light between the measurement and verification rounds is determined by a RNG and controlled by adjusting the voltage applied to the modulation device.   
    \item \textbf{Measurement Device $\mathbb{M}$:} The quantum states are directed to the measurement apparatus, where detection is performed using a high-efficiency single-photon detector capable of identifying one or more photons or the absence of photons. The measurement results are then used to compute the input-output correlation and are further processed for post-processing and randomness extraction based on entropy values.
\end{itemize}

\begin{figure}[h!]\label{fig:EXP}
    \centering
    \includegraphics[width=\textwidth]{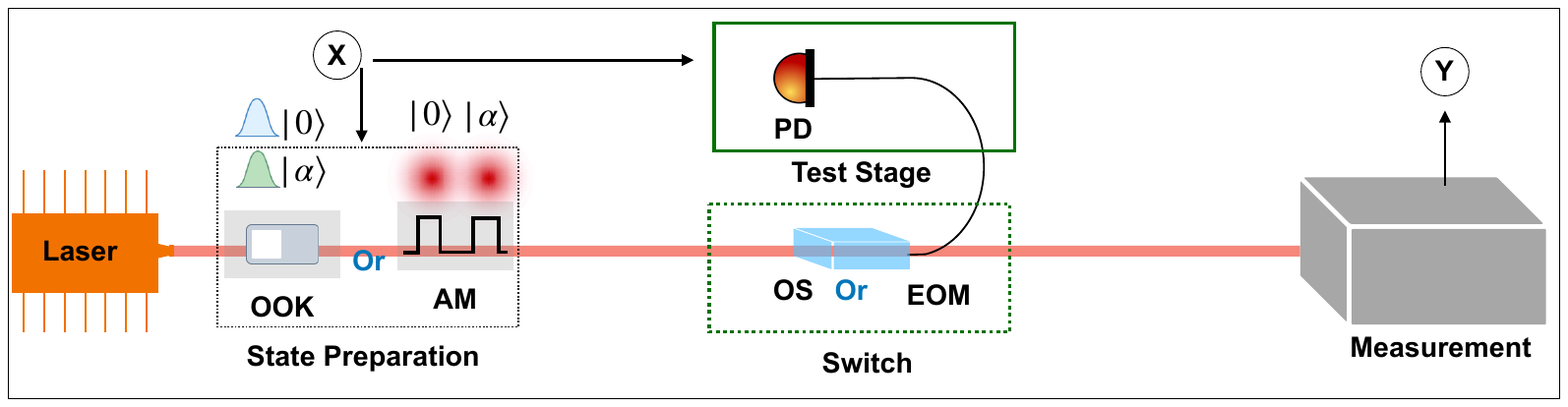}
    \caption{This figure illustrates the setup starting with a laser in the preparation stage where OOK or amplitude modulation generates the states \(\ket{0}\) or \(\ket{\alpha}\). A switch, using either EOMs or OS, determines the path of these states based on input, directing them to either test or measurement rounds. In the test round, a photodiode is used to measure whether any photons are present or not. The states are measured in the measurement stage, which is the final phase, and the measurement outcomes are utilized for further analysis.}
\end{figure}

\section{Security definition}\label{app: security definition}
For the security of the protocols, we use a composable security definition \cite{PUL&,LLR&}. Consider a protocol with output $Z$ and let $\Omega$ denote the event that it does not abort. The protocol is $(\epsilon_S,\epsilon_C)$-secure if
\begin{enumerate}
    \item $\frac{1}{2}p_{\Omega} \|\rho_{ZE|\Omega} - \frac{1}{d_Z}\mathbb{I}_Z \otimes \rho_{E|\Omega}\|_1 \leq \epsilon_S$,
    where $E$ is the quantum system held by the adversary, and $d_Z$ is the dimension of system $Z$; and
    \item There exists a quantum strategy such that $p_{\Omega}\geq1-\epsilon_C$.
\end{enumerate}
Here, \(\epsilon_S\) is the soundness error, and \(\epsilon_C\) is the completeness error. The completeness error is the probability that an honest protocol aborts, whereas the soundness error bounds how well the real protocol can be distinguished from an ideal protocol whose output is fully uncorrelated with any other system (including an adversary), and the outputs are uniformly distributed.

\section{Computing the randomness rate}\label{app: rates}
\subsection{Different systems}\label{app: EAT channels} 

In this work we use the notation $\mathcal{B}(\M{H})$ and $\mathcal{S}(\M{H})$ to be the set of bounded operators and density operators on a Hilbert space $\M{H}$ respectively. We now outline our notation for different registers: 
\begin{itemize}
    \item $X_{i}$: classical register storing the input of round $i$.
    \item $Y_{i}$: classical register storing the output of round $i$.
    \item $R_{i}$: register representing the quantum system stored in the source when round $i$ commences.
    \item $B_{i}$: quantum register denoting the system sent by the source to the measurement device (or the power meter). 
    \item $C_{i}$: quantum register held by the measurement device when round $i$ commences. This register may include pre-shared classical randomness with the source and the adversary. Furthermore, this may also include the information of the outcomes of previous rounds. 

    \item $E$: quantum register held by the adversary during the protocol.
\end{itemize}

It is assumed that the registers $C_i$ are not entangled with the registers $R_i$ and $E$; however, the $C_i$ registers may be (classically) correlated with $R_i$ and $E$. We note that the registers $R_i$ and $E$ can be entangled with each other (see Figure~\ref{fig: channel_diagram}).

\subsection{Channels of the protocol}
This section provides a detailed description of a single round of the protocol, as illustrated in Figure \ref{fig: channel_diagram}. Each round consists of three distinct types of channels: the preparation channel $\mathcal{P}_i$, the source channel $\mathcal{M}_i$, and the test channel $\mathcal{T}_i$. Here we describe the action of these channels on an initial state in detail.

\begin{figure}[!ht]\label{fig: channel_diagram}
\centering
\resizebox{0.7\textwidth}{!}{%
\begin{circuitikz}
\tikzstyle{every node}=[font=\LARGE]
\draw [short] (1.25,22) -- (13.75,22);
\draw [short] (1.25,22) -- (0,20.75);
\draw [short] (0,19.5) -- (2.5,19.5);
\draw [dashed] (0,18.25) -- (1.25,17);
\draw [dashed] (1.25,17) -- (8.75,17);
\draw  (8.75,18.25) rectangle (11.25,15.75);
\draw [dashed] (11.25,17) -- (13.75,17);
\draw  (2.5,20.75) rectangle (5,18.25);
\draw [short] (5,19.5) -- (13.75,19.5);
\draw [->] (3.75,18.25) -- (3.75,17.25);
\draw [->] (10,15.75) -- (10,14.5);
\draw [short] (5,19) -- (10,19);
\draw [short] (10,19) -- (10,18.25);
\node [font=\LARGE] at (10,17.00) {$\M{M}_{i}$};
\node [font=\LARGE] at (3.75,19.50) {$\M{P}_{i}$};
\node [font=\LARGE] at (1.6,17.30) {$C_{i}$};
\node [font=\LARGE] at (4.5,17.75) {$X_{i}$};
\node [font=\LARGE] at (7,19.80) {$R_{i+1}$};
\node [font=\LARGE] at (7,18.65) {$B_{i}$};
\node [font=\LARGE] at (10.5,15.25) {$Y_{i}$};
\node [font=\LARGE] at (1,19.80) {$R_{i}$};
\node [font=\LARGE] at (9.5,22.30) {$E$};
\node [font=\LARGE] at (4,21.25) {$\mathbb{S}$};
\node [font=\LARGE] at (11,18.75) {$\mathbb{M}$};
\node [font=\LARGE] at (12.5,17.30) {$C_{i+1}$};
\end{circuitikz}
}%

\caption{A diagram of the measurement round channel $\M{N}_{i}^{G} = (\cI_{ER_{i+1}} \ot \M{M}_{i}) \circ (\cI_{E} \ot \M{P}_{i} \ot \cI_{C}) $. The source is denoted by $\mathbb{S}$, and the measurement device is denoted by $\mathbb{M}$. The classical input and output registers are $X_{i}$ and $Y_{i}$, respectively. $X_{i}$ appears on the diagram as an output; its generation has been absorbed into the map $\M{P}_{i}$. }
\label{fig:semi_DI}
\end{figure}
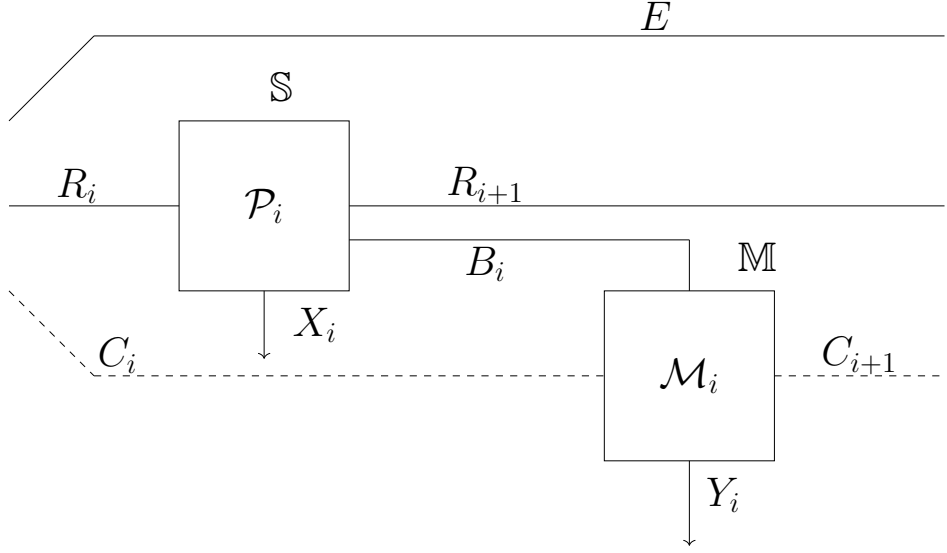

At the beginning of round $i$, the registers $R_{i}$ and $C_{i}$ can, in general, be classical correlated and the initial state including $E$ has the form 
\begin{eqnarray}\label{eqn: input state}
    \tau_{E R_i C_i} = \sum_{\lambda} p_{\Lambda}(\lambda) \tau_{E R_{i}}^{\lambda} \ot \tau_{C_i}^{\lambda} ,
\end{eqnarray}
where $\{ \tau_{E R_{i}}^{\lambda} \}_{\lambda}$ and $\{ \tau_{C_i}^{\lambda}\}_{\lambda} $ are some density operators. This captures the assumption that the register $C_{i}$ is not entangled with $R_i$ or $E$, but $R_i$ and $E$ may be entangled. We now describe the action of the source channel $\M{P}_i$, the measurement channel $\M{M}_i$, and the test channel $\M{T}_i$ in detail.

Note that we use a compressed notation: for two registers $A$ and $B$ we use $\M{Q}:A\to B$ to mean that $\M{Q}$ takes states (density operators) on $A$ to states on $B$.

\subsubsection{Source channel}
The action of the source can be described via the map $\M{P}_{i}: R_{i} \to X_{i} B_{i} R_{i+1}$, which takes the form 
\begin{eqnarray}\label{eqn: source channel}
    \M{P}_{i}(\tau_{R_i})  = \sum_{x} p_{X}(x) \proj{x}_{X_{i}} \ot \M{P}^{x}_{i}(\tau_{R_i}) ,
\end{eqnarray}
where $\{ \M{P}^{x}_{i} \}_{x}$ are channels $\M{P}^{x}_{i}: R_{i} \to  B_{i}{R}_{i+1} $. Here $R_{i+1} \equiv A_{i}$ is the system stored by the source and $B_{i}$ is the system sent to the measurement device.

If the input state is of the form \eqref{eqn: input state}, then the state after the action of the preparation channel is given by 
\begin{eqnarray}\label{eq:sourceout}
   \sigma_{X_iEB_iR_{i+1}C_i}=\cI_{E} \ot \M{P}_{i} \ot \cI_{C_{i}}(\sum_{\lambda} p(\lambda) \tau^{\lambda}_{E R_{i}} \ot \tau^{\lambda}_{C_{i}}) = \sum_{x,\lambda} p_{X}(x) \proj{x}_{X_i} \ot  p_\Lambda(\lambda) \sigma^{x,\lambda}_{EB_{i}R_{i+1}} \ot \tau^{\lambda}_{C_{i}},
\end{eqnarray}
where $\sigma^{x,\lambda}_{EB_{i}R_{i+1} } :=\cI_{E} \ot \M{P}^{x}(\tau^{\lambda}_{ER_{i } })$.\\

\subsubsection{Measurement channel}\label{app: measurement_channel}

 The measurement device can be described via $\M{M}_{i} : B_{i} C_{i} \to Y_{i} C_{i+1} $ and takes the form: 
\begin{eqnarray}\label{eqn: measurement channel}
        \M{M}_{i}(\sigma_{B_{i}C_{i}}  ) =  \sum_{y } \proj{y}_{Y_{i}} \ot \tilde{\M{E}}^{y}_{i}(\sigma_{B_{i} C_{i}} )   , 
\end{eqnarray}
where $\{\tilde{\M{E}}^0_i, \tilde{\M{E}}^1_i\}$ are sub-normalized channels (i.e., trace non-increasing completely positive maps) from $B_iC_i$ to $C_{i+1}$. As $\tilde{\M{E}}^y_i$ is a channel, it also has a Kraus operator representation. So the action of $\M{M}_i$ can be expressed as: 
\begin{eqnarray}
    \M{M}_i(\sigma_{B_iC_i}) = \sum_{y,k} \proj{y}_{Y_i} \ot \left(E_{i,y}^k\, \sigma_{B_iC_i} \left( E_{i,y}^k \right)^{\dagger}\right)
\end{eqnarray}
where $\{ E_{i,y}^k \}_{y,k} $ are some Kraus operators (each mapping $\M{H}_{B_iC_i}$ to $\M{H}_{C_{i+1}}$) satisfying $\sum_{y , k} (E_{i,y}^{k})^{\dagger} E_{i,y}^{k} = \id_{B_{i}C_{i}} $. If we take the state output by the source as given in~\eqref{eq:sourceout}, trace out $R_{i+1}$ and apply the measurement channel then we obtain
\begin{eqnarray}
\sum_{x,y,\lambda,k} p_X(x) \proj{x}_{X_i} \ot \proj{y}_{Y_i} \ot  p_\Lambda(\lambda) ( \id_E \ot E_{i,y}^{k})(\sigma^{x,\lambda}_{EB_{i}} \ot \tau^{\lambda}_{C_{i}})( \id_E \ot \left(E_{i,y}^k\right)^\dagger).
\end{eqnarray}
If we then trace out $C_{i+1}$ we get
\begin{eqnarray}
  &&\sum_{x,y,\lambda,k} p_X(x) \proj{x}_{X_i} \ot \proj{y}_{Y_i} \ot  p_\Lambda(\lambda)\underset{C_{i+1}}{\tr}[ ( \id_E \ot E_{i,y}^{k})(\sigma^{x,\lambda}_{EB_i} \ot \tau^{\lambda}_{C_i})( \id_E \ot \left(E_{i,y}^k\right)^\dagger)]\\
  &=&\sum_{x,y,\lambda,k} p_X(x) \proj{x}_{X_i} \ot \proj{y}_{Y_i} \ot  p_\Lambda(\lambda)\underset{B_iC_i}{\tr}[ ( \id_E \ot \left(E_{i,y}^k\right)^\dagger E_{i,y}^{k})(\sigma^{x,\lambda}_{EB_i} \ot \tau^{\lambda}_{C_i})]\\
  &=&\sum_{x,y,\lambda,k} p_X(x) \proj{x}_{X_i} \ot \proj{y}_{Y_i} \ot  p_\Lambda(\lambda)\underset{B_iC_i}{\tr}[(\id_E \ot (\id_{B_i} \ot \sqrt{\tau^{\lambda}_{C_i}})\left(E_{i,y}^k\right)^\dagger E_{i,y}^{k}(\id_{B_{i}} \ot \sqrt{\tau^{\lambda}_{C_i}}))\sigma^{x,\lambda}_{EB_{i}}]\\
  &=&\sum_{x,y,\lambda} p_X(x) \proj{x}_{X_i} \ot \proj{y}_{Y_i} \ot  p_\Lambda(\lambda)\underset{B_iC_i}{\tr}[(\id_E \ot M^{\lambda}_{i,y})\sigma^{x,\lambda}_{EB_i}],
\end{eqnarray}
where $M_{i,y}^{\lambda}=\displaystyle{\sum_{k}} \underset{C_i}{\tr} \left( \left( \id_{B_i} \ot  \sqrt{\tau^{\lambda}_{C_i}}\right)(E_{i,y}^{k})^{\dagger}E_{i,y}^{k}\left( \id_{B_i} \ot \sqrt{\tau^{\lambda}_{C_i}} \right) \right)$. This is the state we are interested in when considering one round of the protocol. By construction $M_{i,y}^{\lambda} \succeq 0$. Furthermore,
\begin{eqnarray}
\sum_{y} M_{i,y}^{\lambda} &=& \sum_{yk} \underset{C_{i}}{\tr} \left(\left(\id_{B_i} \ot \sqrt{\tau^{\lambda}_{C_i}}\right) (E^{k}_{i,y})^{\dagger} E^{k}_{i,y} \left(\id_{B_i} \ot \sqrt{\tau^{\lambda}_{C_i}}\right)  \right) \\ 
                               &=&  \underset{C_i}{\tr} \left(\left( \id_{B_i} \ot \sqrt{\tau^{\lambda}_{C_i}}\right) \sum_{k,y}\left( (E^{k}_{i, y})^{\dagger} E^{k}_{i, y} \right)\left(\id_{B_i} \ot \sqrt{\tau^{\lambda}_{C_i}}\right)  \right) \\ 
                               &=&  \underset{C_{i}}{\tr} \left( \id_{B_i} \ot \tau_{C_i}^{\lambda} \right) \\ 
                               &=& \id_{B_i}.
\end{eqnarray} 
Thus, $\{ M_{i,0}^{\lambda} , M_{i,1}^{\lambda} \}$ is a two-outcome POVM on the system $B_i$.

\subsubsection{Test channel} \label{app: test_channel}
The action of the test channel, applied in test rounds, is similar to the action of the measurement channel. The test channel $\M{T}_{i}: B_{i} \to Y_{i}$ is given by 
\begin{eqnarray}\label{eqn: test channel}
\M{T}(\sigma_{B_{i}}) = \tr(\Pi_0 \sigma_{B_{i}})  \proj{0}_{Y_{i}} +  \tr(( \id - \Pi_0) \sigma_{B_{i}})  \proj{1}_{Y_{i}}.
\end{eqnarray}

\subsubsection{Single round channels}\label{app: single_round_channels}
A single round of the protocol is described by the channel $\M{N}_{i}: R_{i} C_{i} \to T_{i} X_{i} Y_{i} R_{i+1} C_{i+1}$. Specifically, the channel $\M{N}_{i}$ takes the form: 
\begin{eqnarray}\label{eqn: single_round_channel}
   \M{N}_{i} =  \gamma \proj{1}_{T_{i}} \ot  \M{N}^{T}_{i} + (1 - \gamma) \proj{0}_{T_{i}} \ot  \M{N}^{G}_{i}.
\end{eqnarray} 
Here $\gamma$ is the testing probability. The channels $\M{N}^{T}_{i}$ and $\M{N}^{G}_{i}$ are of the following form: 
\begin{eqnarray}
    \M{N}^{T}_{i} &=& (\cI_{R_{i+1}} \ot \M{T}_{i} \ot \cI_{C_{i}} )\circ( \M{P}_{i}  \ot \cI_{C_{i}}) \\ 
    \M{N}^{G}_{i} &=& (\cI_{R_{i+1}} \ot \M{M}_{i} ) \circ (\M{P}_{i}  \ot \cI_{C_{i}} ), 
\end{eqnarray} 
where $\M{P}_{i}, \M{M}_{i} , \M{T}_{i}$ are given by equations \eqref{eqn: source channel}, \eqref{eqn: measurement channel} and \eqref{eqn: test channel}.

\subsection{Asymptotic randomness rate}\label{app: asymptotic rates}
The Entropy Accumulation Theorem~\cite{DFR,ARV}, shows that, in order to compute the asymptotic rates for the protocol, it suffices to focus only on a single representative round of the protocol.
The CQ state after one round of the protocol takes the following generic form: 
\begin{eqnarray}\label{eqn: CQ state for the protocol}
 \sum_{t,x,y} p(t) \proj{t} \otimes p_{X}(x) \proj{x} \otimes p_{Y|x,t}(y) \proj{y}_{Y} \otimes \rho^{t,x,y}_{E}.
\end{eqnarray}
Here, \(\rho^{t,x,y}_{E}\) are the states held by the adversary depending on $t$, $x$ and $y$. For the CQ state \eqref{eqn: CQ state for the protocol}, we make the following definitions. 
\begin{definition}[Overlap] Let $\rho_{TXYE}$ be a CQ state of the form \eqref{eqn: CQ state for the protocol}. Then the overlap $\overlapfunction(\rho_{TXYE})$ is given by: 
\begin{eqnarray}
    \overlapfunction(\rho_{TXYE}) = \frac{1}{2} \left( \sum_{x} p_{Y|x,t=1}(0)\right). 
\end{eqnarray}
\end{definition}
\begin{definition}[Score] Let $\rho_{TXYE}$ be a CQ state of the form \eqref{eqn: CQ state for the protocol}. Then the score $\scorefunction(\rho_{TXYE})$ is given by: 
\begin{eqnarray}
    \scorefunction(\rho_{TXYE}) = \frac{1}{2} \left( \sum_{x} p_{Y|x,t=0}(x)\right). 
\end{eqnarray}
\end{definition}
Note that $\overlapfunction(\cdot)$ and $\scorefunction(\cdot)$ refer to the functions on the CQ states that determine the overlap and the score obtained for a given CQ state, whereas $\overlap$ and $\score$ are used to refer to the actual values of overlap and score, respectively. For example, we can say that $\overlapfunction(\rho_{TXYE}) = \overlap$ and $\scorefunction(\rho_{TXYE}) = \score$.

We define the set $\Gamma[\score , \overlap]$ to be the set of all single-round strategies that achieve overlap $\overlap$ and score $\score$, i.e.,
\begin{eqnarray*}
\Gamma[\score , \overlap] := \{ \rho_{TXYE} =\!\! \underset{R_{i+1}C_{i+1}}{\tr}\!\!\left[\M{N}_{i}(\tau_{ERC})\right] :   \M{N}_{i} \text{ of form  \eqref{eqn: single_round_channel}},  \tau_{ERC} \text{ of form \eqref{eqn: input state}} , \scorefunction(\rho_{TXYE}) = \score \text{ and } \overlapfunction(\rho_{TXYE}) = \overlap \}.
\end{eqnarray*}

 The asymptotic randomness generation rate (randomness generation per round in the limit as the number of rounds tends to infinity) for the protocol is given by
\begin{eqnarray}\label{eqn: definition of F}
    \inf_{\rho_{TXYE} \in \Gamma[\score, \overlap]}   \bar{H}_{\rho_{TXYE}},
\end{eqnarray} 
where $\bar{H}$ is either $H(XY|E)$ when referring to Protocol~\ref{protocol: Semi-DI recycling} or $H(Y|XE)$ when referring to Protocol~\ref{protocol: private to public}. While we do not compute the rates for finite rounds in this work, the EAT can be used for this purpose as well, by incorporating a penalty that scales as $1/\sqrt{n}$.

In this work we will only extract the randomness from the outputs generated generated during the measurement rounds (in general, there may be some randomness in the test rounds as well, but in the present paper, we do not extract this randomness). Furthermore, in our protocol, if the probability of a test round is small and hence so is the expected number of test rounds. Hence ignoring test rounds when extracting has minimal effect on the randomness.

When the randomness is only extracted from the rounds with $T = 0$, the randomness generation rate\footnote{Note that in the case when the variable $T$ is unknown to the adversary, we can also take the randomness generation rate to be $H(\cdot|\cdot E)$. In this case too the entropic quantity $p_{T}(0)H(\cdot|\cdot,T=0,E)$ is a valid lower bound on the randomness generation rate because $H(\cdot|\cdot E) \geq H(\cdot|\cdot TE) \geq p_{T}(0) H(\cdot|\cdot T=0,E).$} can be computed as:
\begin{eqnarray} 
\bar{H}_{\rho} = (1 - \gamma) \hat{H}_{\rho}, \nonumber 
\end{eqnarray}
where $\gamma$ is the testing probability and $\hat{H} = H(XY|T=0,E)$ when Protocol~\ref{protocol: Semi-DI recycling} is considered and $\hat{H} = H(Y|X,T=0,E)$ when Protocol~\ref{protocol: private to public} is considered.

Randomness is consumed in two places in Protocol~\ref{protocol: Semi-DI recycling}: when choosing the input $X$, and when choosing whether or not to test. The input randomness rate is hence $H(X) + H(T)$. In Protocol~\ref{protocol: private to public} a public source of randomness is used to choose $X$ and $T$, and hence we do not include a penalty here.

We are interested in the randomness expansion rate, which is defined as the difference between the randomness generation rate and the randomness consumption rate of the protocol. For Protocol~\ref{protocol: Semi-DI recycling}, the expression for randomness expansion rate can be computed using the chain rule of von-Neumann entropy (see as in Protocol $2$ in~\cite{BhRC})

\begin{eqnarray*}
r_{\ref{protocol: Semi-DI recycling}}\! = \inf_{\rho \in \Gamma[\score, \overlap]} \left(H(XTY|E)_{\rho} \right) - H(XT) = \inf_{\rho \in \Gamma[\score, \overlap
]}H(Y|TXE)_{\rho} \geq (1 - \gamma) \left(\inf_{\rho \in \Gamma[\score , \overlap]} H(Y|X, T=0, E) \right)
\end{eqnarray*}

In the case when the randomness from the string $\B{T}$ is not recycled, the rate is modified to

\begin{eqnarray*}
r_{\ref{protocol: Semi-DI recycling}}'\! =\! (1\! -\! \gamma)\!\left(\inf_{\rho \in \Gamma[\score , \overlap]} H(XY|T=0,E)_{\rho} \right)\! -\! H(X)\! -\! H(T)\! =\! (1\! -\! \gamma)\!\left(\inf_{\rho \in \Gamma[\score , \overlap]} H(Y|X,T=0,E)_{\rho} \right)\!-\! \gamma H_{\bin}(p_{X}(0))\! -\! H_{\bin}(\gamma) . 
\end{eqnarray*}
For Protocol~\ref{protocol: private to public} we can compute the rate as 
\begin{eqnarray*}
r_{\ref{protocol: private to public}} = (1 - \gamma)\left(\inf_{\rho \in \Gamma[\score , \overlap]} H(Y|X,T=0,E)_{\rho} \right)
\end{eqnarray*}

From the discussion above, the randomness expansion rates for both protocols depend on
\begin{eqnarray}\label{eqn: F_function}
    F_{p_{X}}(\score , \overlap) := \inf_{\rho \in \Gamma[\score , \overlap]} H(Y | X ,T=0 , E)_{\rho},
\end{eqnarray}
and the remainder of this appendix is devoted to computing lower bounds on $F_{p_{X}}(\score , \overlap)$.

\subsection{Reducing the strategy space}\label{app: reducing the strategy space}
Our first step for solving the optimization problem \eqref{eqn: F_function} is to simplify the set over which the entropy $H(Y|X, T = 0, E)$ is optimized. The goal of this section is to show that, in order to compute the rate, it suffices to consider strategies where the adversary holds a purification of the state prepared by the source, and where the measurement device performs a projective measurement instead of arbitrary POVMs. We begin by formalizing this.

If the input state $\tau_{E R_i C_i}$ is of the form~\eqref{eqn: input state}, then, from Sections~\ref{app: measurement_channel}, \ref{app: test_channel}, and \ref{app: single_round_channels}, the state $\underset{R_{i+1}C_{i+1}}{\tr}\M{N}_i(\tau_{E R_i C_i})$ can be expressed in the form 
\begin{eqnarray}\label{eqn: CQ state explicit}
\rho_{TX_iY_iE} = \sum_{t, x, y} p_T(t) p_X(x) \proj{t} \otimes \proj{x}_{X_{i}} \otimes \proj{y}_{Y_{i}} \otimes \sum_{\lambda} p_\Lambda(\lambda) \tr_{B_i} \left(\left(M_y^{t, \lambda} \otimes \id_E\right) \sigma^{x, \lambda}_{B_iE}\right),
\end{eqnarray}  
where, $M^{1, \lambda}_0 \equiv M^1_0 = \Pi_0$ and $M^{1, \lambda}_1 = M^1_\lambda = \id_{B_i} - \Pi_0$ are used to obtain the compact form above. Note that the random variables $\Lambda$ and $X$ are independent in the CQ state above.

Since we are considering single-round strategies we henceforth drop the subscript $i$ on the variables and define the following subsets of \(\Gamma[\score, \overlap]\):
\begin{itemize}
    \item \(\Gamma_{\mathrm{Proj}}[\score, \overlap]\) is the subset of all CQ states of the form \eqref{eqn: CQ state explicit} such that \(M^{0,\lambda}_{y}\) are projectors for every \(\lambda\) and \(y\).
    \item \(\Gamma_{\mathrm{Pure}}[\score, \overlap]\) is the subset of all CQ states of the form \eqref{eqn: CQ state explicit} such that \(\sigma^{x, \lambda}_{BE}\) are pure states.
    \item $\Gamma_{s}[\score, \overlap]$ is the subset of all CQ states of the form:
    \begin{eqnarray}\label{eqn: single CQ state}
        \rho_{TXYE} = \sum_{t, x, y} p_T(t) p_{X}(x) \proj{t} \otimes \proj{x} \otimes \proj{y} \otimes \tr_B\left(\left(M_y^t\otimes\id_E\right) \sigma^x_{BE}\right),
    \end{eqnarray} 
    where $\sigma^{x}_{BE}$ are pure states and $M_{y}^{t}$ are projectors. Note that the state in \eqref{eqn: single CQ state} does not contain any dependence on (classical) variable $\lambda$ as opposed to a general state given by \eqref{eqn: CQ state explicit}.
\end{itemize}
The goal of this section is to simplify the optimization problem~\eqref{eqn: F_function}. In particular, we show that to solve~\eqref{eqn: F_function}, it is sufficient to replace the set $\Gamma[\score, \overlap]$ being optimized over with the set $\Gamma_{s}[\score, \overlap]$.
\begin{lemma}\label{lemm: reduction to pure states}  
\begin{eqnarray}
    \inf_{\rho  \in \Gamma[\score, \overlap]} H(Y|X,T=0,E)_{\rho} = \inf_{\rho \in \Gamma_{\mathrm{Pure}}[\score, \overlap]} H(Y|X,T=0,E)_{\rho}.
\end{eqnarray} 
\end{lemma}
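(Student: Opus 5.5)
Since $\Gamma_{\mathrm{Pure}}[\score,\overlap]\subseteq\Gamma[\score,\overlap]$, the infimum over the larger set is at most the infimum over $\Gamma_{\mathrm{Pure}}$. The proof therefore reduces to the reverse inequality. We show that every $\rho\in\Gamma[\score,\overlap]$ can be matched by some $\rho'\in\Gamma_{\mathrm{Pure}}[\score,\overlap]$ with the same score, the same overlap, and $H(Y|X,T=0,E)_{\rho'}\le H(Y|X,T=0,E)_{\rho}$. The natural construction is to give the adversary a purification, together with a classical copy of $\Lambda$.

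Fix $\rho$ of the form~\eqref{eqn: CQ state explicit}, built from the states $\sigma^{x,\lambda}_{BE}$ and POVMs $M^{t,\lambda}_y$. For each $(x,\lambda)$, choose a purification $\ket{\phi^{x,\lambda}}_{BEE'}$ of $\sigma^{x,\lambda}_{BE}$. Define a new adversary system $\tilde E = E E' L$, where $L$ is a classical register holding $\lambda$. Now set
\begin{eqnarray*}
\tilde\sigma^{x,\lambda}_{B\tilde E} = \proj{\phi^{x,\lambda}}_{BEE'}\otimes\proj{\lambda}_L .
\end{eqnarray*}
This is pure for each $(x,\lambda)$, and it is used with the same distribution $p_\Lambda$ and the same POVMs $M^{t,\lambda}_y$. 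The resulting CQ state $\rho'$ has exactly the form~\eqref{eqn: CQ state explicit}, with $\tilde E$ in place of $E$.

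The step needing care is whether $\rho'$ is genuinely realizable within $\Gamma$, i.e.\ whether the map $x\mapsto \tilde\sigma^{x,\lambda}$ can arise from a source channel $\mathcal{P}^x$ acting on some $\tau^\lambda_{ER}$ of the form~\eqref{eqn: input state}. We realize it by Stinespring dilation. Take $R$ to be entangled with $E E'$: since $\mathcal{P}^x$ may be any channel, it can be replaced by an isometry whose environment is kept in $R_{i+1}$. Alternatively, we can simply note that the reduction only needs $\rho'$ to have the algebraic form~\eqref{eqn: CQ state explicit} with pure $\sigma$. The pure-state set $\Gamma_{\mathrm{Pure}}$ is defined directly in terms of this form, so this weaker statement is enough for the lemma. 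The $L$ register is a classical copy of $\lambda$, and giving it to Eve is allowed because $C$ may be classically correlated with $E$.

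Next we check that the statistics are unchanged. The marginal of $\tilde\sigma^{x,\lambda}$ on $B$ equals that of $\sigma^{x,\lambda}$, and the measurements act only on $B$. Hence $p_{Y|x,t}$ is unchanged for both $t=0$ and $t=1$, so $\scorefunction(\rho')=\score$ and $\overlapfunction(\rho')=\overlap$.

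For the entropy, observe that tracing out $E'L$ from $\rho'$ recovers $\rho$ exactly, because $\sum_\lambda p_\Lambda(\lambda)\tr_{B}[(M^{0,\lambda}_y\otimes\id)\sigma^{x,\lambda}_{BE}]$ is what remains. This holds for each fixed $T=0$ and $X=x$. By strong subadditivity (data processing under the partial trace on the conditioning system),
\begin{eqnarray*}
H(Y|X,T=0,EE'L)_{\rho'}\le H(Y|X,T=0,E)_{\rho}.
\end{eqnarray*}
Taking the infimum over $\rho$ gives the desired inequality.

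The main obstacle is the bookkeeping showing that $\rho'$ lies in $\Gamma_{\mathrm{Pure}}[\score,\overlap]$. In particular, one must confirm that the independence of $\Lambda$ and $X$ is preserved; it is, because $p_\Lambda$ is unchanged. One must also confirm that the structural constraint (no entanglement between $C$ and $E$) is not violated by enlarging $E$. This also holds, since the enlargement only adds a purification of $BE$ and a classical copy of $\lambda$.
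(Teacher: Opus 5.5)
Your proposal is correct and follows the paper's proof: purify each $\sigma^{x,\lambda}_{BE}$, give the purifying system to Eve, observe that tracing it out returns $\rho$ (so score and overlap are unchanged), and conclude by strong subadditivity. The extra classical register $L$ is harmless but unnecessary. Your Stinespring aside does not actually put the purification in Eve's hands, and her marginal would then generally depend on $x$. The argument therefore rests on the algebraic-form reading of $\Gamma_{\mathrm{Pure}}$, which is also what the paper implicitly relies on.
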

\begin{proof} 
Let $\rho_{TXYE}$ be any state of the form \eqref{eqn: CQ state explicit}. Let $\sigma_{BEE'}^{x,\lambda}$ be any purification of the $\sigma_{BE}^{x,\lambda}$. The state 
\begin{eqnarray}
    \bar{\rho}_{TXYEE'} = \sum_{t , x ,y}  p_{T}(t) p_{X}(x) \proj{t} \otimes \proj{x} \otimes \proj{y} \otimes \sum_{\lambda} p_{\Lambda}(\lambda) \tr_{B} \left(\left(M_{y}^{t, \lambda} \otimes \id_{E E'}\right) \sigma^{x, \lambda}_{BEE'}\right)
\end{eqnarray}
satisfies $\tr_{E'} \bar{\rho}_{TXYEE'}  = \rho_{TXYE}$, and hence $\scorefunction(\bar{\rho}) = \scorefunction(\rho)$ and $\overlapfunction(\bar{\rho}) = \overlapfunction(\rho)$. Hence, $\bar{\rho} \in \Gamma[\score, \overlap]$. Using the strong sub-additivity of the conditional von-Neumann entropy, we have 
\begin{eqnarray}
    H(Y|X,T=0,EE')_{\bar{\rho}} \leq  H(Y|X,T=0, E)_{\rho}.
\end{eqnarray}
Hence, for any state $\rho_{TXYE}\in\Gamma[\score,\overlap]$ the state $\bar{\rho}_{TXYEE'}\in\Gamma_{\mathrm{Pure}}[\score,\overlap]$ cannot lead to a higher value of $H(Y|X,T=0,E)$, so we can restrict the infimum to the set $\Gamma_{\mathrm{Pure}}[\score,\overlap]\subset\Gamma[\score,\overlap]$ without changing the result.
\end{proof}

We now reduce the analysis to the strategies in which the measurement device uses projective measurements.

\begin{lemma}\label{lemm: reduction to projective measurements}
The sets $\Gamma[\score , \overlap]$ and $\Gamma_{\mathrm{Proj}}[\score, \overlap]$ are identical.
\end{lemma}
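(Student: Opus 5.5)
We prove the two inclusions separately. The inclusion $\Gamma_{\mathrm{Proj}}[\score,\overlap]\subseteq\Gamma[\score,\overlap]$ is immediate from the definitions: a projective measurement is a particular POVM, so any state of the form \eqref{eqn: CQ state explicit} with projective $M^{0,\lambda}_y$ is already in $\Gamma[\score,\overlap]$. The content is the reverse inclusion. Here we will show that every CQ state of the form \eqref{eqn: CQ state explicit} can be rewritten, with the same classical marginals and the same adversary states $\rho_E^{t,x,y}$, as one whose measurement operators are projectors. This uses Naimark dilation, with the ancilla absorbed into the source's output system $B$.

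Fix $\rho\in\Gamma[\score,\overlap]$ given by data $\{p_\Lambda,\sigma^{x,\lambda}_{BE},M^{t,\lambda}_y\}$. For each $\lambda$, take the two-outcome POVM $\{M^{0,\lambda}_0,M^{0,\lambda}_1\}$ on $B$. Choose a Naimark dilation: an ancilla $B'$ (a qubit suffices), a fixed state $\ket{0}_{B'}$, and projectors $P^{\lambda}_y$ on $BB'$ such that
\begin{eqnarray*}
\tr_{B'}\!\left[(P^{\lambda}_y)(\id_B\ot\proj{0}_{B'})\right]=M^{0,\lambda}_y .
\end{eqnarray*}
We then define a new strategy. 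The source outputs $\tilde\sigma^{x,\lambda}_{\tilde BE}=\sigma^{x,\lambda}_{BE}\ot\proj{0}_{B'}$ on the enlarged carrier $\tilde B=BB'$. The measurement device uses $\tilde M^{0,\lambda}_y=P^\lambda_y$. For measurement rounds, $\tr_{\tilde B}[(P^\lambda_y\ot\id_E)\tilde\sigma^{x,\lambda}]=\tr_B[(M^{0,\lambda}_y\ot\id_E)\sigma^{x,\lambda}]$, so $\rho^{0,x,y}_E$, and hence the whole $T=0$ block, is unchanged. The measurement device may depend on $\lambda$ through its classical register $C$, and $\Lambda$ is independent of $X$ and uncorrelated quantumly with $C$. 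So the new device is still of the allowed form \eqref{eqn: measurement channel}, with $\tau^\lambda_C$ unchanged.

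The step needing care is the test round, since the testing device is trusted and fixed: it measures $\Pi_0$ versus $\id-\Pi_0$ on whatever system it receives. We must therefore embed $\tilde B$ so that the ground-state projector on the enlarged carrier, restricted to states with ancilla in $\ket{0}_{B'}$, acts as $\Pi_0\ot\proj{0}$, and so that all of the ancilla's excited levels count as excited. Concretely, we identify $BB'$ with a larger carrier space whose unique ground state is $\ket{0}_B\ot\ket{0}_{B'}$. Then $\tr[(\tilde\Pi_0\ot\id_E)\tilde\sigma^{x,\lambda}]=\tr[(\Pi_0\ot\id_E)\sigma^{x,\lambda}]$, and the test statistics $p_{Y|x,t=1}$ together with the test-round adversary states are preserved. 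This is consistent with Assumption~\ref{ass: carrier}, since the source is uncharacterised and may use any dimension of the carrier.

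With both blocks preserved, the new state equals $\rho$ exactly. In particular $\scorefunction$ and $\overlapfunction$ are unchanged, so $\rho\in\Gamma_{\mathrm{Proj}}[\score,\overlap]$. We expect the main obstacle to be the test-round compatibility: we must argue that the dilation ancilla can be placed in the carrier without changing the overlap. If the identification above proves delicate, the fallback is to let the ancilla sit in the measurement device's register $C$, prepared in $\ket{0}$ inside $\tau^\lambda_C$. The projective measurement then acts on $BC$, and the effective operator on $B$ is still $M^{0,\lambda}_y$ by the formula for $M^\lambda_{i,y}$ derived in Section~\ref{app: measurement_channel}. This route leaves the source and the test rounds completely untouched.
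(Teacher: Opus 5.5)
Your main (Naimark) argument works, but it takes a different route from the paper. The paper never enlarges $\mathcal{H}_B$. It writes each two-outcome effect as a convex mixture of projectors, $M^{0,\lambda}_0=\sum_a p_{A|\lambda}(a)P^{(a,\lambda)}_0$, so that $M^{0,\lambda}_1=\sum_a p_{A|\lambda}(a)(\id_B-P^{(a,\lambda)}_0)$. It then absorbs the mixing variable into the shared classical randomness via $\Lambda'=(A,\Lambda)$ and $\sigma^{x,(a,\lambda)}_{BE}=\sigma^{x,\lambda}_{BE}$. Since $\Lambda'$ stays independent of $X$ and the states and $\Pi_0$ are untouched, the test-round issue you flag as the main obstacle never arises.

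You instead trade extra classical randomness for an enlarged carrier whose ground state you declare to be $\ket{0}_B\ket{0}_{B'}$. This is acceptable because the model leaves $\dim\mathcal{H}_B$ free. For an infinite-dimensional carrier you can even map back by a unitary $W:\mathcal{H}_B\otimes\mathbb{C}^2\to\mathcal{H}_B$ with $W\ket{0}\ket{0}=\ket{0}$. For a fixed finite-dimensional carrier, however, your route would not give a representation on the original space, whereas the paper's does. In exchange, your route does not depend on having only two outcomes: with more outcomes, extremal POVMs need not be projective, so the convex decomposition fails but Naimark dilation still works.

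Two corrections are needed.

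First, you cannot in general keep $\tau^\lambda_C$ unchanged. Suppose the supports of $\tau^\lambda_C$ and $\tau^{\lambda'}_C$ are not orthogonal. By extremality of projective measurements, every pure branch of $\tau^\lambda_C$ would have to induce exactly $P^\lambda_y$ on the system, which forces $P^\lambda_y=P^{\lambda'}_y$. The fix is to replace $\tau^\lambda_C$ by $\proj{\lambda}$, which is still of the form \eqref{eqn: input state}.

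Second, your fallback does not prove the lemma. Putting the ancilla in $C$ leaves the effective operators $M^{0,\lambda}_y$ on $B$ exactly as non-projective as before. Those operators are precisely what membership in $\Gamma_{\mathrm{Proj}}[\score,\overlap]$ refers to. They are also what the later Jordan-lemma step on $\{P_0,\Pi_0\}$ in $\mathcal{H}_B$ requires.
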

\begin{proof} 
Let $\rho$ be of the form \eqref{eqn: CQ state explicit} with $\scorefunction(\rho) = \score$ and $\overlapfunction(\rho) = \overlap$, i.e., $\rho \in \Gamma(\score ,\overlap)$.  Let $M^{0,\lambda}_{0}$ and $M^{1 , \lambda}_{1}$ be non-extremal effects, 
so that we can express 
\begin{eqnarray}
    M^{0,\lambda}_{0} = \sum_{a} p_{A|\lambda}(a) P_{0}^{(a , \lambda)} 
\end{eqnarray}
where $P_{0}^{(a , \lambda)}$ are projectors on $\M{H}_{B}$ (which could include the zero projector) and $p_{A|\lambda}$ is a probability distribution. Furthermore, defining the projectors $P_{1}^{(a, \lambda)} = \id_{B} - P_{0}^{(a, \lambda)}$ allows us to express $M_{1}^{0,\lambda}$ as a convex sum:
\begin{eqnarray}
M^{0, \lambda}_{1} &=& \id_{B} -   M^{0, \lambda}_{0} \nonumber \\ 
                    &=& \sum_{a} p_{A|\lambda}(a) (\id_{B} -  P^{(a, \lambda) }_{0} ) \nonumber \\ 
                    &=& \sum_{a} p_{A|\lambda}(a) P^{(a, \lambda)}_{1}.
\end{eqnarray}
To prove the lemma, it suffices to show that $\rho$ admits an alternative decomposition of the form \eqref{eqn: CQ state explicit} only in terms of projectors. We have
\begin{eqnarray}
\rho_{TXYE} &=& \sum_{x, y} p_{X}(x) \Bigg( p_{T}(0)  \proj{0,x,y} \otimes \sum_{\lambda , a} \underset{B}{\tr}\left( p_{A|\lambda}(a) P_{\Lambda}(\lambda)  (P_y^{(a , \lambda)} \otimes \id_{E}) \sigma^{x,\lambda}_{BE}\right)  \nonumber\\ 
& &\hspace{7cm} + p_{T}(1) \proj{1 ,x ,y} \otimes \sum_{\lambda } \underset{B}{\tr}\left( p_{\Lambda}(\lambda)  (M^{1}_{y} \otimes \id_{E}) \sigma^{x,\lambda}_{BE}\right) \Bigg) \nonumber \\ 
&=& \sum_{x, y} p_{X}(x) \Bigg( p_{T}(0)  \proj{0 ,x ,y} \otimes \sum_{\lambda , a} \underset{B}{\tr}\left( p_{A,\Lambda}(a , \lambda) (P_y^{(a, \lambda)} \otimes \id_{E}) \sigma^{x,\lambda}_{BE}\right)  \nonumber \\ 
& & \hspace{7cm} + p_{T}(1) \proj{1 ,x ,y} \otimes \sum_{\lambda , a} \underset{B}{\tr}\left( p_{A , \Lambda}(a , \lambda) (M^{1}_{y} \otimes \id_{E}) \sigma^{x,\lambda}_{BE}\right) \Bigg) \nonumber \\
&=& \sum_{t, x, y} p_{X}(x) \Bigg( p_{T}(0) \proj{0 , x , y} \otimes \sum_{\lambda'} \underset{B}{\tr}\left( {p}_{\Lambda'}(\lambda' ) (P_y^{\lambda'} \otimes \id_{E}) \sigma^{x,\lambda'}_{BE}\right)  \nonumber\\ 
& & \hspace{7cm}+ p_{T}(1) \proj{1 , x , y} \ot \sum_{\lambda'} \underset{B}{\tr}\left( p_{\Lambda'}(\lambda')  (M^{1}_{y} \otimes \id_{E}) \sigma^{x,\lambda'}_{BE}\right)  \Bigg), \nonumber  
\end{eqnarray}
where in the second step, we have defined a joint probability distribution $p_{A,\Lambda}$ via $p_{A,\Lambda}(a, \lambda) =  p_{A|\lambda}(a) p_{\Lambda}(\lambda)$, and in the final step, we have defined $\Lambda' = (A, \Lambda)$ so that $p_{\Lambda'} = p_{A,\Lambda}$, and set $\sigma^{x, \lambda'}_{BE} \equiv \sigma^{(x, a,\lambda)}_{BE} = \sigma^{x,\lambda}_{BE}$. Note here that the random variable $\Lambda'$ is independent of the input $X$.

We have hence shown that $\rho_{TXYE} \in \Gamma_{\mathrm{Proj}}(\score , \overlap)$.
\end{proof} 

Finally we prove the main result of this section, which allows the expansion rate to be expressed in terms of an optimization over $\Gamma_s[\score,\overlap]$ and the convex envelope of a function, denoted $\mathrm{convenv}(\cdot)$\footnote{The convex envelope is also referred to as the ``convex floor'' or ``convex lower bound'' in the literature. See Appendix \ref{app: LF transform} for the definition.}.

\begin{lemma}\label{lemm: F_to_G}
  Let
  \begin{eqnarray}\label{def:G}
    G_{p_{X}}(\score, \overlap) =  \inf_{\rho \in \Gamma_{s}[\score, \overlap]} H(Y|X,T= 0,E)_{\rho}.
  \end{eqnarray}
  Then
\begin{eqnarray}
    F_{p_{X}}(\score, \overlap) = \mathrm{convenv}(G_{p_{X}})(\score, \overlap). 
\end{eqnarray}
\end{lemma}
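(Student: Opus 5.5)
We prove the two inequalities $F_{p_X}\le \mathrm{convenv}(G_{p_X})$ and $F_{p_X}\ge \mathrm{convenv}(G_{p_X})$ separately, after first using Lemmas~\ref{lemm: reduction to pure states} and~\ref{lemm: reduction to projective measurements} to restrict the infimum defining $F_{p_X}$ to states in $\Gamma_{\mathrm{Pure}}[\score,\overlap]\cap\Gamma_{\mathrm{Proj}}[\score,\overlap]$. To see that this intersection suffices, note that the purification in Lemma~\ref{lemm: reduction to pure states} does not touch the measurement operators. We may therefore first replace the POVMs by projectors via the relabelling $\Lambda'=(A,\Lambda)$ of Lemma~\ref{lemm: reduction to projective measurements}, and then purify each $\sigma^{x,\lambda'}_{BE}$. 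Such a state is a mixture $\rho=\sum_\lambda p_\Lambda(\lambda)\rho^\lambda$, where each $\rho^\lambda$ is of the form \eqref{eqn: single CQ state} and hence lies in $\Gamma_s[\score_\lambda,\overlap_\lambda]$. The key structural facts are that $\Lambda$ is independent of $X$ and $T$, and that $\scorefunction$ and $\overlapfunction$ are affine in the conditional distributions $p_{Y|x,t}$. Hence $\score=\sum_\lambda p_\Lambda(\lambda)\score_\lambda$ and $\overlap=\sum_\lambda p_\Lambda(\lambda)\overlap_\lambda$.

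For the lower bound $F\ge\mathrm{convenv}(G)$, we give the adversary the classical label $\lambda$. Let $\hat\rho_{TXYE\Lambda}=\sum_\lambda p_\Lambda(\lambda)\rho^\lambda\otimes\proj{\lambda}_{\Lambda}$. By strong subadditivity (conditioning reduces entropy), and since $p_{\Lambda|X,T=0}=p_\Lambda$,
\begin{eqnarray*}
H(Y|X,T=0,E)_\rho \ \ge\ H(Y|X,T=0,E\Lambda)_{\hat\rho} = \sum_\lambda p_\Lambda(\lambda) H(Y|X,T=0,E)_{\rho^\lambda}\ \ge\ \sum_\lambda p_\Lambda(\lambda) G_{p_X}(\score_\lambda,\overlap_\lambda).
\end{eqnarray*}
The right-hand side is a convex combination of values of $G_{p_X}$ whose arguments average to $(\score,\overlap)$, so it is at least $\mathrm{convenv}(G_{p_X})(\score,\overlap)$. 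Taking the infimum over $\rho$ gives the bound. We have to check that the independence of $\Lambda$ from $X$ survives the purification and the projector reduction; both lemmas state that it does.

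For the upper bound $F\le\mathrm{convenv}(G)$, we use the characterisation of the convex envelope as the infimum of $\sum_j q_j G_{p_X}(\score_j,\overlap_j)$ over finite decompositions $\sum_j q_j(\score_j,\overlap_j)=(\score,\overlap)$. By Carathéodory's theorem in two dimensions, three points suffice. Given such a decomposition and near-optimal states $\rho^j\in\Gamma_s[\score_j,\overlap_j]$, we build a strategy in $\Gamma[\score,\overlap]$ with shared classical variable $\Lambda=j$: the source uses $\sigma^{x,j}$ and the measurement device uses $M^{0,j}_y$, conditioned on $j$ stored in $C$. This is allowed by \eqref{eqn: input state}, since classical correlation between $C$ and $R,E$ is permitted. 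The crucial choice is to give $E$ a copy of $j$, i.e.\ $E\to E\otimes J$, which is possible because $\tau^\lambda_{ER}$ may contain $\proj{\lambda}$. Then $H(Y|X,T=0,EJ)=\sum_j q_j H(Y|X,T=0,E)_{\rho^j}$ exactly, and score and overlap average correctly. Hence $F\le\sum_j q_jG(\score_j,\overlap_j)+\epsilon$.

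The main obstacle is bookkeeping rather than any deep estimate. We must make sure that the combined strategy genuinely has the form \eqref{eqn: single_round_channel} with an input state of form \eqref{eqn: input state}. In particular, the test measurement $\Pi_0$ is fixed and independent of $\lambda$, which is consistent with this construction. We must also handle infima that are not attained, via $\epsilon$-approximation, and points where $\Gamma_s$ is empty, by setting $G=+\infty$ there so that such points do not enter the envelope.
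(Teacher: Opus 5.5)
Your proposal is correct and follows the paper's proof essentially step for step. For $F_{p_X}\ge\mathrm{convenv}(G_{p_X})$ you use concavity of conditional entropy, made explicit by adjoining the classical flag $\Lambda$ and using that conditioning reduces entropy; for $F_{p_X}\le\mathrm{convenv}(G_{p_X})$ you use the same flagged mixture in which the adversary holds the label (the paper's $E\simeq\bigoplus_\lambda E_\lambda$) together with a Carath\'eodory reduction to finitely many points. Your extra care tidies up points the paper passes over but does not change the route: you use $\epsilon$-optimal states instead of assuming the infimum is attained, set $G_{p_X}=+\infty$ where $\Gamma_s$ is empty, and check that the combined strategy fits the form \eqref{eqn: input state}.
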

\begin{proof} 
We prove the result in two steps, establishing (a) $F_{p_{X}}(\score, \overlap) \geq \text{convenv}(G_{p_{X}})(\score, \overlap)$, and then (b) $F_{p_{X}}(\score, \overlap) \leq \text{convenv}(G_{p_{X}})(\score, \overlap)$.\medskip

\noindent\textbf{Proof of (a):}

\noindent Consider any $\rho_{TXYE} \in \Gamma[\score, \overlap]$ of the form \eqref{eqn: CQ state explicit}. Lemma~\ref{lemm: reduction to pure states} implies that we can consider $\rho \in \Gamma_{\mathrm{Pure}}[\score, \overlap]$ when doing the optimization required for $F_{P_X}$ (cf.~\eqref{eqn: F_function}). Using Lemma~\ref{lemm: reduction to projective measurements}, we express $\rho_{TXYE} = \sum_{\lambda} p_{\Lambda}(\lambda) \rho^{\lambda}_{TXYE}$, where
\begin{equation*}
\rho^{\lambda}_{TXYE} :=  \sum_{t, x, y} p_{T}(t) p_{X}(x) \proj{t} \otimes \proj{x} \otimes \proj{y} \otimes \underset{B}{\tr} \left((M_{y}^{t, \lambda} \otimes \id_{E}) \sigma^{x, \lambda}_{BE}\right).
\end{equation*}
and $M_{y}^{t , \lambda}$ are projectors. Thus, $\rho^{\lambda}_{TXYE}\in \Gamma_{s}[\scorefunction(\rho^{\lambda}),\overlapfunction(\rho^{\lambda}))]$ by definition of $\Gamma_{s}$. Now, it is straightforward to see that
\begin{equation}\label{eqn: decomposition of probabilities}
\forall\ t,x,y \quad p_{Y|t,x}(y) = \tr \left( \sum_{\lambda} p_{\Lambda}(\lambda) (M^{t,\lambda}_{y} \otimes \id_{E}) \sigma^{x}_{BE}  \right) = \sum_{\lambda} p_{\Lambda}(\lambda)  \tr \left((M^{t,\lambda}_{y} \otimes \id_{E}) \sigma^{x}_{BE} \right),
\end{equation}
which implies that $
    \scorefunction(\rho_{TXYE}) = \sum_{\lambda} p_{\Lambda}(\lambda) \scorefunction(\rho^{\lambda}_{TXYE} ) $ and $
    \overlapfunction(\rho) = \sum_{\lambda} p_{\Lambda}(\lambda) \overlapfunction(\rho^{\lambda}_{TXYE}) $ (i.e., $\scorefunction(\cdot)$ and $\overlapfunction(\cdot)$ are linear maps).

    Using the concavity of the conditional von-Neumann entropy we have 
\begin{eqnarray*}
    H(Y|X, T=0, E)_{\sum_{\lambda} p_{\Lambda}(\lambda) \rho^{\lambda}_{TXYE}} &\geq& \underset{\lambda}{\sum} p_{\Lambda}(\lambda) H(Y|X,T=0,E)_{\rho^{\lambda}_{TXYE}} \\ 
    &\geq&  \sum_{\lambda} p_{\Lambda}(\lambda) \inf_{\rho'_{TXYE} \in \Gamma_{s}[\scorefunction(\rho^{\lambda}_{TXYE}), \overlapfunction(\rho^{\lambda}_{TXYE})]} H(Y|X,T=0,E)_{\rho_{TXYE}'} \\  
&=& \sum_{\lambda} p_{\Lambda}(\lambda) G_{p_{X}}(\scorefunction(\rho^{\lambda}_{TXYE}) , \overlapfunction(\rho^{\lambda}_{TXYE}))\\
&\geq&\mathrm{convenv}(G_{p_X})(\score, \overlap).
\end{eqnarray*}
In the last line we have used that the distribution $p_{\Lambda}$ induces a probability measure $\mu$ over points $(\score_{\lambda}, \overlap_{\lambda})$, satisfying
\[
\int (\score_{\lambda}, \overlap_{\lambda}) \, \mathrm{d}\mu = (\score, \overlap)
\]
and that the quantity $\sum_{\lambda} p_{\Lambda}(\lambda) G_{p_X}(\score_{\lambda}, \overlap_{\lambda})$ corresponds to
\[
\int f(\mathbf{x}') \, \mathrm{d}\mu(\mathbf{x}')
\]
in the definition of the convex envelope~\eqref{def:convenv}. Taking the infimum over all such distributions gives the final bound.\medskip

\noindent \textbf{Proof of (b):}\\
Consider an arbitrary convex combination $(\score,\overlap)=\sum_\lambda p_{\Lambda}(\lambda)(\score_\lambda,\overlap_\lambda)$ and suppose $\rho^{*\lambda}_{TXYE}$ is the optimal solution to the optimization problem $\underset{\rho_{TXYE} \in \Gamma_{s}[\score_{\lambda}, \overlap_{\lambda}]}{\inf} H(Y|X ,T=0,E)_{\rho_{TXYE}}$, i.e., $G_{p_X}(\score_{\lambda}, \overlap_{\lambda})=H(Y|X,T=0,E)_{\rho^{*\lambda}_{TXYE}}$. From the definition of $\Gamma_s[\score_{\lambda}, \overlap_{\lambda}]$, $\rho^{*\lambda}_{TXYE}$ takes the form 
\begin{eqnarray}
 \sum_{t, x, y} p_{T}(t) p_{X}(x) \proj{t} \otimes \proj{x} \otimes \proj{y} \otimes \tr_{B} \left((M_{y}^{*t,\lambda} \otimes \id_{E})\sigma^{* x,\lambda}_{BE}\right).   
\end{eqnarray}
Now consider a state of the form
\begin{eqnarray}
\rho_{TXYE} = \sum_{t, x, y} p_T(t) p_X(x) \proj{t} \otimes \proj{x} \otimes \proj{y} \otimes \sum_{\lambda} p_{\Lambda}(\lambda) \tr_B \left((M_y^{*t,\lambda} \otimes \id_E)\sigma^{*x,\lambda}_{BE_{\lambda}} \right),
\end{eqnarray}
which has been constructed in such a way that $E \simeq \bigoplus_{\lambda} E_{\lambda}$ (i.e., $E_{\lambda}$ and $E_{\lambda'}$ have orthogonal subspaces unless $\lambda \neq \lambda'$). From the linearity of $\scorefunction$ and $\overlapfunction$ we conclude that $\rho_{TXYE} \in \Gamma[\sum_{\lambda} p_{\Lambda}(\lambda)\score_{\lambda} , \sum_{\lambda} p_{\Lambda}(\lambda)  \overlap_{\lambda}]$. It follows that
\begin{eqnarray}
    H(Y|X,T=0,E)_{\rho_{TXYE}} \geq \inf_{\rho'_{TXYE} \in \Gamma[\score,\overlap]} H(Y|X,T=0,E)_{\rho'_{TYXE}}=F_{p_X}(\score,\overlap).
\end{eqnarray}
However, since $\tr_{B} ((M^{*t,\lambda}_{y} \ot \id_{E}) \sigma^{*x,\lambda}_{BE_{\lambda}})$ and $\tr_{B} ((M^{*t,\lambda'}_{y} \ot \id_{E}) \sigma^{*x,\lambda'}_{BE_{\lambda'}}) $ have orthogonal supports whenever $\lambda\neq\lambda'$, we must have that 
\begin{eqnarray}
    F_{p_X}(\score,\overlap)&\leq&H(Y|X,T=0,E)_{\rho_{TXYE}}\\
    &=&\sum_{\lambda} p_{\Lambda}(\lambda) H(Y|X,T=0,E)_{\rho^{*\lambda}_{TXYE}}\nonumber\\
    &=&\sum_{\lambda} p_{\Lambda}(\lambda)G_{p_X}(\score_\lambda,\overlap_\lambda).\label{eq:Fsum}
\end{eqnarray}
Thus, for any probability measure $\mu$ whose support is the points $\{(\score_{\lambda}, \overlap_{\lambda})\}_\lambda$, we can realize a state whose entropy equals $\int G_{p_X}(\score,\overlap)\,\mathrm{d}\mu(\score,\overlap)$.\footnote{Equation~\eqref{eq:Fsum} has a sum rather than an integral. Carathéodory's theorem states that any point in the convex hull of $A\subset\mathbb{R}^d$ can be expressed as the convex combination of $d+1$ points. If we take $A:=\{\, (\mathbf{x},\, G_{p_X}(\mathbf{x})) : \mathbf{x} = (\score_{\lambda},\overlap_{\lambda}) \,\}\subset \mathbb{R}^3$, then any point in the convex hull of $A$ can be expressed as a convex combination of at most $4$ points. Therefore the optimisation can be restricted to probability mass functions. In particular, for any $\mathbf{x}$ we have 
$(\mathbf{x},\mathrm{convenv}(G_{p_X})(\mathbf{x})) \in \mathrm{conv}(A)$, so the convex envelope can always be realised using a finite distribution.}

The construction above shows that for any probability measure $\mu$ whose support is the points $\{(\score_{\lambda}, \overlap_{\lambda})\}_\lambda$, we can realize a state whose entropy equals $\int G_{p_X}(\mathbf{x}') \, \mathrm{d}\mu(\mathbf{x}')$. As above, we also have that $\int (\score_{\lambda}, \overlap_{\lambda}) \, \mathrm{d}\mu = (\score , \overlap)$. By the definition of the convex envelope,
\[
\mathrm{convenv}(G_{p_X})(\score , \overlap) = \inf_{\mu} \Big\{ \int G_{p_X}(\mathbf{x}') \, \mathrm{d}\mu(\mathbf{x}') : \int \mathbf{x}' \, \mathrm{d}\mu = (\score,\overlap) \Big\}.
\]
Thus, we can always choose the distribution $\mu$ such that $\int G_{p_X}(\mathbf{x}') \, \mathrm{d}\mu(\mathbf{x}') = \mathrm{convenv}(G_{p_X})(\score , \overlap)$. Therefore, $\mathrm{convenv}(G_{p_X})(\score , \overlap) \geq F_{p_X}(\score , \overlap)$.

\end{proof}

Note that from the lemma above, $G_{p_{X}}(\score , \overlap) = 0 $ implies $F_{p_{X}}(\score , \overlap) = 0 $. Furthermore, one can use $ G_{p_{X}} $ values to help find a set of parameters $(\score , \overlap)$ that provides a good amount of randomness expansion without having to compute its convex envelope \( F_{p_{X}} \).

\subsection{Reduction to qubit strategies}\label{app: Jordans lemma}
The main issue with the optimization problem in the definition of $G_{p_{X}}(\score, \overlap)$ (cf.~\eqref{def:G}) is that $\M{H}_{B}$ could have an arbitrarily large dimension. The aim of this section is to reduce the optimization strategy space of the problem to strategies where the source prepares a qubit state and the measurement device performs projective measurements. To achieve this, we use Jordan's Lemma, to relate the problem to an analogous one for qubits. Then, we explicitly compute the score, overlap, and entropy for qubit strategies. Finally, based on these results, we relax the optimization problem for $G_{p_{X}}(\score, \overlap)$ to another optimization function $\tilde{G}_{p_{X}}(\score, \overlap)$, where the dimension of the states prepared by the source is restricted to two.

\subsubsection{Jordan's Lemma} 
Jordan's lemma~\cite{Jordan} allows the problem to effectively reduced to qubits. 
\begin{lemma}[Jordan's Lemma]Let $A, B \in B(\mathcal{H})$ be two projectors. Then 
\begin{equation}
    \mathcal{H}  = \bigoplus_{\alpha} \mathcal{H}_{\alpha}  
\end{equation}
such that each $\mathcal{H}_{\alpha}$ is an invariant subspace of $\mathcal{H}$ under the action of $A$, $B$, $\id - A $ and $\id - B$ . Moreover, the dimension of each subspace $\mathcal{H}_{\alpha}$ is at most 2. 
\end{lemma}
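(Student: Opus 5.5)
Jordan's Lemma is a standard result; the plan is to prove it via the self-adjoint operator $S = A + B$ (equivalently, the pair $A$ and $BAB$), decomposing $\mathcal{H}$ spectrally. I will work in finite dimension, which suffices for the strategies considered here; separable infinite dimension follows the same pattern with a direct integral. The key objects are the two commuting self-adjoint operators $ABA$ and $(\id-A)B(\id-A)$: both commute with $A$, and each is positive. I will decompose $\mathcal{H} = \mathrm{ran}A \oplus \mathrm{ran}(\id-A)$ and diagonalize $ABA$ restricted to $\mathrm{ran}A$, obtaining an orthonormal eigenbasis $\{\ket{a_j}\}$ with $ABA\ket{a_j} = c_j\ket{a_j}$, $c_j\in[0,1]$.

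For each eigenvector $\ket{a_j}$ I will set $\mathcal{H}_j = \mathrm{span}\{\ket{a_j}, B\ket{a_j}\}$, which has dimension at most $2$. The main check is invariance. Clearly $A\ket{a_j} = \ket{a_j}$ and $B(B\ket{a_j}) = B\ket{a_j}$. The nontrivial ones are $A B\ket{a_j} = ABA\ket{a_j} = c_j\ket{a_j}\in\mathcal{H}_j$ and $B\ket{a_j}\in\mathcal{H}_j$ by construction. Invariance under $\id-A$ and $\id-B$ then follows immediately. Next I show these subspaces are mutually orthogonal for $j\neq k$. Here $\braket{a_j}{a_k}=0$, and
\[
\bra{a_j}B\ket{a_k} = \bra{a_j}ABA\ket{a_k} = c_k\braket{a_j}{a_k} = 0 ,
\]
so $\bra{a_j}B B\ket{a_k} = 0$ as well. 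The cross term $\bra{a_j}B\ket{a_k}$ vanishes by the same computation. Hence $\bigoplus_j \mathcal{H}_j$ is an invariant subspace for both $A$ and $B$.

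It remains to handle the orthogonal complement $\mathcal{K} = (\bigoplus_j\mathcal{H}_j)^\perp$. Since $A$ and $B$ are self-adjoint and leave $\bigoplus_j\mathcal{H}_j$ invariant, they also leave $\mathcal{K}$ invariant. Moreover $\mathcal{K}\subseteq\mathrm{ran}(\id-A)$, because all of $\mathrm{ran}A$ is spanned by the $\ket{a_j}$. So $A$ acts as $0$ on $\mathcal{K}$, and $B|_{\mathcal{K}}$ is a projector. Diagonalizing $B|_{\mathcal{K}}$ splits $\mathcal{K}$ into one-dimensional invariant subspaces, which completes the decomposition.

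The main obstacle is the orthogonality and invariance bookkeeping, in particular checking that $B\ket{a_j}$ has no component leaking into other blocks. This is exactly where the identity $ABA\ket{a_j}=c_j\ket{a_j}$ is used. Degenerate eigenvalues require no extra care, since we choose an orthonormal eigenbasis and work vector by vector.
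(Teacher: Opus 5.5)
\textbf{Comparison.} The paper does not prove this lemma. It quotes it with a citation to Jordan and uses it as a black box, so your self-contained argument is a genuinely different route, and in finite dimension it is correct. Invariance follows from $AB\ket{a_j}=ABA\ket{a_j}=c_j\ket{a_j}$. Orthogonality of distinct blocks follows from $\bra{a_j}B\ket{a_k}=\bra{a_j}ABA\ket{a_k}=c_k\braket{a_j}{a_k}=0$, which also gives $\bra{a_j}B^{\dagger}B\ket{a_k}=0$. The complement $\mathcal{K}$ lies in $\mathrm{ran}(\id-A)$, where $A$ vanishes and $B$ splits into one-dimensional invariant lines. You are also right that degenerate $c_j$ need no extra care.

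What your route buys over the citation is an explicit description of the blocks, and this matches how the lemma is used later. Take $A=\Pi_0$ and $B=P_0$. Then $ABA$ on the one-dimensional $\mathrm{ran}\,\Pi_0$ has the single eigenvector $\ket{0}$, which gives one block $\mathrm{span}\{\ket{0},P_0\ket{0}\}$. Every other block is spanned by a single eigenvector of $P_0$ lying in $\ker\Pi_0$. So for this decomposition Lemma~\ref{lem: uniqueness of alpha0} holds by inspection, without the commutation argument. The citation, in exchange, covers an arbitrary pair of projectors at no cost.

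\textbf{Corrections to the write-up.} First, the opening framing via $S=A+B$, via $BAB$, and via the commuting pair $ABA$, $(\id-A)B(\id-A)$ plays no role in what follows. The proof runs entirely on an eigenbasis of $ABA$ restricted to $\mathrm{ran}A$, and the write-up should say so.

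Second, and more substantively, the remark that separable infinite dimension ``follows the same pattern with a direct integral'' does not prove this statement. A direct integral is not a direct sum of invariant subspaces of dimension at most two, and such a direct sum can fail to exist. To see this, note that a block of dimension at most two that is invariant under $A$ and $B$ has a finite-dimensional, $ABA$-invariant intersection with $\mathrm{ran}A$. That intersection is spanned by eigenvectors of $ABA$. Hence $ABA|_{\mathrm{ran}A}$ would have to be diagonalizable in an orthonormal basis, which fails whenever it has continuous spectrum.

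Your proof works verbatim, with Hilbert-space direct sums, exactly when $ABA|_{\mathrm{ran}A}$ has an orthonormal eigenbasis, for instance when $A$ has finite rank. That case includes the paper's choice $A=\Pi_0$ even when $\mathcal{H}_B$ is infinite-dimensional, as for an optical carrier. This is a better justification than asserting that finite dimension suffices for the strategies considered.
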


We now prove the following elementary result that will be useful later on. 
\begin{lemma}\label{lem: linear-algebra}
Let $\M{H}$ be a Hilbert space that admits the decomposition $\M{H} = \bigoplus_{\alpha} \M{H}_{\alpha}$, where each subspace $\M{H}_{\alpha}$ is invariant under the action of the operator $\hat{O}$. If $Q_{\alpha}$ is the projection onto the sub-space $\M{H}_{\alpha}$ then 
\begin{equation}
\forall \alpha:  [Q_{\alpha} , \hat{O}] = 0
\end{equation}
\end{lemma}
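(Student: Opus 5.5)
The plan is to check the commutation on each summand of the decomposition separately, using that $\mathcal{H} = \bigoplus_{\alpha}\mathcal{H}_{\alpha}$ is an orthogonal direct sum. Then every $\ket{v}\in\mathcal{H}$ can be written uniquely as $\ket{v} = \sum_{\beta} \ket{v_\beta}$ with $\ket{v_\beta}\in\mathcal{H}_\beta$, and $Q_\alpha\ket{v} = \ket{v_\alpha}$. Since both $Q_\alpha\hat{O}$ and $\hat{O}Q_\alpha$ are linear, it suffices to show that they agree on each $\ket{v_\beta}\in\mathcal{H}_\beta$, for every $\beta$. (If the sum is infinite, I would also use boundedness of $\hat{O}$ to pass to limits of finite sums. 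In the intended use, with finitely many blocks, this is not needed.)

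\textbf{Case $\beta = \alpha$.} Here $Q_\alpha\ket{v_\alpha} = \ket{v_\alpha}$, so $\hat{O}Q_\alpha\ket{v_\alpha} = \hat{O}\ket{v_\alpha}$. By invariance, $\hat{O}\ket{v_\alpha}\in\mathcal{H}_\alpha$, so applying $Q_\alpha$ leaves it unchanged. Hence $Q_\alpha\hat{O}\ket{v_\alpha} = \hat{O}\ket{v_\alpha}$ as well.

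\textbf{Case $\beta\neq\alpha$.} Orthogonality of the decomposition gives $Q_\alpha\ket{v_\beta}=0$, so $\hat{O}Q_\alpha\ket{v_\beta}=0$. Invariance of $\mathcal{H}_\beta$ gives $\hat{O}\ket{v_\beta}\in\mathcal{H}_\beta\perp\mathcal{H}_\alpha$, so $Q_\alpha\hat{O}\ket{v_\beta}=0$ too.

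Combining the two cases gives $Q_\alpha\hat{O}=\hat{O}Q_\alpha$, i.e.\ $[Q_\alpha,\hat{O}]=0$ for every $\alpha$.

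The only real subtlety is orthogonality of the direct sum. If $\bigoplus$ meant only an algebraic (non-orthogonal) direct sum, the natural $Q_\alpha$ would be an oblique projection, and the argument would still go through with $Q_\alpha$ defined as the projection along $\bigoplus_{\beta\neq\alpha}\mathcal{H}_\beta$. However, the lemma is applied to Jordan's lemma, which produces an orthogonal decomposition. I would therefore state explicitly that $\bigoplus$ denotes an orthogonal direct sum and $Q_\alpha$ is the orthogonal projector.

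As a cross-check, one can also argue with projectors directly. Invariance of $\mathcal{H}_\alpha$ is equivalent to $Q_\alpha\hat{O}Q_\alpha=\hat{O}Q_\alpha$. Invariance of the complement $\bigoplus_{\beta\neq\alpha}\mathcal{H}_\beta$ gives $(\id-Q_\alpha)\hat{O}(\id-Q_\alpha)=\hat{O}(\id-Q_\alpha)$, and multiplying this on the left by $Q_\alpha$ gives $Q_\alpha\hat{O}(\id-Q_\alpha)=0$, i.e.\ $Q_\alpha\hat{O}=Q_\alpha\hat{O}Q_\alpha$. Together with the first identity, this yields $Q_\alpha\hat{O}=\hat{O}Q_\alpha$.
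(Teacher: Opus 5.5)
Your proof is correct and follows essentially the same route as the paper, which also expands $\ket{\psi}=\sum_{\alpha'}c_{\alpha'}\ket{v_{\alpha'}}$ into block components and uses invariance of each $\mathcal{H}_{\alpha'}$ under $\hat{O}$ together with orthogonality of the blocks to show that both $Q_\alpha\hat{O}\ket{\psi}$ and $\hat{O}Q_\alpha\ket{\psi}$ equal $c_\alpha\hat{O}\ket{v_\alpha}$. Your remarks on needing an orthogonal sum and boundedness of $\hat{O}$ (when there are infinitely many blocks) make explicit what the paper assumes tacitly, and your projector-identity cross-check is a valid equivalent formulation.
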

\begin{proof}
To prove the lemma, we show that $\forall \alpha,\ \forall \ket{\psi}\in\M{H}: Q_{\alpha} \hat{O} \ket{\psi} = Q_{\alpha} \hat{O} \ket{\psi}$. First, notice that we can express $\ket{\psi}$ as a linear combination of vectors $\ket{v_{\alpha'}} \in \M{H}_{\alpha'}$, i.e., $\ket{\psi} = \sum_{\alpha' } c_{\alpha'} \ket{v_{\alpha'}}$ for some coefficients $\{c_{\alpha'}\}_{\alpha'}$. Now, 
\begin{eqnarray}
    Q_{\alpha} \hat{O} \ket{\psi} &=& \sum_{\alpha'} c_{\alpha'}Q_{\alpha} \hat{O}  \ket{v_{\alpha'}} \nonumber\\ 
    &=& \sum_{\alpha'} c_{\alpha'} Q_{\alpha} \ket{w_{\alpha'}}   \nonumber\\ 
    &=& c_{\alpha} \ket{w_{\alpha}} \nonumber
\end{eqnarray} 
where above we have used the fact $\ket{w_{\alpha'}}:= \hat{O} \ket{v_{\alpha'}}$ is also an element in $\M{H}_{\alpha'}$ as $\hat{O}$ leaves the space $\M{H}_{\alpha'}$ invariant. On the other hand
\begin{eqnarray}
     \hat{O}Q_{\alpha} \ket{\psi} &=& \sum_{\alpha'} c_{\alpha'}\hat{O} Q_{\alpha} \ket{v_{\alpha'}}  \nonumber \\ 
    &=& c_{\alpha} \hat{O} \ket{v_{\alpha}}  \nonumber \\ 
    &=& c_{\alpha} \ket{w_{\alpha}}, \nonumber
\end{eqnarray}
as required.
\end{proof}
We now prove another technical result:
\begin{lemma}\label{lem: uniqueness of alpha0} Consider the Jordan decomposition of $\M{H}_{B} = \bigoplus_{\alpha}\M{H}_{\alpha}$ defined in terms of operators $\{P_0,\Pi_0\}$. 
  All spaces $\M{H}_{\alpha}$ are contained within the null space of $\Pi_0$, except for a single subspace $\M{H}_{\alpha^*}$. Furthermore, the projector onto $\M{H}_{\alpha^*}$ takes the form
\begin{eqnarray}
Q_{\alpha^*} = \Pi_0 + \bar{Q},
\end{eqnarray}
where $\bar{Q}$ a projector of rank at most 1, satisfying $\bar{Q}\Pi_0=0$. 
\end{lemma}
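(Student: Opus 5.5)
The approach is to use that $\Pi_0=\proj{0}$ has rank one (Assumption~\ref{ass: sDI-ground}: the ground state is unique), together with the invariance of each Jordan block under $\Pi_0$. By Lemma~\ref{lem: linear-algebra}, every block projector $Q_\alpha$ commutes with $\Pi_0$. So $\Pi_0 Q_\alpha$ is a projector, and it satisfies $\Pi_0 Q_\alpha \le \Pi_0$. Since $\Pi_0$ has rank one, $\Pi_0 Q_\alpha$ is either $0$ or $\Pi_0$ itself.

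\textbf{Step 1: the ground state lies in exactly one block.} Summing $\Pi_0 = \sum_\alpha \Pi_0 Q_\alpha$ over $\alpha$, and using $\sum_\alpha Q_\alpha=\id$, exactly one index $\alpha^*$ satisfies $\Pi_0 Q_{\alpha^*}=\Pi_0$. All other indices satisfy $\Pi_0Q_\alpha=0$. The latter condition means $\M{H}_\alpha\subseteq\ker\Pi_0$, which is the first claim. For $\alpha^*$ we get $\Pi_0 = \Pi_0 Q_{\alpha^*}=Q_{\alpha^*}\Pi_0$, so $\ket{0}\in\M{H}_{\alpha^*}$. Equivalently, $\Pi_0\le Q_{\alpha^*}$.

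\textbf{Step 2: the form of $Q_{\alpha^*}$.} Define $\bar Q := Q_{\alpha^*}-\Pi_0$. Because $\Pi_0$ and $Q_{\alpha^*}$ commute and $\Pi_0\le Q_{\alpha^*}$, $\bar Q$ is a projector: $\bar Q^2 = Q_{\alpha^*} - 2\Pi_0Q_{\alpha^*} + \Pi_0 = \bar Q$. We also have $\bar Q\Pi_0 = \Pi_0 - \Pi_0 = 0$. For the rank, $\mathrm{rank}\,\bar Q = \dim\M{H}_{\alpha^*} - 1 \le 1$, because Jordan's Lemma bounds each block dimension by $2$.

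The main point needing care is the uniqueness argument in Step 1. It relies on the block projectors commuting with $\Pi_0$, which is supplied by Lemma~\ref{lem: linear-algebra}, and on $\Pi_0$ having rank one. Beyond that, the argument is routine, so I expect no real obstacle.
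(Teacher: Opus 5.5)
Your proposal is correct and follows essentially the same route as the paper. Both use Lemma~\ref{lem: linear-algebra} to get $[Q_\alpha,\Pi_0]=0$ and then exploit that $\Pi_0=\proj{0}$ has rank one; the paper does this by writing $Q_\alpha=\gamma_\alpha\Pi_0+\beta_\alpha\bar P_\alpha$ and using $Q_\alpha Q_{\alpha'}=\delta_{\alpha\alpha'}Q_\alpha$ to force a single $\gamma_{\alpha^*}=1$. Your argument via $\sum_\alpha Q_\alpha=\id$ additionally makes explicit two things the paper leaves implicit: the existence of $\alpha^*$, and the rank bound $\dim\M{H}_{\alpha^*}-1\le 1$ on $\bar Q$.
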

\begin{proof}
Taking $Q_\alpha$ to be the projector onto the subspace $\M{H}_\alpha$, from Lemma~\ref{lem: linear-algebra} we can deduce that $[\Pi_0, Q_{\alpha}] = 0$ for all $\alpha$, indicating that $Q_{\alpha}$ and $\Pi_0$ share common eigenvectors. Likewise, $[\id - \Pi_0, Q_{\alpha}] = 0$ implies that $\id - \Pi_0$ shares eigenvectors with $Q_{\alpha}$. Since $Q_{\alpha}$ is a projection onto a subspace of dimension at most 2, it must take the form $Q_{\alpha} = \gamma_{\alpha} \Pi_0 + \beta_{\alpha} \bar{P}_{\alpha}$, where $\bar{P}_{\alpha}$ is a projection onto a subspace of the null space of $\Pi_0$, and $\gamma_{\alpha}, \beta_{\alpha} \in \mathbb{R}$.

Any two distinct sub-spaces, $\M{H}_{\alpha}$ and $\M{H}_{\alpha'}$, must have orthogonal supports so $\gamma_{\alpha} \gamma_{\alpha'} = \gamma_{\alpha} \delta_{\alpha,\alpha'}$. Thus, $\gamma_{\alpha}$ can only be non-zero for one value of $\alpha$, which we call $\alpha^*$. Furthermore, $\gamma_{\alpha^*}= 1$ as $\gamma_{\alpha^*}^2 = \gamma_{\alpha^*}$ must hold. Now, as $Q_{\alpha^*}^2 = Q_{\alpha^*}$, we get $\beta_{\alpha^*}^2 = \beta_{\alpha^*}$, implying that $\beta_{\alpha^*} \in \{ 0 ,1\}$. Writing $\bar{Q} = \beta_{\alpha^*} \bar{P}_{\alpha^*}$ proves the lemma. 
\end{proof} 
The advantage of employing Jordan's lemma is that instead of considering a Hilbert space of unknown dimension, it allows us to decompose the space into orthogonal two-dimensional subspaces. As we shall show below, these two-dimensional subspaces can then be treated independently, thus allowing us to relax the optimization problem for $G(\score , \overlap)$ into an optimization problem on a two-dimensional Hilbert space. To illustrate this, we compute the overlap, score and the entropy for any $\rho \in \Gamma_{s}[\score , \overlap]$ (cf.~\eqref{eqn: single CQ state}).

\subsubsection{Computing the overlap}
To compute the score and the overlap of a state $\rho_{TXYE}$ of the form~\eqref{eqn: single CQ state}, we use the block structure defined by the projectors $Q_\alpha$ on to the subspaces $\M{H}_\alpha$ from the Jordan decomposition in terms of $\{M^0_0=P_0,M^1_0=\Pi_0\}$. We have
\begin{eqnarray}
 p_{Y|x,t}(y) := \tr \left( (M^{t}_{y}\otimes\id_E) \sigma^{x}_{BE} \right) &=& \tr \left( (\sum_{\alpha'} Q_{\alpha'}^2)M^{t}_{y}   (\sum_{\alpha} Q_{\alpha}^2) \sigma^{x}_{B} \right)  \label{eqn: p1}\\ 
              &=& \sum_{\alpha,\alpha'}\tr \left( Q_{\alpha'}M^{t}_{y} Q_{\alpha'} Q^2_{\alpha}  \sigma^{x}_{B}\right) \label{eqn: p2} \\ 
              &=& \sum_{\alpha} \tr \left( (Q_{\alpha} M^{t}_{y} Q_{\alpha})  (Q_{\alpha} \sigma^{x}_{B} Q_{\alpha})\right), \label{eqn: p3}
\end{eqnarray}
where we have used Lemma \ref{lem: linear-algebra} to commute $Q_{\alpha'}$ with $M^t_y$, the relation $Q_\alpha Q_{\alpha'}=Q_\alpha\delta_{\alpha,\alpha'}$ and the cyclic properties of the trace. The overlap is based on $p_{Y|x,0}(0)$. Noting that $M^{1}_{0} = \Pi_{0}$ gives
\begin{eqnarray}
p_{Y|x,0}(0) =  \sum_{\alpha} \tr \left( (Q_{\alpha} \Pi_0 Q_{\alpha}) (Q_{\alpha} \sigma^{x}_{B} Q_{\alpha})\right) =  \tr \left( (Q_{\alpha^*}\Pi_0 Q_{\alpha^*}) (Q_{\alpha^*} \sigma^{x}_{B} Q_{\alpha^*})\right)
\end{eqnarray}
and hence 
\begin{eqnarray}
\overlapfunction(\rho_{TXYE}) = \sum_{x} \frac{1}{2} \tr \left( (Q_{\alpha^*}\Pi_0 Q_{\alpha^*}) (Q_{\alpha^*} \sigma^{x}_{B} Q_{\alpha^*})\right) =  \sum_{x} \frac{1}{2} \tr \left( \Pi_0 (Q_{\alpha^*} \sigma^{x}_{B} Q_{\alpha^*})\right), \label{eqn: overlap} 
\end{eqnarray}
so that the overlap of $\rho_{TXYE}$ comes solely from the subspace $\M{H}_{\alpha^*}$. This motivates us to define
\begin{eqnarray}
\eta_{x} &:=& \tr(Q_{\alpha^*} \sigma^{x}_{B} Q_{\alpha^*} ),  \label{eqn: re-par-prob} \\ 
\tilde{\sigma}^{x}_{BE} &:=& \frac{1}{\eta_{x}} \left((Q_{\alpha^*} \ot \id_{E}) \sigma^{x}_{BE} (Q_{\alpha^*} \ot \id_{E})  \right),\quad\text{and} \label{eqn: re-par state} \\ 
\tilde{M}_{y} &:=& Q_{\alpha^*} M_{y}^{0} Q_{\alpha^*}. \label{eqn: re-par measurement}
\end{eqnarray}
Here $\eta_{x} \in [0 ,1]$, $\tilde{\sigma}^{x}_{B} \in \M{S}(\mathbb{C}^{2}) $ is a normalized qubit state and $\{\tilde{M}_{y} \}_{y}$ is a projective measurement on $\mathbb{C}^{2}$. Furthermore, using Lemma \ref{lemm: reduction to pure states} the state $\tilde{\sigma}^{x}_{BE}$ can be taken to be pure, and hence to be a state on $\M{S}(\mathbb{C}^{2} \ot \mathbb{C}^{2})$. The overlap Eq.~\eqref{eqn: overlap} can hence be written in terms of qubit states
\begin{eqnarray}\label{eqn: qubit overlap}
    \overlapfunction(\rho_{TXYE}) = \sum_{x} \frac{1}{2} \eta_{x} \tr(\Pi_{0} \tilde{\sigma}^{x}_{B}).
\end{eqnarray}

\subsubsection{Computing the score}
We can compute the score using the same technique as above
\begin{eqnarray}
  \scorefunction(\rho_{TXYE}) &=&  \sum_{x,\alpha} \frac{1}{2} \tr \left( (Q_{\alpha}M^{0}_{x} Q_{\alpha}) (Q_{\alpha} \sigma^{x}_{B} Q_{\alpha})\right)\\
  &=& \sum_{x} \frac{1}{2} \tr \left( (Q_{\alpha^*} {M^0_x} Q_{\alpha^*}) (Q_{\alpha^*} \sigma^{x}_{B} Q_{\alpha^*})\right) + \sum_{x,\alpha \ne \alpha^*}\frac{1}{2} \tr \left( (Q_{\alpha}M^{0}_{x} Q_{\alpha}) (Q_{\alpha} \sigma^{x}_{B} Q_{\alpha})\right). \nonumber 
\end{eqnarray}
Since $ (Q_{\alpha^*} \sigma^{x}_{B} Q_{\alpha^*}) \succeq 0$ and $(Q_{\alpha^*} M^{0}_{x} Q_{\alpha^*}) \succeq 0$, it follows that $\tr \left( (Q_{\alpha^*}M^{0}_{x} Q_{\alpha^*}) (Q_{\alpha^*} \sigma^{x}_{B} Q_{\alpha^*})\right) \geq 0$.  Therefore, it follows that $\scorefunction(\rho_{TXYE}) \geq \sum_{x} \frac{1}{2} \tr \left( (Q_{\alpha^*}{M^0_x} Q_{\alpha^*}) (Q_{\alpha^*} \sigma^{x}_{B} Q_{\alpha^*})\right)$. Furthermore, 
\begin{eqnarray}
 \sum_{\alpha \ne \alpha^*} \tr \left( (Q_{\alpha }M^{0}_{x} Q_{\alpha }) (Q_{\alpha } \sigma^{x}_{B} Q_{\alpha^*})\right) &\leq& \sum_{\alpha \ne \alpha^*} \tr \left( Q_{\alpha} \sigma^{x}_{B} Q_{\alpha} \right)  \nonumber \\ 
 &=& \sum_{\alpha } \tr \left( Q_{\alpha} \sigma^{x}_{B} Q_{\alpha}\right) -  \tr \left( Q_{\alpha^*} \sigma^{x}_{B} Q_{\alpha^*}\right) \nonumber \\ 
&=&  1 -   \tr \left( Q_{\alpha^*} \sigma^{x}_{B} Q_{\alpha^*} \right).
 \nonumber
\end{eqnarray}
Therefore, we can also bound $\scorefunction(\rho_{TXYE})$ solely in terms of a single qubit strategy using
\begin{eqnarray}
 \frac{1}{2}\sum_{x}  \tr \left( (Q_{\alpha^*}M^{0 }_{x} Q_{\alpha^*}) (Q_{\alpha^*} \sigma^{x}_{B} Q_{\alpha^*}) \right) \leq  \scorefunction(\rho_{TXYE}) \leq \frac{1}{2}\sum_{x}\left(\tr \left((Q_{\alpha^*}M^{0 }_{x} Q_{\alpha^*}) (Q_{\alpha^*} \sigma^{x}_{B} Q_{\alpha^*})\right)  +  1 -\tr \left( Q_{\alpha^*} \sigma^{x}_{B} Q_{\alpha^*} \right) \right).  \nonumber
\end{eqnarray}
Rewriting the equation above in terms of the parametrization \eqref{eqn: re-par-prob}--\eqref{eqn: re-par measurement} gives the bound:
\begin{eqnarray}\label{eqn: qubit-score}
\sum_{x} \frac{\eta_{x}}{2}  \tr \left( \tilde{M}_{x} \tilde{\sigma}^{x}_{B}   \right) \leq \scorefunction(\rho_{TXYE}) \leq \sum_{x} \left( \frac{\eta_{x}}{2}  \tr \left( \tilde{M}_{x} \tilde{\sigma}^{x}_{B} \right) + \frac{1 - \eta_{x}}{2} \right).
\end{eqnarray}

\subsubsection{Lower bounding the entropy} 
Using arguments similar to those used in \eqref{eqn: p1}--\eqref{eqn: p3} we get the following decomposition of $\rho_{TXYE}$:
\begin{eqnarray}
\rho_{TXYE} &=& \sum_{t,x,y} p_{T}(t) p_{X}(x)\ketbra{t,x,y}{t,x,y}  \otimes \underset{B}{\tr} \left( (M_{y}^{t} \otimes \id_{E}) \sigma^{x}_{BE}    \right) \nonumber\\ 
            &=& \sum_{t,x,y} p_{T}(t) p_{X}(x)\ketbra{t,x,y}{t,x,y}  \otimes \underset{B}{\tr}\left(\sum_{\alpha} ((Q_{\alpha} M_{y}^{t} Q_{\alpha}) \otimes \id_{E}) (Q_{\alpha} \ot \id_{E}) \sigma^{x}_{BE} (Q_{\alpha} \ot \id_{E} ).    \right) \nonumber \\ 
            &=& \sum_{x ,\alpha} \tr\left( Q_{\alpha}  \sigma^{x}_{B} Q_{\alpha} \right)p_{X}(x)\left( \sum_{t,y}  p_{T}(t) \ketbra{t,x,y}{t,x,y}  \otimes \underset{B}{\tr}\left( ((Q_{\alpha} M_{y}^{t} Q_{\alpha}) \otimes \id_{E}) \left(\frac{(Q_{\alpha} \ot \id_{E}) \sigma^{x}_{BE} (Q_{\alpha} \ot \id_{E})}{\tr\left( Q_{\alpha}  \sigma^{x}_{B} Q_{\alpha} \right)} \right)    \right) \right)   \nonumber 
\end{eqnarray}  
Defining
$$\rho^{\alpha}_{T,X=x,YE}  :=  \sum_{t,y} p_{T}(t)  \ketbra{t,x,y}{t,x,y}  \otimes \underset{B}{\tr}\left( ((Q_{\alpha} M_{y}^{t} Q_{\alpha}) \otimes \id_{E}) \left(\frac{(Q_{\alpha} \ot \id_{E}) \sigma^{x}_{BE} (Q_{\alpha} \ot \id_{E})}{\tr\left( Q_{\alpha}  \sigma^{x}_{B} Q_{\alpha} \right)} \right)    \right), $$ 
we can express $\rho_{TXYE} = \sum_{x, \alpha} \tr(Q_{\alpha} \rho^{x}_{B} Q_{\alpha}) p_{X}(x) \rho^{\alpha}_{T,X=x,YE}$. We now simplify the expression for the entropy:\footnote{Note that it can be easily verified that
$H(Y|X = x, T = 0, E)_{\rho_{TXYE}} = H(Y|X = x, Y = 0, E)_{\rho_{T,X=x,Y,E}}.
$}
\begin{eqnarray}
H(Y|X, T= 0, E)_{\rho_{TXYE}} &=&  \sum_{x} p_{X}(x) H(Y|X=x, T=0 ,E)_{\rho_{TXYE}} \nonumber \\ 
               &\geq& \sum_{x} p_{X}(x) \tr(Q_{\alpha}^* \rho^{x}_{B} Q_{\alpha}^*)  H(Y|X=x,T=0,E)_{\rho^{\alpha^*}_{TXYE}} \nonumber \\ 
               &\geq&  \sum_{x} p_{X}(x) \eta_{x} H(Y|X=x,T=0,E)_{\rho^{\alpha^*}_{TXYE}}. \label{eqn: qubit_CQ_entropy}
\end{eqnarray}

\subsubsection{Entropy for a qubit strategy}
We now explicitly compute  the entropy $H(Y|X=x,T=0,E)_{\rho^{\alpha^*}_{TXYE}}$. Using the chain rule for conditional von Neumann entropy
\begin{eqnarray*}
H(Y|X=x , T= 0 , E) &=& H(YE|X=x , T=0)  - H(E|X=x , T=0)\\
&=& H(Y|X=x,T=0) + H(E|Y, X=x , T= 0) - H(E | X=x, T= 0).
\end{eqnarray*}
We then compute
\begin{eqnarray}
H(Y|X=x,T=0)_{\rho^{\alpha^*}_{TXYE}}   = H_{\bin}\left( \tr \left(\tilde{\sigma}^{x}_{B} \tilde{M}_{0} \right)\right).
\end{eqnarray}
Similarly, a straightforward calculation for $H(E|X=x,T=0)$ shows that $H(E|X=x, T=0) = H(\tilde{\sigma}^{x}_{E})$. Finally, because the state $\tilde{\sigma}^{x}_{BE}$ can be taken to be pure, we have $H(\tilde{\sigma}^{x}_{E}) = H(\tilde{\sigma}^{x}_{B})$.

Finally, we note that for any CQ state $H(E|Y,X=x , T=0)$ is non-negative, giving the bound
\begin{eqnarray}\label{eqn: qubit_entropy}
 H(Y|X=x,T=0,E) \geq H_{\bin}\left( \tr (\tilde{\sigma}^{x}_{B} \tilde{M}_{0})\right) - H(\tilde{\sigma}^{x}_{B}).
\end{eqnarray}
Since the term $H(E|Y,X=x,T=0)$ is zero when $\tilde{\sigma}^{x}_{BE}$ is a pure state and $\{ \tilde{M}_{y} \}_{y}$ is a rank one projective measurement (see Appendix~C of~\cite{BhRC}) this bound is achievable.

Summarizing the above results gives us the following lower bound on $G(\score , \overlap)$.
\begin{lemma}\label{lemm: Jodran's decomposition corollary}
For every $(\score , \overlap) \in [1/2 , 1] \times [1/2 ,1]$, $G_{p_{X}}(\score , \overlap) \geq \tilde{G}_{p_{X}}(\score, \overlap)$, where $\tilde{G}_{p_{X}}(\score, \overlap)$ is the solution to the optimization problem 
\begin{equation}\label{eqn: Gopt}
  \begin{aligned} 
     \inf \quad &  \sum_{x} \eta_{x} p_{X}(x) \left(H_{\bin}\left( \tr \left( P_{0} \tilde{\sigma}^{x}_{B}  \right)\right) - H(\tilde{\sigma}^{x}_{B})\right) \\
\mathrm{s.t.} 
\quad &  \forall x \in \{ 0 , 1 \}: 0 \leq \eta_{x} \leq 1\\
\quad & \forall x: \tilde{\sigma}^{x}_{B} \in \M{S}(\mathbb{C}^{2}) \\
\quad & \forall x: P_{x} = \proj{\psi_x}, \ket{\psi_x} \in \mathbb{C}^{2} , \braket{\psi_x}{\psi_x} =1 \\ 
\quad & \sum_{x} P_{x} = \id \\
\quad &  \sum_{x} \frac{\eta_{x}}{2}  \tr \left( P_{x} \tilde{\sigma}^{x}_{B}   \right) \leq \score \leq \sum_{x} \left( \frac{\eta_{x}}{2}  \tr \left( P_{x} \tilde{\sigma}^{x}_{B} \right) + \frac{1 - \eta_{x}}{2} \right) \\
              \quad &        \sum_{x} \frac{1}{2} \eta_{x} \tr(\Pi_{0} \tilde{\sigma}^{x}_{B}) = \overlap.
 \end{aligned}  
\end{equation}
\end{lemma}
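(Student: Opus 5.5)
The plan is a feasibility-transfer argument: every admissible strategy in $\Gamma_{s}[\score,\overlap]$ yields a feasible point of \eqref{eqn: Gopt} whose objective value is at most the entropy of that strategy. Taking infima then gives $G_{p_X}(\score,\overlap)\geq\tilde G_{p_X}(\score,\overlap)$.

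Fix an arbitrary $\rho_{TXYE}\in\Gamma_{s}[\score,\overlap]$. By definition it has the form \eqref{eqn: single CQ state}, with pure states $\sigma^x_{BE}$ and projective measurements $\{M^0_0=P_0,\id-P_0\}$ and $\{\Pi_0,\id-\Pi_0\}$.

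First I apply Jordan's lemma to the pair $\{P_0,\Pi_0\}$. Then I use Lemma~\ref{lem: uniqueness of alpha0} to isolate the distinguished block $\M{H}_{\alpha^*}$, which is the only block carrying weight on $\Pi_0$. With the reparametrisation \eqref{eqn: re-par-prob}--\eqref{eqn: re-par measurement}, I set $\eta_x$, $\tilde\sigma^x_B$, and $P_x:=\tilde M_x$ as the candidate feasible point.

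Checking the constraints goes in three steps.
\begin{itemize}
\item $0\leq\eta_x\leq 1$ holds because $\eta_x$ is the trace of a compressed density operator.
\item $\tilde\sigma^x_B\in\M{S}(\mathbb{C}^2)$ holds because $\dim\M{H}_{\alpha^*}\le 2$. If $\dim\M{H}_{\alpha^*}=1$, I embed it into $\mathbb{C}^2$.
\item The $\tilde M_y$ sum to the identity on the block, because $Q_{\alpha^*}$ commutes with $P_0$ by Lemma~\ref{lem: linear-algebra}. Hence they form a projective measurement on $\mathbb{C}^2$.
\end{itemize}
The program demands rank-one projectors $P_x=\proj{\psi_x}$. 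The degenerate cases, where $\tilde M_0\in\{0,\id\}$, have to be dealt with. For these I would perturb to a rank-one projector, or argue directly: there the binary entropy term is $0$ and $H(\tilde\sigma)\ge0$, so the objective contribution is $\le 0$. I would also note that the infimum over rank-one projectors can approach that value, or else relax the constraint to allow trivial projectors. This is the point I expect to need the most care.

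The overlap equality follows from \eqref{eqn: qubit overlap}, and the two-sided score constraint is exactly \eqref{eqn: qubit-score}.

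For the objective, I combine \eqref{eqn: qubit_CQ_entropy} with \eqref{eqn: qubit_entropy} to get
\[
H(Y|X,T=0,E)_\rho\geq\sum_x p_X(x)\eta_x\big(H_{\bin}(\tr(\tilde M_0\tilde\sigma^x_B))-H(\tilde\sigma^x_B)\big).
\]
The step dropping the blocks $\alpha\neq\alpha^*$ uses concavity and decomposition over the orthogonal blocks, giving non-negative contributions. Since $H(Y|X,T=0,E)\ge0$ block-wise for classical $Y$, each discarded term is non-negative, which justifies the first inequality in \eqref{eqn: qubit_CQ_entropy}. Hence the entropy of $\rho$ is at least the objective value at a feasible point, so it is at least $\tilde G_{p_X}(\score,\overlap)$. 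Taking the infimum over $\rho$ concludes the proof.

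The main obstacle is making the inequality in \eqref{eqn: qubit_CQ_entropy} rigorous, since $E$ is shared across blocks. I would first rewrite the post-measurement state as a mixture over $\alpha$ with weights $\tr(Q_\alpha\sigma^x_BQ_\alpha)$; this works because $Q_\alpha$ commutes with both measurements, so the cross terms vanish after measuring $Y$. I would then invoke concavity of conditional entropy, $H(Y|E)_{\sum_\alpha q_\alpha\rho^\alpha}\geq\sum_\alpha q_\alpha H(Y|E)_{\rho^\alpha}$, and drop the non-negative terms with $\alpha\ne\alpha^*$.
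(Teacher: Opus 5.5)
Your main argument is the paper's own. Its proof of this lemma just combines \eqref{eqn: qubit overlap}, \eqref{eqn: qubit-score}, \eqref{eqn: qubit_CQ_entropy} and \eqref{eqn: qubit_entropy} for the block $\M{H}_{\alpha^*}$ of the Jordan decomposition of $\{P_0,\Pi_0\}$, with $(\eta_x,\tilde\sigma^x_B,\tilde M_x)$ as the feasible point. Your concavity argument (write the $X=x$, $T=0$ state as a mixture over blocks with weights $\tr(Q_\alpha\sigma^x_BQ_\alpha)$, then drop the non-negative $\alpha\neq\alpha^*$ terms) is exactly the step behind the first inequality in \eqref{eqn: qubit_CQ_entropy}. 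The paper writes the block decomposition there but does not spell that step out. Also note that $(Q_{\alpha^*}\ot\id_E)\sigma^x_{BE}(Q_{\alpha^*}\ot\id_E)/\eta_x$ is automatically pure when $\sigma^x_{BE}$ is, which is what \eqref{eqn: qubit_entropy} needs.

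The one place your write-up would not go through as proposed is the degenerate case $\tilde M_0\in\{0,\id\}$, which the paper passes over silently. None of your three remedies works.
\begin{enumerate}
\item There is no small perturbation taking $0$ or $\id$ to a rank-one projector. Any rank-one replacement generally shifts $\tr(P_x\tilde\sigma^x_B)$ by a non-vanishing amount, which can break the score sandwich when it is tight.
\item The infimum over rank-one projectors cannot approach a negative value such as $-\eta_x p_X(x) H(\tilde\sigma^x_B)$ for mixed $\tilde\sigma^x_B$. For rank-one $P_0$, $\tr(P_0\tilde\sigma^x_B)$ lies between the two eigenvalues of $\tilde\sigma^x_B$. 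Hence $H_{\bin}(\tr(P_0\tilde\sigma^x_B))\geq H(\tilde\sigma^x_B)$, and every objective value of \eqref{eqn: Gopt} is non-negative.
\item Admitting trivial projectors only proves a weaker inequality than the one stated.
\end{enumerate}

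What you actually need is a rank-one feasible point whose objective is at most $H(Y|X,T=0,E)_\rho$, and value $0$ suffices. You get it by refining the decomposition:
\begin{itemize}
\item If $P_0$ acts as a scalar on a two-dimensional $\M{H}_{\alpha^*}$, then $\mathrm{span}\{\ket{0}\}$ and its orthogonal complement inside $\M{H}_{\alpha^*}$ are both invariant under $P_0$ and $\Pi_0$. So you may take $\M{H}_{\alpha^*}=\mathrm{span}\{\ket{0}\}$.
\item If $P_0$ is not a scalar on a two-dimensional $\M{H}_{\alpha^*}$, then $\tilde M_0$ is already rank one and nothing needs fixing.
\item When $\M{H}_{\alpha^*}=\mathrm{span}\{\ket{0}\}$, you have $\eta_x=\bra{0}\sigma^x_B\ket{0}$ and $\tilde\sigma^x_B=\proj{0}$. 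Embed into $\mathbb{C}^2$ and use the qubit measurement $\{\proj{0},\proj{1}\}$, labelled so that its outcome-$0$ element is $\proj{0}$ exactly when $P_0\ket{0}=\ket{0}$.
\end{itemize}
The overlap equality and the score sandwich then follow from the same computations as \eqref{eqn: qubit overlap} and \eqref{eqn: qubit-score}, and the objective equals $0\leq H(Y|X,T=0,E)_\rho$.
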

\begin{proof}
The proof follows immediately from Equations~\eqref{eqn: qubit overlap},~\eqref{eqn: qubit-score},~\eqref{eqn: qubit_CQ_entropy} and~\eqref{eqn: qubit_entropy}. 
\end{proof}

\subsection{Simplifying the qubit optimization problem} 
In this section, we simplify the optimization problem \eqref{eqn: Gopt}. Our primary objective is to reformulate the problem into an optimization over a reduced number of bounded real variables. This transformation not only reduces the complexity of the problem but also renders it tractable for solving using known numerical techniques. 
\subsubsection{Optimization problem in terms of bounded real variables}
The optimization in Lemma \ref{lemm: Jodran's decomposition corollary} is expressed in terms of the states $\tilde{\sigma}^{x}_{B}$ and the projection operator $P_0$. We would like to re-express it in terms of a standard optimization problem on a bounded domain $\mathcal{D} \subseteq \mathbb{R}^{n}$. To achieve this, we write
\begin{align}\label{eqn: parametrization_of_states_and_measurements}
\tilde{\sigma}_{B}^{x}  = \frac{\id}{2} + \sum_{i=1}^{3} \frac{a_i^{x}}{2} \hat{\boldsymbol{\sigma}}_i , \quad  P_0  = \ketbra{\psi_0}{\psi_0} = \frac{\id}{2} + \sum_{i=1}^{3} \frac{b_i}{2} \hat{\boldsymbol{\sigma}}_i, \text{\ \ and\ \ } \Pi_0 = \ketbra{0}{0} = \frac{\id}{2} + \frac{\hat{\boldsymbol{\sigma}}_3}{2}, 
\end{align}
where $\hat{\boldsymbol{\sigma}_{1}} , \hat{\boldsymbol{\sigma}_2}$ and $\hat{\boldsymbol{\sigma}_3}$ are Pauli $x$, $y$ and $z$ respectively and $\sum_{i} (a_i^x)^2 \leq 1$ and $\sum_{i} b_i^2 = 1$ must be satisfied to ensure that the parameters $\{a_i^x\}_{i=1}^{3}$ and $\{b_i\}_{i=1}^{3}$ represent a valid density operator and a projective measurement, respectively. We now have an optimization problem over $11$ variables: 3 parameters for each state $\tilde{\sigma}^{x}_{B}$, three parameters describing the measurement operator and two parameters $\eta_0$ and $\eta_1$ that give the probability of the Jordan block $\M{H}_{0}$ occurring. Thus, the optimization problem can be cast as an optimization problem over a compact subset of $\mathbb{R}^{d}$, where $d=11$ at this stage. We now reduce the value of $d$ to simplify the optimization problem by removing some redundant variables.

Before simplifying this optimization problem further, we introduce the function $\Phi:[-1,1]\to\mathbb{R}$ given by
\begin{eqnarray}
  \Phi(x):=H_{\bin}\left(\frac{1}{2}+\frac{x}{2}\right)
\end{eqnarray}
to keep the expressions more compact.
Several properties of $\Phi(x)$ such as its monotonicity for $x > 0$ can be directly inferred from the properties of the binary entropy $H_{\bin}$. Some other useful properties of $\Phi(x)$ that are relevant to our proof are discussed in Appendix~\ref{app: useful claims}. We now give our first simplification.
\begin{lemma}
The optimization problem \eqref{eqn: Gopt} has the same value as the following optimization problem
\begin{eqnarray}\label{eqn: Gopt_standard}
 \begin{aligned}
     \min \quad &  \sum_{x}   \eta_{x} p_{X}(x) \left(\Phi\left( a_{x} \cos(\theta_x - \phi)\right) - \Phi \left(a_{x} \right)\right)  \\
\mathrm{s.t.}
\quad & \forall x \in \{ 0 , 1\} : \eta_{x} , a_{x} \in [0 , 1] \\
\quad & \forall x \in \{ 0 , 1 \}: \theta_{x} \in [0 , 2 \pi] \\
\quad & \phi \in [0 , 2 \pi] \\
\quad &  \sum_{x}  \frac{1}{2} \eta_{x} \left( \frac{1}{2} + \frac{(-1)^{x} a_{x} \cos(\theta_x - \phi)}{2}\right) \in [\score - \sum_{x} \frac{1}{2}(1 - \eta_{x}), \score]   \\
              \quad &   \sum_{x} \frac{1}{2}\eta_{x}  \left( \frac{1}{2} + \frac{a_{x} \cos(\theta_{x})}{2}\right) = \overlap .
 \end{aligned} 
\end{eqnarray}
\end{lemma}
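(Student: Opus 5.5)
The plan is to substitute the Bloch parametrization \eqref{eqn: parametrization_of_states_and_measurements} into \eqref{eqn: Gopt} and then remove redundant variables using symmetries that preserve the objective and all constraints. With $\tilde{\sigma}^x_B$ having Bloch vector $\mathbf{a}^x$ and $P_0$ having unit Bloch vector $\mathbf{b}$, we get $\tr(P_0\tilde{\sigma}^x_B)=\tfrac12+\tfrac12\mathbf{a}^x\cdot\mathbf{b}$. Since $P_1=\id-P_0$, we also get $\tr(P_x\tilde{\sigma}^x_B)=\tfrac12+\tfrac12(-1)^x\mathbf{a}^x\cdot\mathbf{b}$. Further, $\tr(\Pi_0\tilde{\sigma}^x_B)=\tfrac12+\tfrac12 a^x_3$, and $H(\tilde{\sigma}^x_B)=\Phi(|\mathbf{a}^x|)$ because the eigenvalues are $\tfrac12\pm\tfrac12|\mathbf{a}^x|$. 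The objective therefore becomes $\sum_x\eta_x p_X(x)\bigl(\Phi(\mathbf{a}^x\cdot\mathbf{b})-\Phi(|\mathbf{a}^x|)\bigr)$. The score constraint, after moving $\sum_x\tfrac12(1-\eta_x)$ to the other side, becomes exactly the interval condition in \eqref{eqn: Gopt_standard}. The overlap constraint depends only on the $z$-components $a^x_3$.

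Next, reduce to a plane. All quantities depend only on $|\mathbf{a}^x|$, $a^x_3$ and $\mathbf{a}^x\cdot\mathbf{b}$.

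1. A rotation about the $z$-axis fixes $\Pi_0$. Applying one simultaneously to both states and the measurement preserves every quantity, so we may assume $\mathbf{b}$ lies in the $x$–$z$ plane, $\mathbf{b}=(\sin\phi,0,\cos\phi)$.
2. Write $a_x=|\mathbf{a}^x|\in[0,1]$ and decompose $\mathbf{a}^x$ into a component in the $x$–$z$ plane and a $y$-component.
3. Take the in-plane vector $\mathbf{a}'^x=a_x(\sin\theta_x,0,\cos\theta_x)$, keeping the same $z$-component $a^x_3=a_x\cos\theta_x$ and the same length. This is possible because the in-plane projection has length at most $a_x$, and the replacement can reproduce the original $z$-component and the original value of $\mathbf{a}^x\cdot\mathbf{b}$.

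This last point is the main obstacle. The original vector satisfies $\mathbf{a}^x\cdot\mathbf{b}=a^x_1\sin\phi+a^x_3\cos\phi$. The replacement must keep $a^x_3$ fixed and choose $a'^x_1=\pm\sqrt{a_x^2-(a^x_3)^2}$, which gives a different dot product.

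I would therefore handle the two directions of the equality separately. For $\le$, every in-plane feasible point of \eqref{eqn: Gopt_standard} is a feasible point of \eqref{eqn: Gopt} with the same value, so \eqref{eqn: Gopt} is at most \eqref{eqn: Gopt_standard}. For $\ge$, I would avoid the length-preserving replacement. Instead, keep $a^x_1,a^x_3$ and drop $a^x_2$, so the Bloch vector shrinks to $\mathbf{a}''^x=(a^x_1,0,a^x_3)$. This preserves the overlap, the score and $\Phi(\mathbf{a}^x\cdot\mathbf{b})$, but it changes $-\Phi(|\mathbf{a}^x|)$ to $-\Phi(|\mathbf{a}''^x|)$. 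Since $|\mathbf{a}''^x|\le|\mathbf{a}^x|$ and $\Phi$ is decreasing on $[0,1]$, this increases the objective, which is the wrong direction.

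To fix this, I would exploit the freedom in $\mathbf{b}$ instead. Equivalently, I would argue that at an optimum the vectors $\mathbf{a}^0,\mathbf{a}^1,\mathbf{b},\hat z$ can be taken coplanar. The idea is to rotate each $\mathbf{a}^x$ about the $z$-axis independently: this preserves $a_x$ and $a^x_3$, and only moves $\mathbf{a}^x\cdot\mathbf{b}$ within an interval whose endpoints are attained in the plane containing $\hat z$ and $\mathbf{b}$. The projected dot product is then a continuous function of the rotation angle ranging over an interval that includes the original value, and its endpoints correspond to in-plane vectors. So for each $x$ there is some in-plane choice of $\theta_x$ giving any prescribed value of the dot product in that range, in particular the original one.

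With that argument, each $\mathbf{a}^x$ can be replaced by $a_x(\sin\theta_x,0,\cos\theta_x)$ with all quantities unchanged. Then $\mathbf{a}^x\cdot\mathbf{b}=a_x\cos(\theta_x-\phi)$ and $a^x_3=a_x\cos\theta_x$, which is exactly \eqref{eqn: Gopt_standard}. Finally, the infimum can be replaced by a minimum because the domain is compact and the objective and constraints are continuous.
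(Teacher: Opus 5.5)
Your reduction to $\mathbf{b}=(\sin\phi,0,\cos\phi)$ is fine. So is the direction showing that \eqref{eqn: Gopt} is at most the in-plane problem. The other direction has a genuine gap: the rotation argument fails at its last step. Fix the length $a_x$ and the $z$-component $a^x_3$, and let $r=\sqrt{a_x^2-(a^x_3)^2}$. Then only two vectors in the $x$--$z$ plane qualify, namely $(\pm r,0,a^x_3)$. They give exactly the two endpoints $a^x_3\cos\phi\pm r|\sin\phi|$ of the interval swept by $\mathbf{a}^x\cdot\mathbf{b}$ under rotation about $\hat z$. The intermediate value theorem along the rotation orbit produces the intermediate values only at rotated vectors with nonzero $y$-component, which lie out of the plane. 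So if $a^x_2\neq0$ and $\sin\phi\neq0$, no in-plane $\theta_x$ with length $a_x$ reproduces the original dot product. This is the same obstruction you found a paragraph earlier; rotating does not remove it.

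The missing step is the one you discarded, and you discarded it because of a sign error. Dropping the $y$-component, $\mathbf{a}''^x=(a^x_1,0,a^x_3)$, preserves $\mathbf{a}^x\cdot\mathbf{b}$ and $a^x_3$. Hence it preserves the score, the overlap and the term $\Phi(\mathbf{a}^x\cdot\mathbf{b})$. Since $|\mathbf{a}''^x|\le|\mathbf{a}^x|$ and $\Phi$ is decreasing on $[0,1]$, you get $\Phi(|\mathbf{a}''^x|)\ge\Phi(|\mathbf{a}^x|)$. The term $-\Phi(|\mathbf{a}''^x|)$ is therefore \emph{smaller}, so the objective does not increase. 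That is exactly the direction you need: every feasible point of \eqref{eqn: Gopt} is dominated by an in-plane feasible point, with $a_x:=|\mathbf{a}''^x|$ and $\theta_x$ the polar angle of $\mathbf{a}''^x$. This is the paper's argument. It writes $\mathbf{a}^x=(a_x\sin\theta_x,a'_x,a_x\cos\theta_x)$, notes that the constraints do not depend on $a'_x$, and uses $-\Phi(\sqrt{a_x^2+a_x'^2})\ge-\Phi(a_x)$ to conclude that the optimum occurs at $a'_x=0$.
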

\begin{proof}
First, note that the objective function and constraints in the optimization problem \eqref{eqn: Gopt} can be expressed solely in terms of $\tr\left(P_{0} \tilde{\sigma}^{x}_{B} \right)$, $\tr\left( \Pi_{0} \tilde{\sigma}^{x}_{B} \right)$, and $H(\tilde{\sigma}^{x}_{B})$. Using the parametrization \eqref{eqn: parametrization_of_states_and_measurements} we can compute the objective function and the constraints:
\begin{eqnarray}
\tr \left(P_{0} \tilde{\sigma}^{x}_{B}\right) &=& \frac{1}{2} + \frac{\B{a}^{x}.\B{b}}{2} \nonumber \\
\tr \left(\Pi_0 \tilde{\sigma}^{x}_{B}\right) &=& \frac{1}{2} + \frac{\B{a}^{x}.\hat{\B{z}}}{2} .
 \nonumber 
\end{eqnarray}
Without loss of generality, we can choose a basis such that $\hat{\B{z}} = (0 , 0 , 1)$ and $\B{b} = (\sin(\phi) , 0 , \cos(\phi))$. Furthermore, we write $\B{a}^{x} = ( a_x \sin(\theta_x) , a_x' ,  a_x \cos(\theta_x))$, taking $a_x\in[0,1]$ to give 
\begin{eqnarray}
\tr \left(P_{0} \tilde{\sigma}^{x}_{B}\right) &=& \frac{1}{2} + \frac{a_{x} \cos(\theta_{x} - \phi)}{2} \nonumber \\
\tr \left(\Pi_0 \tilde{\sigma}
^{x}_{B}\right) &=& \frac{1}{2} + \frac{a_{x} \cos(\theta_{x})}{2}. 
 \nonumber 
\end{eqnarray}
The entropy $H(\tilde{\sigma}^{x}_{B})$ can be computed in terms of its eigenvalues $\frac{1}{2} + \frac{|\B{a}^{x}|}{2}$ and $\frac{1}{2} - \frac{|\B{a}^{x}|}{2}$ as
\begin{eqnarray}
H(\tilde{\sigma}_{B}^{x}) = H_{\bin}\left(\frac{1}{2} + \frac{|\B{a}^{x}|}{2} \right) = \Phi\left(|\B{a}^{x}| \right). \nonumber
\end{eqnarray}
We can use the monotonicity of $\Phi(x)$ for $x>0$ to give 
\begin{eqnarray}
   H(\tilde{\sigma}_{B}^{x})  = \Phi\left(\sqrt{a_x^2 + a_x'^2 }\right)\leq\Phi(a_x), \nonumber
\end{eqnarray}
with equality if $a'_x=0$. Since $\tr\left(P_0\sigma^x_B\right)$ and $\tr\left(\Pi_0 \sigma^x_B\right)$ are independent of $a_x'$, setting $a'_x$ to zero has no effect on the constraints, and hence the optimum occurs when $a'_x=0$. 

Finally, note that $\eta_x \in [0,1]$, $a_x \in [0,1]$, and $\theta_x \in [0,2\pi]$, $\phi \in [0,2\pi]$ can all be taken to lie in closed and bounded sets. Consequently, the feasible region of the optimization problem is compact. Since the objective function is continuous over this region, the optimum is always attained. In particular, the minimum exists, and therefore the infimum in the optimization problem may be replaced by a minimum.

Combining the above results and performing the substitutions in \eqref{eqn: Gopt}, establishes the claim.
\end{proof}

\subsubsection{Extending the domain} 
To compute the rates, we want to solve the optimization problem \eqref{eqn: Gopt_standard} using numerical techniques. We find that this optimization problem is too resource intensive to solve directly, and therefore, we aim to lower bound this problem, which can be effectively handled numerically. By Lemma~\ref{Claim: monotonicity in x} (see Appendix~\ref{app: useful claims}), we can lower bound the objective function by noting that 
\begin{eqnarray}
    \eta_{x} \left(\Phi\left( a_{x} \cos(\xi_{x})\right) - \Phi \left(a_{x} \right)\right)  \geq \Phi\left( \eta_{x} a_{x} \cos(\xi_{x})\right) - \Phi \left(\eta_{x} a_{x} \right) .
\end{eqnarray} 
Furthermore, we can replace the constraints 
\begin{eqnarray}
  \sum_{x}  \frac{1}{2} \eta_{x} \left( \frac{1}{2} + \frac{(-1)^{x} a_{x} \cos(\theta_x - \phi)}{2}\right) &\in& [\score - \sum_{x} \frac{1}{2}(1 - \eta_{x}), \score]  \nonumber  \\
  \sum_{x} \frac{1}{2}\eta_{x}  \left( \frac{1}{2} + \frac{a_{x} \cos(\theta_{x})}{2}\right) &=& \overlap \nonumber
\end{eqnarray}
of the optimization problem \eqref{eqn: Gopt_standard} by the relaxed constraints
\begin{eqnarray}
  \sum_{x}  \frac{1}{2} \eta_{x} \left( \frac{1}{2} + \frac{(-1)^{x} a_{x} \cos(\theta_x - \phi)}{2}\right) &\geq& \score - \frac{1}{2} \sum_{x}(1 - \eta_{x})   \nonumber  \\
  \sum_{x} \frac{1}{2}\eta_{x}  \left( \frac{1}{2} + \frac{a_{x} \cos(\theta_{x})}{2}\right) &\geq&  \overlap. \nonumber
\end{eqnarray}
This relaxation cannot increase the minimal value, as the feasible set of \eqref{eqn: Gopt_standard} is a subset of the relaxed feasible set. 

For convenience, we replace $a_x$ with $\tilde{a}_x=\eta_x a_x$ and the constraint $a_x \in [0, 1]$ with $\tilde{a}_x\leq\eta_x$, which eliminates $a_x$ from the optimization. In addition, we write $\xi_x=\theta_x - \phi$. The next Lemma summarizes the above.
\begin{lemma}\label{lemm: one more lower bound}
A lower bound on the function $\tilde{G}_{p_{X}}(\score, \overlap) $ can be achieved by computing the optimization problem 
\begin{eqnarray}\label{eqn : Non-polynomial optimization problem}
 \begin{aligned} \entropy_{p_{X}}(\score , \overlap) :=   
     \inf \quad &  \sum_{x}  p_{X}(x) \left(\Phi\left( \tilde{a}_{x} \cos(\xi_{x})\right) - \Phi \left(\tilde{a}_{x} \right)\right)  \\
\mathrm{s.t.} 
\quad &  \forall x \in \{ 0 , 1 \}: 1 \geq \eta_{x} \geq \tilde{a}_{x} \geq 0 \\
\quad & \forall x: \xi_{x} \in [0 , 2 \pi] \\
\quad&\phi\in[0,2\pi]\\
\quad &  \sum_{x}    \left( -\eta_{x} + (-1)^{x}  \tilde{a}_{x} \cos(\xi_{x})\right) \geq 4 \score -4  \\
              \quad &   \sum_{x}   \left( \eta_{x} + \tilde{a}_{x} \cos(\xi_{x} + \phi)\right) \geq  4 \overlap. 
 \end{aligned}  
\end{eqnarray} 
\end{lemma}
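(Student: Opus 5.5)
The plan is to start from the formulation \eqref{eqn: Gopt_standard}, whose value equals $\tilde{G}_{p_{X}}(\score,\overlap)$ by the preceding lemma. I will show that every feasible point of \eqref{eqn: Gopt_standard} can be mapped to a feasible point of \eqref{eqn : Non-polynomial optimization problem}, and that this map does not increase the objective. Taking infima on both sides then gives $\entropy_{p_X}(\score,\overlap)\le\tilde{G}_{p_X}(\score,\overlap)$, which is the claim. The map is the change of variables already described before the statement. Given feasible $(\eta_x,a_x,\theta_x,\phi)$, I set $\tilde a_x:=\eta_x a_x$, $\xi_x:=\theta_x-\phi$ (reduced mod $2\pi$), and keep $\eta_x$ and $\phi$ unchanged.

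The first step is the box constraints. Since $a_x,\eta_x\in[0,1]$, we have $0\le \eta_x a_x\le \eta_x\le 1$, so $1\ge\eta_x\ge\tilde a_x\ge 0$. Also $\xi_x$ and $\phi$ lie in $[0,2\pi]$ after reduction, and cosine is $2\pi$-periodic, so nothing else changes.

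The second step is the two linear constraints.
\begin{itemize}
\item \emph{Score.} Take the lower half of the score interval in \eqref{eqn: Gopt_standard}, multiply by $4$, and move the $\eta$ terms across. This turns $\sum_x \eta_x\bigl(1+(-1)^x a_x\cos(\theta_x-\phi)\bigr)\ge 4\score-2\sum_x(1-\eta_x)$ into $\sum_x\bigl(-\eta_x+(-1)^x\tilde a_x\cos\xi_x\bigr)\ge 4\score-4$. This is exactly the relaxed score constraint; the upper bound $\le\score$ is simply discarded.
\item \emph{Overlap.} Multiplying the overlap equality by $4$ gives $\sum_x(\eta_x+\tilde a_x\cos(\xi_x+\phi))=4\overlap$, using $\theta_x=\xi_x+\phi$. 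This in particular satisfies the relaxed inequality $\ge 4\overlap$.
\end{itemize}
So the image point is feasible for \eqref{eqn : Non-polynomial optimization problem}. Enlarging the feasible set in this way can only lower the infimum.

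The third step is the objective, term by term in $x$. Since $\cos(\theta_x-\phi)=\cos\xi_x$, I need
\[
\eta_x\bigl(\Phi(a_x\cos\xi_x)-\Phi(a_x)\bigr)\ \ge\ \Phi(\eta_x a_x\cos\xi_x)-\Phi(\eta_x a_x),
\]
and then I multiply by $p_X(x)\ge0$ and sum over $x$. This inequality is precisely what Lemma~\ref{Claim: monotonicity in x} of Appendix~\ref{app: useful claims} supplies, and I will invoke it directly. It is the only nontrivial ingredient and hence the main obstacle. If I had to justify it myself, I would set $c=\cos\xi_x$, $a=a_x$ and consider $g(\eta)=\Phi(\eta a c)-\Phi(\eta a)-\eta(\Phi(ac)-\Phi(a))$ on $[0,1]$. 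I would show $g\le 0$ from $g(0)=g(1)=0$ together with convexity of $g$ in $\eta$, which is a derivative/concavity computation with $\Phi(t)=H_{\bin}(\tfrac12+\tfrac t2)$.

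Combining the three steps, every value attained in \eqref{eqn: Gopt_standard} is at least some value attained in \eqref{eqn : Non-polynomial optimization problem}. Hence $\entropy_{p_X}(\score,\overlap)\le\tilde G_{p_X}(\score,\overlap)$.
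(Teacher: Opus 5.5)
Your proposal is correct and follows the paper's route. You lower-bound each term of the objective of \eqref{eqn: Gopt_standard} via Lemma~\ref{Claim: monotonicity in x}, drop the upper score bound, weaken the overlap equality to an inequality, and substitute $\tilde a_x=\eta_x a_x$, $\xi_x=\theta_x-\phi$, which yields $\entropy_{p_X}(\score,\overlap)\le\tilde G_{p_X}(\score,\overlap)$. Your fallback argument for the key inequality (convexity in $\eta$, using $\frac{a^2c^2}{1-\eta^2a^2c^2}\le\frac{a^2}{1-\eta^2a^2}$, together with $g(0)=g(1)=0$) is also valid, and it rests on the same convexity fact as the paper's appendix proof.
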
 

We now note two important properties of \(\entropy_{p_{X}}(\score , \overlap)\).

\begin{lemma}
\label{lem:monotonicity-zero}
\(\entropy_{p_X}(\score,\overlap)\) is nondecreasing in \(\score,\overlap\). Moreover, \(\M{G}_{p_X}(\score,\overlap)=0\) whenever \(\score\le\tfrac{1}{2}\) or \(\overlap\le\tfrac{1}{2}\).
\end{lemma}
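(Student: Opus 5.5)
The plan is to read both claims directly off the structure of the optimization problem~\eqref{eqn : Non-polynomial optimization problem} that defines $\entropy_{p_X}(\score,\overlap)$. We never need to solve it. Monotonicity comes from nestedness of the feasible sets. The zero statement comes from a nonnegativity bound on the objective, matched by explicit feasible points that attain $0$.

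\emph{Monotonicity.} The parameters $(\score,\overlap)$ enter~\eqref{eqn : Non-polynomial optimization problem} only through the right-hand sides $4\score-4$ and $4\overlap$ of the last two inequality constraints. The objective itself does not depend on them. Denote the feasible set by $\mathcal{F}(\score,\overlap)$.
\begin{itemize}
\item If $\score\le\score'$ and $\overlap\le\overlap'$, then any point satisfying the constraints with $(\score',\overlap')$ also satisfies them with $(\score,\overlap)$, so $\mathcal{F}(\score',\overlap')\subseteq\mathcal{F}(\score,\overlap)$.
\item An infimum over a smaller set is at least as large, hence $\entropy_{p_X}(\score,\overlap)\le\entropy_{p_X}(\score',\overlap')$.
\item If $\mathcal{F}(\score',\overlap')$ is empty, we use the convention $\inf\emptyset=+\infty$, and the inequality still holds.
\end{itemize}

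\emph{Nonnegativity of the objective.} The function $\Phi(x)=H_{\bin}(\tfrac12+\tfrac x2)$ is even on $[-1,1]$ and nonincreasing in $|x|$ (this is the monotonicity for $x>0$ already noted, together with symmetry of $H_{\bin}$). Since $\tilde a_x\ge 0$, we have $|\tilde a_x\cos\xi_x|\le\tilde a_x$, so every term satisfies $\Phi(\tilde a_x\cos\xi_x)-\Phi(\tilde a_x)\ge0$. Hence $\entropy_{p_X}\ge 0$ everywhere. It therefore remains only to exhibit, in each of the two regimes, a feasible point whose objective value is $0$.

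\emph{Case $\score\le\tfrac12$.} Take $\eta_0=\eta_1=\tilde a_0=\tilde a_1=1$, $\xi_0=\xi_1=0$ and $\phi=0$.
\begin{itemize}
\item The objective is $\sum_x p_X(x)(\Phi(1)-\Phi(1))=0$.
\item The score constraint has left-hand side $(-1+1)+(-1-1)=-2$, and $-2\ge4\score-4$ exactly when $\score\le\tfrac12$.
\item The overlap constraint has left-hand side $2+2=4\ge4\overlap$ for every $\overlap\le1$.
\end{itemize}

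\emph{Case $\overlap\le\tfrac12$.} Take $\eta_x=\tilde a_x=1$, $\xi_0=0$, $\xi_1=\pi$ and $\phi=0$.
\begin{itemize}
\item The objective is $p_X(0)(\Phi(1)-\Phi(1))+p_X(1)(\Phi(-1)-\Phi(1))=0$, using $\Phi(\pm1)=0$.
\item The score constraint has left-hand side $(-1+1)+(-1+1)=0\ge4\score-4$ for any $\score\le1$.
\item The overlap constraint has left-hand side $(1+1)+(1-1)=2$, and $2\ge4\overlap$ exactly when $\overlap\le\tfrac12$.
\end{itemize}

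In both cases the infimum is therefore exactly $0$. The only delicate step is the nonnegativity bound, which relies on $\Phi$ being even and decreasing in $|x|$ (cf.\ Appendix~\ref{app: useful claims}). Everything else is a direct check of the constraints. Implicitly we take $(\score,\overlap)\in[0,1]^2$, the range where these quantities are probabilities.
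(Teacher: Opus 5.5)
Your proof is correct. The monotonicity part matches the paper's argument: the feasible sets are nested and the objective does not depend on $(\score,\overlap)$.

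For the vanishing claim you take a somewhat different route. The paper builds $(\score,\overlap)$-dependent feasible points on the two boundary lines: on $\overlap=\tfrac12$ with $\tilde a_0=\tilde a_1=|2\score-1|$, and on $\score=\tfrac12$ with $\tilde a_0=\tilde a_1=|2\overlap-1|$. These points make both constraints tight and give zero objective. The paper then pushes $\entropy_{p_X}=0$ into the regions $\score\le\tfrac12$ and $\overlap\le\tfrac12$ using monotonicity. That step only yields $\entropy_{p_X}\le 0$. The paper never states the nonnegativity of the objective that closes the argument; its remark that a feasible point with $\xi_x\in\{0,\pi\}$ ``gives'' $\entropy_{p_X}=0$ already presupposes it.

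You prove nonnegativity explicitly, from the evenness of $\Phi$ and its monotonicity in $|x|$. You then give one fixed witness per region, with $\eta_x=\tilde a_x=1$. Because of the slack in the relaxed inequality constraints, this witness is feasible for every $\score\le\tfrac12$ (respectively every $\overlap\le\tfrac12$). So your second claim needs no monotonicity at all. Your version is shorter and states the nonnegativity step the paper leaves tacit. The paper's witnesses carry slightly more information: the zero value is attained with both constraints tight along the boundary lines.
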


\begin{proof}
Note that the values of $\score$ and $\overlap$ do not appear in the objective function of optimization problem \eqref{eqn : Non-polynomial optimization problem}. Thus, it suffices to show that the set of feasible points for the optimization problem $\entropy_{p_{X}}(\score + \delta \score , \overlap + \delta \overlap)$ is a subset of the feasible set for the optimization problem for $\entropy_{p_{X}}(\score  + \delta \score , \overlap)$. This follows because
\begin{eqnarray*}
    \sum_{x}    \left( -\eta_{x} + (-1)^{x} \tilde{a}_{x} \cos(\xi_{x})\right)) \geq  4 (\score + \delta \score) -4 \geq 4(\score) - 4   \\ 
    \sum_{x}   \left( \eta_{x} + \tilde{a}_{x} \cos(\xi_{x} + \phi)\right) \geq  4 (\overlap + \delta \overlap) \geq 4 \delta \overlap,
\end{eqnarray*}
and hence \(\entropy_{p_X}(\score,\overlap)\) is nondecreasing in \(\score,\overlap\).

For the second statement, note that if there exist feasible parameters with \(\xi_0,\xi_1\in\{0,\pi\}\), then the objective function equals zero, i.e., \(\entropy_{p_X}(\score,\overlap)=0\). \\ 

We first show \(\M{G}_{p_X}(\score,\tfrac{1}{2})=0\) for all \(\score\in[0,1]\).
If \(\score\ge\tfrac{1}{2}\), take \(\eta_x=1\), \(\tilde a_0=\tilde a_1=a=2\score-1\), \(\xi_0=0\), \(\xi_1=\pi\), and any \(\phi\).
Then
\(
\sum_x\bigl(\eta_x+\tilde a_x\cos(\xi_x+\phi)\bigr)=2=4\cdot\tfrac{1}{2},
\)
and
\(
\sum_x\bigl(-\eta_x+(-1)^x\tilde a_x\cos\xi_x\bigr)
= 2(a-1)=4\score-4,
\)
so feasibility holds with \(\overlap=\tfrac{1}{2}\) and \(\score \geq \frac{1}{2}\). \\ 
If \(\score\le\tfrac{1}{2}\), take \(\eta_x=1\), \(\tilde a_0=\tilde a_1=a=1-2\score\), \(\xi_0=\pi\), \(\xi_1=0\), and \(\phi=\pi\).
Then again
\( 
\sum_x\bigl(\eta_x+\tilde a_x\cos(\xi_x+\phi)\bigr)=2=4\cdot\tfrac{1}{2},
\)
and
\(\sum_x\bigl(-\eta_x+(-1)^x\tilde a_x\cos\xi_x\bigr)
= -2(1+a)=4\score-4,
\) 
so \(\M{G}_{p_X}(\score,\tfrac{1}{2})=0\) for all \(\score\in[0,1]\). \\

Similarly, \(\M{G}_{p_X}(\tfrac{1}{2},\overlap)=0\) for all \(\overlap\in[0,1]\).
Take \(\eta_x=1\), \(\xi_0=\xi_1=0\).
Then the score constraint gives
\[
\sum_x\bigl(-\eta_x+(-1)^x\tilde a_x\cos\xi_x\bigr)= 4\cdot \frac{1}{2}-4,
\]
i.e., \(\score=\tfrac{1}{2}\).
For the overlap constraint:
\begin{itemize}
\item If \(\overlap\ge\tfrac{1}{2}\), set \(\phi=0\), \(\tilde a_0=\tilde a_1=a=2\overlap-1\):
\(\sum_x(\eta_x+\tilde a_x\cos(\xi_x+\phi)) =2+2a=4\overlap\).
\item If \(\overlap\le\tfrac{1}{2}\), set \(\phi=\pi\), \(\tilde a_0=\tilde a_1=a=1-2\overlap\):
\(\sum_x(\eta_x+\tilde a_x\cos(\xi_x+\phi))= 2-2a=4\overlap\).
\end{itemize}
Thus \(\M{G}_{p_X}(\tfrac{1}{2},\overlap)=0\) for all \(\overlap\).

By monotonicity in both $\score$ and $\overlap$ we have that, \(\M{G}_{p_X}(\score,\overlap)=0\) whenever \(\score\le\tfrac{1}{2}\) or \(\overlap\le\tfrac{1}{2}\).
\end{proof}

\subsubsection{Eliminating an additional variable} 

In this subsection we eliminate $\phi$ from the optimization problem. To do so, we rely on the fact that $\phi$ only features in one constraint. The general result is given in Lemma~\ref{lemm: maximizing the constraint} before being applied to our case. 
\begin{lemma}\label{lemm: maximizing the constraint}
Let $I$ be a compact subset of $\mathbb{R}$. Consider the optimization problem  
\begin{eqnarray}\label{eqn: general optimization : 1}
 \begin{aligned}
     \min \quad &  f(x_1 ,x_2 , \cdots , x_{n}) \\
\mathrm{s.t.}
\quad & \forall i: h_i(x_1 , x_2 , \cdots , x_{n}) \geq 0\\
\quad &  g(x_1 , x_2 , \cdots , x_{n} , y) \geq 0\\
\quad & y\in  I
\end{aligned}
\end{eqnarray}
is equivalent to
\begin{eqnarray}\label{eqn: general optimization}
 \begin{aligned}
     \min \quad &  f(x_1 ,x_2 , \cdots , x_{n}) \\
\mathrm{s.t.}
\quad & \forall i: h_i(x_1 , x_2 , \cdots , x_{n}) \geq 0\\
\quad &  g(x_1 , x_2 , \cdots , x_{n} , y) \geq 0 \\
\quad &  y \in Y^\ast,
\end{aligned} 
\end{eqnarray}
where $Y^\ast$ is the set
\[
Y^\ast(x_1,\dots,x_n) \;:=\; \operatorname*{arg\,max}_{y \in I} \; g(x_1,\dots,x_n,y).
\]
\end{lemma}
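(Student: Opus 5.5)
The plan is to show that the two problems have the same feasible set of $(x_1,\dots,x_n)$ and the same objective. Since $y$ does not appear in $f$, the value of each problem depends only on its projection onto the $x$-variables. For the first problem, $\mathbf{x}=(x_1,\dots,x_n)$ is feasible exactly when $h_i(\mathbf{x})\ge 0$ for all $i$ and there exists some $y\in I$ with $g(\mathbf{x},y)\ge 0$. For the second problem, we additionally require the witnessing $y$ to lie in $Y^\ast(\mathbf{x})$. It therefore suffices to prove, for fixed $\mathbf{x}$, that
\[
\exists\, y\in I:\ g(\mathbf{x},y)\ge 0 \iff \exists\, y\in Y^\ast(\mathbf{x}):\ g(\mathbf{x},y)\ge 0 .
\]

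The direction ($\Leftarrow$) is immediate because $Y^\ast(\mathbf{x})\subseteq I$. For ($\Rightarrow$), we first need $Y^\ast(\mathbf{x})$ to be nonempty. I would assume, as holds in the intended application, that $g$ is continuous in $y$; the application has $g=\sum_x(\eta_x+\tilde a_x\cos(\xi_x+\phi))-4\overlap$ with $y=\phi$. Then, since $I$ is compact, the maximum of $g(\mathbf{x},\cdot)$ over $I$ is attained and $Y^\ast(\mathbf{x})\neq\emptyset$. Take any $y^\ast\in Y^\ast(\mathbf{x})$. If some $y\in I$ has $g(\mathbf{x},y)\ge 0$, then $g(\mathbf{x},y^\ast)=\max_{I}g(\mathbf{x},\cdot)\ge g(\mathbf{x},y)\ge 0$, so $y^\ast$ witnesses feasibility in the second problem.

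With the feasible sets of $\mathbf{x}$ identified, both problems minimise the same function over the same set, so their values coincide. If the minimum is attained in one problem, the same $\mathbf{x}$ paired with any $y^\ast\in Y^\ast(\mathbf{x})$ attains it in the other. Thus the problems are equivalent in both value and minimisers.

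The only subtle point is the nonemptiness of the argmax. The argument needs upper semicontinuity of $g$ in $y$ together with compactness of $I$, and I would state this as an implicit hypothesis. Should the lemma be intended without it, one can instead interpret $Y^\ast$ via a supremum and note that the equivalence may then fail. For the downstream use, $g$ is smooth, and the argmax can even be computed explicitly: $\sum_x \tilde a_x\cos(\xi_x+\phi)$ is maximised at the $\phi$ that aligns with the resultant vector, giving the value $|\tilde a_0e^{i\xi_0}+\tilde a_1e^{i\xi_1}|$.
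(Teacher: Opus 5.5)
Your proposal is correct and follows the same route as the paper: fix $\mathbf{x}$, note that feasibility depends only on whether $\max_{y\in I} g(\mathbf{x},y)\ge 0$, and observe that any maximiser then witnesses feasibility, so both problems have the same feasible $\mathbf{x}$-set and the same optimal value. Your added remark is also right: nonemptiness of $Y^\ast(\mathbf{x})$ requires upper semicontinuity of $g$ in $y$ together with compactness of $I$. The paper uses this tacitly, and it holds in the application, where $g$ is smooth in $\phi$.
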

\begin{proof}
Fix $(x_1,\dots,x_n)$ satisfying $h_i(x_1,\dots,x_n)\ge 0$ for all $i$.  
The constraint $g(x_1,\dots,x_n,y)\ge 0$ is feasible for some $y\in I$ if and only if $
\max_{y\in I} g(x_1,\dots,x_n,y) \;\ge\; 0$. 
Hence feasibility depends only on the maximal value of $g$ in the interval $I$.   
If there exists any $y\in I$ such that $g(x_1,\dots,x_n,y)\ge 0$, then there exists
$y^\ast \in Y^\ast(x_1,\dots,x_n)$ satisfying the constraint.

Since the variable $y$ does not appear in the objective function, restricting $y$ to
$Y^\ast(x_1,\dots,x_n)$ does not change the feasible set in the $(x_1,\dots,x_n)$ variables,
nor does it affect the optimal value of the problem. This proves the equivalence of
\eqref{eqn: general optimization : 1} and \eqref{eqn: general optimization}.
\end{proof}

We can apply Lemma~\ref{lemm: maximizing the constraint} to our problem. We can eliminate the dependence of $\phi$ by maximizing the constraint it appears in. Using
\begin{eqnarray*}
  \sum_{x} \tilde{a}_x \cos(\xi_{x} + \phi) &=&   \left( \sum_{x}\tilde{a}_x \cos(\xi_{x}) \right) \cos(\phi) - \left( \sum_{x} \tilde{a}_x \sin(\xi_{x}) \right)  \sin(\phi)
  \end{eqnarray*}
the maximum of $\sum_x \tilde{a}_x \cos(\xi_{x} + \phi)$ over $\phi$ is  
\begin{eqnarray*}
\sqrt{ \left( \sum_{x}\tilde{a}_x \cos(\xi_{x}) \right)^2 +  \left( \sum_{x} \tilde{a}_x \sin(\xi_{x}) \right)^2} .
\end{eqnarray*}
Note that $\overlap - \sum_{x} \eta_{x}\geq 0$ for $\overlap \geq \frac{1}{2}$ and  
\begin{eqnarray}
    \sqrt{ \left( \sum_{x}\eta_{x} a_{x} \cos(\xi_{x}) \right)^2 +  \left( \sum_{x} \eta_{x} a_{x} \sin(\xi_{x}) \right)^2}  + \sum_{x} \eta_{x} \geq 4 \overlap
\end{eqnarray}
imply
\begin{eqnarray}
    \left( \sum_{x}\eta_{x} a_{x} \cos(\xi_{x}) \right)^2 +  \left( \sum_{x} \eta_{x} a_{x} \sin(\xi_{x}) \right)^2  \geq \left( 4 \overlap - \sum_{x} \eta_{x}\right)^2 .
\end{eqnarray}
Summarizing the above leads to the following lemma.
\begin{lemma} The optimization problem \eqref{eqn : Non-polynomial optimization problem} has the same solution as
\begin{eqnarray}
\begin{aligned}\label{eqn: optimization_final}
\entropy_{p_{X}}(\score, \overlap) =      \inf \quad &  \sum_{x}    p_{X}(x) \left(\Phi\left(\tilde{a}_{x} \cos(\xi_{x})\right) - \Phi \left(\tilde{a}_{x} \right)\right)  \\
\mathrm{s.t.} \quad &  \sum_{x}    \left( -\eta_{x} + (-1)^{x} \tilde{a}_{x} \cos(\xi_{x})\right) \geq  4 \score -4  \\
              \quad &    \left(  \sum_{x}\tilde{a}_{x} \cos(\xi_{x}) \right)^2 +  \left( \sum_{x} \tilde{a}_{x} \sin(\xi_{x}) \right)^2 \geq \left( 4 \overlap  - \sum_{x} \eta_{x} \right)^2 \\ 
              \quad &  \forall x \in \{0 , 1 \}:  1 \geq \eta_{x} \geq \tilde{a}_{x} \geq 0,\\
              & \xi_x\in[0,2\pi].
 \end{aligned}
\end{eqnarray}
\end{lemma}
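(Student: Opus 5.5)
The plan is to apply Lemma~\ref{lemm: maximizing the constraint} to problem~\eqref{eqn : Non-polynomial optimization problem}, taking $y=\phi$ and $I=[0,2\pi]$. The other variables $(\eta_x,\tilde a_x,\xi_x)_{x}$ play the role of $(x_1,\dots,x_n)$. First I would check that the hypotheses of that lemma hold. The variable $\phi$ appears neither in the objective $\sum_x p_X(x)\bigl(\Phi(\tilde a_x\cos\xi_x)-\Phi(\tilde a_x)\bigr)$ nor in the score constraint or the box constraints $1\ge\eta_x\ge\tilde a_x\ge 0$, $\xi_x\in[0,2\pi]$. It enters only the overlap constraint, which I write as
\begin{equation*}
g(\eta,\tilde a,\xi,\phi):=\sum_x \tilde a_x\cos(\xi_x+\phi)-\Bigl(4\overlap-\sum_x\eta_x\Bigr)\ge 0 .
\end{equation*}
Since $I$ is compact and $g$ is continuous in $\phi$, the argmax set $Y^\ast$ is nonempty. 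Lemma~\ref{lemm: maximizing the constraint} then says the feasible set in the remaining variables is exactly $\{\max_{\phi\in[0,2\pi]} g\ge 0\}$, so the optimal value is unchanged if the overlap constraint is replaced by this condition.

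Next I compute the maximum over $\phi$. Expanding $\cos(\xi_x+\phi)$ gives $\sum_x\tilde a_x\cos(\xi_x+\phi)=C\cos\phi-S\sin\phi$, where $C=\sum_x\tilde a_x\cos\xi_x$ and $S=\sum_x\tilde a_x\sin\xi_x$. Its maximum over a full period is $\sqrt{C^2+S^2}$. So the overlap constraint becomes
\begin{equation*}
\sqrt{C^2+S^2}\ \ge\ 4\overlap-\sum_x\eta_x .
\end{equation*}

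The step needing care, and the main obstacle, is passing to the squared form $C^2+S^2\ge(4\overlap-\sum_x\eta_x)^2$ in \eqref{eqn: optimization_final}. Squaring is an equivalence only when the right-hand side is nonnegative. Here I would use $\eta_x\le 1$, so $\sum_x\eta_x\le 2$. In the regime $\overlap\ge\tfrac12$ of interest this gives $4\overlap-\sum_x\eta_x\ge 0$. Then $\sqrt{C^2+S^2}\ge 4\overlap-\sum_x\eta_x\ge0$ holds if and only if $C^2+S^2\ge(4\overlap-\sum_x\eta_x)^2$, since $t\mapsto t^2$ is increasing on $[0,\infty)$. For $\overlap<\tfrac12$ the squared constraint could be strictly stronger, so the equivalence can fail there. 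I would explicitly restrict the statement to $\overlap\ge\tfrac12$. For $\overlap<\tfrac12$, Lemma~\ref{lem:monotonicity-zero} already gives $\entropy_{p_X}=0$, so nothing is lost for the rates.

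Finally I would note that the score constraint, the box constraints and the objective are carried over verbatim. The resulting feasible set in $(\eta_x,\tilde a_x,\xi_x)$ is therefore identical to the projection of the original feasible set, and the two infima coincide.
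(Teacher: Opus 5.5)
Your proposal is correct and follows the paper's own argument: apply Lemma~\ref{lemm: maximizing the constraint} with $y=\phi$, maximize $\sum_x\tilde a_x\cos(\xi_x+\phi)$ over $\phi$ to get $\sqrt{C^2+S^2}$, and then square the constraint using $4\overlap-\sum_x\eta_x\ge 0$ for $\overlap\ge\tfrac12$. You are in fact slightly more careful than the paper, which states only one direction of the squaring step and leaves the restriction to $\overlap\ge\tfrac12$ implicit.
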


\subsection{Obtaining numerical bounds on the rates} 
The optimization problem \eqref{eqn: optimization_final} is non-linear and hence challenging to solve. In this subsection we introduce methods to compute lower bounds on it.

\subsubsection{Computing the optimization problem over grids}\label{app:grid}
We now address the problem of computing the convex envelope of the function $\entropy_{p_{X}}(\score, \overlap)$. As the functional form of this is difficult to obtain, we must compute this function numerically using known optimization techniques. However such computation can only happen over a finitely many points and not on the entire feasible set. 

We now address the problem of computing the convex envelope of the function \(\entropy_{p_X}(\score, \overlap)\). 
As the functional form of this function is difficult to obtain analytically, we must compute it numerically using known optimization techniques. 
However, such computations can only be performed on a finite set of points and not over the entire feasible set. 

We define the concept of a Gridding generally before specializing to our case. 
Let \(\mathcal{X}^{(i)}\) be a set of points 
\(\{x_1^{(i)}, x_2^{(i)}, \dots, x_{n_i}^{(i)}\} \subset [a^{(i)}, b^{(i)}]\) with $x_1^{(i)}=a^{(i)}$, $x_{n_i}^{(i)}=b^{(i)}$ and $x_j^{(i)}<x_{j+1}^{(i)}$ for all $i$ and $j\leq n_i-1$. 
We call 
\[
\mathcal{P} := \mathcal{X}^{(1)} \times \mathcal{X}^{(2)} \times \cdots \times \mathcal{X}^{(n)}
\] 
a \emph{gridding} of the (hyper)rectangle 
\([a^{(1)}, b^{(1)}] \times \cdots \times [a^{(n)}, b^{(n)}].\)

For a given gridding \(\mathcal{P}\) and a function 
\(F: [a^{(1)}, b^{(1)}] \times \cdots \times [a^{(n)}, b^{(n)}] \to \mathbb{R}\), 
define the \emph{grid-restricted function}
\begin{equation}
F^{\mathcal{P}}(\mathbf{x}) := F(\lfloor \B{x} \rfloor_{\M{P}}),
\end{equation}
where \(\lfloor \B{x} \rfloor_{\M{P}} \) is defined component-wise as
\[
\lfloor \mathbf{x} \rfloor_{\mathcal{P}} := \big( \max\{u \in \mathcal{X}^{(1)} \mid u \le x_1\}, \dots, \max\{u \in \mathcal{X}^{(n)} \mid u \le x_n\} \big),
\]

In other words, \(\lfloor\mathbf{x}\rfloor_{\M{P}}\) is the largest grid point in \(\mathcal{P}\) that does not exceed \(\mathbf{x}\) in any coordinate. 

For our case, we can consider a gridding of the set $[1/2 , 1 ] \times [1/2 ,1 ]$ and construct the function $\entropy_{p_X}^{\M{P}}(\score , \overlap)$ by numerically computing $\entropy_{p_{X}}(\score , \overlap)$ on the values of $\score$ and $\overlap$ in the gridding. Since $\entropy_{p_{X}}(\score , \overlap)$ is non-decreasing in $\score$ and $\overlap$, 
\(\entropy_{p_X}^{\M{P}}(\score , \overlap)\) is a valid lower bound on $\entropy_{p_{X}}(\score , \overlap)$ in the interval $[1/2 , 1] \times [1/2 , 1]$.

Using the algorithms described in Section~\ref{app: LF transform}, the function $\M{F}_{p_{X}}^{\M{P}}(\score , \overlap) := \text{convenv}(\entropy^{\M{P}}_{p_{X}}(\score , \overlap))$ can be computed over the gridding $\M{P}$ (in time linear in the number of points in the gridding), and the function can be extended to the entire domain using the interpolation technique mentioned above.

Note that \(\mathcal{F}_{p_X}^{\mathcal{P}}\) provides a lower bound on \(F_{p_X}\) (see Lemma~27 of~\cite{BhRC}). Furthermore, the optimization problem for \(\entropy_{p_X}^{\mathcal{P}}(\score, \overlap)\) can be relaxed by approximating it with a polynomial optimization problem. We discuss this relaxation in the next section.
\subsubsection{Converting the optimization problem to a polynomial optimization problem}\label{sec: poly-approximations of Phi(x)}
In this section, we approximate the function $\Phi(x)$ (and its difference) in terms of polynomials.
\begin{lemma}\label{lem: poly-approx}
The function $\Phi(x)$ has the following polynomial lower bounds: 
\begin{eqnarray}\label{eq:Phin}
 \Phi(x) \geq \Phi_{n}(x) := \sum_{k = 0}^{n} I_{k} \, x^{2k} (1 - x^2),
\end{eqnarray}
where 
\begin{eqnarray}\label{eq:Ik}
I_k:=\int_{\frac{1}{2}}^{1}\frac{1}{z\ln 2} \left(\frac{1-z}{z}\right)^{2k} \, \mathrm{d}z.
\end{eqnarray}
\end{lemma}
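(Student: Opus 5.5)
The plan is to show that $\Phi$ has an exact series expansion
\[
\Phi(x)=(1-x^2)\sum_{k=0}^{\infty} I_k\,x^{2k}, \qquad x\in(-1,1),
\]
whose coefficients $I_k$ are exactly those in \eqref{eq:Ik}. The lemma then follows because every term of the series is non-negative. Indeed, $I_k\ge 0$, since the integrand in \eqref{eq:Ik} is non-negative on $[1/2,1]$, and $x^{2k}(1-x^2)\ge 0$ for $|x|\le 1$. Hence each partial sum $\Phi_n(x)$ is at most the full series, which is $\Phi(x)$. At the endpoints $x=\pm1$ both sides vanish, because $\Phi(\pm1)=H_{\bin}(1)=H_{\bin}(0)=0$, so the inequality holds with equality there.

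\textbf{Step 1: summing the series.} For $z\in[1/2,1]$ put $t=(1-z)/z\in[0,1]$. For $|x|<1$ we have $x^2t^2<1$, so the geometric series $\sum_k (x t)^{2k}=1/(1-x^2t^2)$ converges, uniformly in $t\in[0,1]$. All terms are non-negative, so monotone convergence lets us interchange sum and integral:
\[
(1-x^2)\sum_{k\ge0} I_k x^{2k}=\frac{1-x^2}{\ln 2}\int_{1/2}^{1}\frac{\mathrm{d}z}{z\,(1-x^2t^2)} .
\]
Next change variables to $t$. With $z=1/(1+t)$ we get $\mathrm{d}z/z=-\,\mathrm{d}t/(1+t)$, and the right-hand side becomes
\[
\frac{1-x^2}{\ln 2}\int_0^1\frac{\mathrm{d}t}{(1+t)(1-xt)(1+xt)} .
\]

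\textbf{Step 2: evaluating the integral (main obstacle).} I would decompose the integrand in partial fractions in $t$. For $x\neq 0$ the poles $t=-1$, $t=1/x$ and $t=-1/x$ are distinct, which gives
\[
\frac{1-x^2}{(1+t)(1-xt)(1+xt)}=\frac{1}{1+t}+\frac{x(1+x)}{2}\cdot\frac{1}{1-xt}-\frac{x(1-x)}{2}\cdot\frac{1}{1+xt}.
\]
Integrating term by term over $[0,1]$ yields
\[
\ln 2-\tfrac{1+x}{2}\ln(1-x)-\tfrac{1-x}{2}\ln(1+x).
\]
Dividing by $\ln2$ this equals $1-\tfrac{1+x}{2}\log_2(1+x)-\tfrac{1-x}{2}\log_2(1-x)$, which is exactly $H_{\bin}\!\left(\tfrac{1+x}{2}\right)=\Phi(x)$. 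The case $x=0$ follows by continuity, or by a direct check: the integral is $\int_0^1 \mathrm{d}t/(1+t)=\ln 2$, so the sum is $1=\Phi(0)$. The only real work here is getting the partial-fraction coefficients and signs right, so I would double-check them by recombining the fractions.

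\textbf{Step 3: conclusion.} By Steps 1 and 2 the identity $\Phi(x)=(1-x^2)\sum_{k\ge0}I_kx^{2k}$ holds on $(-1,1)$. Dropping the non-negative tail $\sum_{k>n}I_k x^{2k}(1-x^2)$ gives $\Phi(x)\ge\Phi_n(x)$ there. Together with the endpoint observation above, this proves \eqref{eq:Phin} on all of $[-1,1]$.
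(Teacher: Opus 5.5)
Your overall strategy is correct and is essentially the paper's argument run in reverse. The paper \emph{derives} the representation $\Phi(x)=\int_{1/2}^{1}\frac{1}{z\ln 2}\,\frac{1-x^2}{1-((1-z)/z)^2x^2}\,\mathrm{d}z$ by inserting $\log_2 y=\frac{1}{\ln 2}\int_0^1\frac{y-1}{t(y-1)+1}\,\mathrm{d}t$ into the binary entropy and substituting $z=1-t/2$. It then expands the geometric series and drops the non-negative tail using $I_k\ge 0$. You instead start from the series, sum it back to the same integral, and \emph{verify} the integral by explicit evaluation. Your justification of the sum--integral interchange (monotone convergence) and your separate treatment of $x=\pm1$ are more careful than the paper's.

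However, Step 2 as written is wrong, in exactly the place you flagged. The coefficients of $\frac{1}{1-xt}$ and $\frac{1}{1+xt}$ are swapped. The cover-up rule at $t=1/x$ gives $\frac{1-x^2}{2(1+1/x)}=\frac{x(1-x)}{2}$, and at $t=-1/x$ it gives $\frac{1-x^2}{2(1-1/x)}=-\frac{x(1+x)}{2}$. As a check, at $x=1/2$, $t=0$ the left side is $3/4$, but your right side gives $5/4$.

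Because of this, your integrated expression $\ln 2-\frac{1+x}{2}\ln(1-x)-\frac{1-x}{2}\ln(1+x)$ is not $\ln 2\cdot\Phi(x)$. It exceeds it by $x\ln\frac{1+x}{1-x}$; for instance, at $x=1/2$ it gives about $1.60>1$ after dividing by $\ln 2$. Your next line (``this equals \ldots'') silently swaps the logarithm arguments, so two errors cancel and the conclusion does not follow from what is displayed. The correct decomposition is
\[
\frac{1-x^2}{(1+t)(1-xt)(1+xt)}=\frac{1}{1+t}+\frac{x(1-x)}{2}\cdot\frac{1}{1-xt}-\frac{x(1+x)}{2}\cdot\frac{1}{1+xt},
\]
and integrating over $[0,1]$ gives $\ln 2-\frac{1-x}{2}\ln(1-x)-\frac{1+x}{2}\ln(1+x)=\ln 2\,\Phi(x)$. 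With this correction, the rest of your proof goes through unchanged.
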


\begin{proof}
We begin with an integral representation of the logarithm~\cite{BFF}: 
\begin{eqnarray} 
\log(x) = \frac{1}{\ln 2} \int_{0}^{1} \frac{x - 1}{t(x - 1) + 1} \, \mathrm{d}t .
\end{eqnarray}

From this we obtain an integral representation of the binary entropy function 
$\Phi(x) := H_{\mathrm{bin}}\Big(\frac{1}{2} + \frac{x}{2}\Big)$:
\begin{eqnarray}
\Phi(x) = \frac{(1 - x^2)}{\ln 2}\int_{0}^{1} \frac{2 - t}{(2 - t(1 - x))(2 - t(1 + x))} \, \mathrm{d}t .
\end{eqnarray}
Changing variables to $z=1-t/2$, allows this to be rewritten as
\begin{eqnarray}
\Phi(x) = \int_{\frac{1}{2}}^{1} \frac{1}{z \ln 2} \, \frac{1 - x^2}{1 - \left( \frac{1 - z}{z} \right)^2 x^2} \, \mathrm{d}z.
\end{eqnarray}

Since $\left(\frac{x(1-z)}{z}\right)^2\in [0, 1]$ for $z \in [\frac{1}{2}, 1]$ and $x\in[-1, 1]$, the integrand can be expanded into the converging series 
\begin{eqnarray} 
\Phi(x) &=& \sum_{k = 0}^{\infty} \left( \int_{\frac{1}{2}}^{1} \frac{1}{z \ln 2} \left( \frac{1 - z}{z} \right)^{2k} \, \mathrm{d}z \right) x^{2k} (1 - x^2)\\
&=&\sum_{k = 0}^{\infty} I_k x^{2k} (1 - x^2),
\end{eqnarray}
where $I_k$ is as defined in~\eqref{eq:Ik}.

For every $k$, $I_k$ is analytically computable, e.g., $I_0 = 1$, $I_1 = 1 - \frac{1}{2 \ln 2}$, $I_2 = 1 - \frac{7}{12 \ln 2}$, etc. Since $I_k\geq0$ for all $k$, truncating the infinite sum gives the series of polynomial bounds~\eqref{eq:Phin}.
\end{proof}

This yields the following lower bound on the objective function.  
\begin{lemma}\label{lemm: objective_polynomial}
Let $a > 0$ and $\theta \in [0 , 2 \pi]$. Then
\begin{eqnarray}
 \Phi(a \cos\theta) - \Phi(a) \geq  \sum_{k=1}^{n} C_{k} a^{2k} \big(1 - \cos^{2k} \theta \big),
\end{eqnarray}
where $C_k:=I_{k-1}-I_k$.
\end{lemma}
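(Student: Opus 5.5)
The plan is to turn the series representation from the proof of Lemma~\ref{lem: poly-approx} into a power series in $x^2$ with nonpositive coefficients beyond the constant term. Subtracting two such series then gives a sum of nonnegative terms, which can simply be truncated. Throughout I take $a\in(0,1]$, since $\Phi$ is only defined on $[-1,1]$; this is the range in which the lemma is applied, namely $\tilde a_x\le\eta_x\le 1$.

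\emph{Step 1: properties of the $I_k$.} For $z\in[\tfrac12,1]$ we have $0\le \frac{1-z}{z}\le 1$. Hence the integrands in~\eqref{eq:Ik} are nonnegative and pointwise nonincreasing in $k$, so $I_0\ge I_1\ge\cdots\ge 0$ and $C_k=I_{k-1}-I_k\ge 0$. Moreover $\left(\frac{1-z}{z}\right)^{2k}\to 0$ for every $z\in(\tfrac12,1]$. By dominated convergence (the dominating function is $1/(z\ln 2)$), this gives $I_k\to 0$. Consequently the partial sums telescope, $\sum_{k=1}^{N} C_k = I_0 - I_N$, and
\[
\sum_{k\ge1} C_k = I_0 = 1 .
\]

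\emph{Step 2: re-expand $\Phi$.} From the proof of Lemma~\ref{lem: poly-approx}, $\Phi(x)=\sum_{k\ge0} I_k x^{2k}(1-x^2)$. Take the partial sum up to $N$ and regroup it by powers of $x^2$:
\[
\sum_{k=0}^{N} I_k x^{2k}(1-x^2) = I_0 - \sum_{k=1}^{N} C_k x^{2k} - I_N x^{2N+2}.
\]
Since $|x|\le 1$ and $I_N\to0$, the last term vanishes as $N\to\infty$. Because $C_k\ge0$ and $\sum_k C_k<\infty$, the series $\sum_k C_k x^{2k}$ converges absolutely on $[-1,1]$. Therefore
\[
\Phi(x)=1-\sum_{k\ge1} C_k x^{2k}\qquad\text{for all } x\in[-1,1].
\]
Working with partial sums and the explicit tail $I_Nx^{2N+2}$ avoids any rearrangement issue, including at the endpoint $|x|=1$.

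\emph{Step 3: take the difference and truncate.} Apply Step 2 with $x=a\cos\theta$ and with $x=a$, and subtract the two absolutely convergent series termwise:
\[
\Phi(a\cos\theta)-\Phi(a)=\sum_{k\ge1} C_k\, a^{2k}\bigl(1-\cos^{2k}\theta\bigr).
\]
Each summand is nonnegative, because $C_k\ge0$, $a^{2k}\ge0$ and $\cos^{2k}\theta\le1$. Dropping all terms with $k>n$ therefore yields the claimed lower bound.

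The only delicate point is the regrouping in Step 2, especially convergence at $|x|=1$. This is exactly what Step 1 handles: it establishes $C_k\ge0$ and $I_k\to0$, so the regrouped series converges and the tail term disappears.
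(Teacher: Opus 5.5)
Your proof is correct and follows the same route as the paper: rewrite $\Phi(x)=I_0+\sum_{k\ge1}(I_k-I_{k-1})x^{2k}$, subtract the two series, and truncate using $C_k\ge 0$. Your Step~1 and the explicit tail term $I_N x^{2N+2}$ in Step~2 additionally justify the regrouping (via $I_k\to 0$), which the paper leaves implicit, and you are right to make the restriction $a\le 1$ explicit, since the lemma as stated needs it.
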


\begin{proof}
From Lemma~\ref{lem: poly-approx}, we have the series expansion
\begin{eqnarray}
    \Phi(x) = I_0 + \sum_{k=1}^{\infty} \left(I_k - I_{k-1} \right) x^{2k} .
\end{eqnarray}

Hence, for any $a$ and $\theta$,
\begin{eqnarray}\label{eq:sum}
 \Phi(a\cos\theta)-\Phi(a)=\sum_{k=1}^{\infty} \left(I_{k-1}-I_k\right) a^{2k} \big(1 - \cos^{2k} \theta \big) .
\end{eqnarray}

Since $(1-z)/z\leq 1$ for $z\in[1/2, 1]$, $I_k$ is decreasing in $k$, and hence $C_k:=I_{k-1}-I_k\geq 0$. Each term in the sum in~\eqref{eq:sum} is therefore positive. Truncating the series at $k=n$ gives the claimed lower bound.
\end{proof}

After using this lemma the optimization problem~\eqref{eqn: optimization_final} is still not a polynomial problem, since $\cos(\xi_x)$ and $\sin(\xi_x)$ are not polynomials. To get around this we define
\begin{eqnarray}
\lambda_{1,x} := \tilde{a}_x \cos(\xi_x), \qquad 
\lambda_{2,x} := \tilde{a}_x \sin(\xi_x)
\end{eqnarray}
and treat $\lambda_{i,x}$ as free variables with the additional constraint
\begin{eqnarray}
\sum_{i} \lambda_{i,x}^2 = \tilde{a}_x^2.
\end{eqnarray}
Doing so allows us to lower bound \eqref{eqn: optimization_final} by a polynomial optimization problem. Using Lemma~\ref{lemm: objective_polynomial}, the objective function can be lower bounded using
\begin{eqnarray}
\Phi(\tilde{a}_x \cos \xi_x) - \Phi(\tilde{a}_x)
&=& \sum_{n=1}^{k} C_n \, \tilde{a}_x^{2n} \, \sin^2 \xi_x \, \left(1 + \cos^{2n-1} \xi_x \right) \\ 
&=& \sum_{n=1}^{k} C_n \, (\tilde{a}_x \sin \xi_x)^2 \left( \tilde{a}_x^{2n-2} + (\tilde{a}_x \cos \xi_x)^{2n-2} \right),
\end{eqnarray}
which can be written as some polynomial $P_n(\tilde{a}_x, \lambda_{1,x}, \lambda_{2,x})$.

Lower bounds on $\entropy_{p_{X}}(\overlap, \score)$ can now be obtained by solving this polynomial optimization problem using sum-of-squares (SOS) relaxation with software such as {\tt Ncpol2sdpa}. The final optimization problem takes the form
\begin{eqnarray}\label{eqn: final optimization problem for semi-DI}
\begin{aligned}  
\entropytwo(\score, \overlap) := 
     \inf \quad &  \sum_{x} p_X(x) \, P_n(\tilde{a}_x, \lambda_{1,x}, \lambda_{2,x})  \\
\textrm{s.t.} \quad &  \sum_{x} \left( -\eta_x + (-1)^x \lambda_{1,x} \right) \geq 4 \score - 4,  \\
              &  \left( \sum_{x} \lambda_{1,x} \right)^2 + \left( \sum_{x} \lambda_{2,x} \right)^2 \geq \left( 4 \overlap - \sum_{x} \eta_x \right)^2, \\ 
              &  \lambda_{1,x}^2 + \lambda_{2,x}^2 = \tilde{a}_x^2, \\ 
              &  0\leq\tilde{a}_x\leq\eta_x\leq1.
\end{aligned}       
\end{eqnarray}

Note that this is a constrained polynomial optimization problem over $8$ real variables. In fact, only $6$ are independent, with the remaining two being auxiliary (dummy) variables.

The final result of this appendix is summarized in the following theorem:
\begin{theorem}
Let $\entropytwo(\score, \overlap)$ be defined as in \eqref{eqn: final optimization problem for semi-DI}. Then
\begin{eqnarray}
F_{p_X}(\score, \overlap) \geq \mathrm{convenv}\left( \entropytwo(\score, \overlap) \right).
\end{eqnarray}
\end{theorem}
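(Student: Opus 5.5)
The plan is to chain the reductions of this appendix into a pointwise inequality $G_{p_X}\geq \entropytwo$ on $[1/2,1]\times[1/2,1]$. Then use that $\mathrm{convenv}$ is monotone: if $f\geq g$ pointwise, every convex function below $g$ is below $f$, and the infimum over measures in the definition of the envelope is larger for $f$. Combined with Lemma~\ref{lemm: F_to_G}, which gives $F_{p_X}=\mathrm{convenv}(G_{p_X})$, this yields $F_{p_X}\geq\mathrm{convenv}(G_{p_X})\geq\mathrm{convenv}(\entropytwo)$.

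For the pointwise bound I will proceed in order.
\begin{enumerate}
\item Lemma~\ref{lemm: Jodran's decomposition corollary} gives $G_{p_X}\geq\tilde G_{p_X}$.
\item The subsequent lemma rewrites $\tilde G_{p_X}$ exactly as problem~\eqref{eqn: Gopt_standard}.
\item Lemma~\ref{lemm: one more lower bound} lower bounds this by $\entropy_{p_X}$. It does so via Lemma~\ref{Claim: monotonicity in x}, the relaxation of the constraints, and the change of variables $\tilde a_x=\eta_x a_x$.
\item The elimination of $\phi$ via Lemma~\ref{lemm: maximizing the constraint} gives the equivalent form~\eqref{eqn: optimization_final}.
\item I relax~\eqref{eqn: optimization_final} to~\eqref{eqn: final optimization problem for semi-DI}.
\end{enumerate}

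For the last step, take any feasible point $(\eta_x,\tilde a_x,\xi_x)$ of~\eqref{eqn: optimization_final} and set $\lambda_{1,x}=\tilde a_x\cos\xi_x$ and $\lambda_{2,x}=\tilde a_x\sin\xi_x$. This point satisfies every constraint of~\eqref{eqn: final optimization problem for semi-DI}, because the score and overlap constraints transcribe literally and $\lambda_{1,x}^2+\lambda_{2,x}^2=\tilde a_x^2$ holds. By Lemma~\ref{lemm: objective_polynomial}, the objective $\Phi(\tilde a_x\cos\xi_x)-\Phi(\tilde a_x)$ is at least $P_n(\tilde a_x,\lambda_{1,x},\lambda_{2,x})$. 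Since $p_X(x)\geq0$, the objective of~\eqref{eqn: final optimization problem for semi-DI} at the image point is no larger. Taking infima gives $\entropy_{p_X}\geq\entropytwo$.

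One subtlety is the domain. The earlier bounds are stated on $[1/2,1]^2$, and Lemma~\ref{lem:monotonicity-zero} shows the relevant functions vanish when $\score\le1/2$ or $\overlap\le1/2$. So I will restrict the envelope to that square, or simply note that $G_{p_X}\geq0$, which holds because $H(Y|X,T=0,E)\ge0$ for a CQ state with classical $Y$. This lets me replace $\entropytwo$ by $\max(\entropytwo,0)$ without harm.

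The step I expect to be most delicate is the polynomial identity in the last relaxation. I must check that $\tilde a^{2n}(1-\cos^{2n}\xi)$ really is a polynomial in $(\tilde a,\lambda_1,\lambda_2)$ with the right sign. Writing $1-c^{2n}=(1-c^2)\sum_{j<n}c^{2j}$ gives $\tilde a^{2n}(1-\cos^{2n}\xi)=\lambda_2^2\sum_{j=0}^{n-1}\lambda_1^{2j}\tilde a^{2(n-1-j)}$, which is nonnegative. I will therefore take $P_n$ to be $\sum_k C_k$ times this expression, rather than rely on the displayed formula. With that definition, the bound from Lemma~\ref{lemm: objective_polynomial} holds pointwise, and the chain closes.
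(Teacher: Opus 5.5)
Your proposal is correct and follows the same route as the paper. The paper's proof just cites the chain $G_{p_X}\geq\tilde G_{p_X}\geq\entropy_{p_X}\geq\entropytwo$ together with Lemma~\ref{lemm: F_to_G} and monotonicity of the convex envelope; you spell these steps out. Your explicit choice $P_n=\sum_{k=1}^{n}C_k\lambda_{2,x}^2\sum_{j=0}^{k-1}\lambda_{1,x}^{2j}\tilde a_x^{2(k-1-j)}$ is a genuine improvement, since the polynomial displayed in the paper is not a valid lower bound: its $k=1$ term is $2C_1\lambda_{2,x}^2$ instead of $C_1\lambda_{2,x}^2$.
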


\begin{proof}
This is a consequence of the relaxations obtained via Lemma~\ref{lemm: objective_polynomial}, Lemma~\ref{eqn : Non-polynomial optimization problem}, and Lemma~\ref{lemm: F_to_G}.
\end{proof}

\section{Convex envelope}\label{app: LF transform}

In this section, we discuss the computation of the convex envelope of a function $f:\M{D}\to\mathbb{R}$, where $\M{D} \subseteq \mathbb{R}^n$ is a closed convex set. We start with the definition.
\begin{definition}[Convex envelope] 
 The convex envelope of $f$, denoted $\mathrm{convenv}(f):\M{D}\to\mathbb{R}$, is the function 
\begin{eqnarray}
    \mathrm{convenv}(f)(\B{x}) := \max_g \{ g(\B{x}) : g \ \mathrm{ is} \ \mathrm{ convex} \ \mathrm{ and }\ g(\B{x}')  \leq f(\B{x}')\ \forall\ \B{x}'\in\M{D}\} .
\end{eqnarray}
\end{definition}
In other words, $\mathrm{convenv}(f)$ is the convex function that is not greater than $f$ at any point, but otherwise has the largest possible value at every point. Alternatively, $\mathrm{convenv}(f)$ is the solution of the optimization problem 
\begin{eqnarray}\label{def:convenv}
\begin{aligned} 
  \mathrm{convenv}(f)(\B{x})=\inf_{\mu} \quad &  \int  f(\B{x}')\,\mathrm{d}\mu(\B{x}')\\ 
                    \textrm{s.t.} \quad &\int\B{x}'\,\mathrm{d} \mu(\B{x}')=\B{x},
 \end{aligned}
\end{eqnarray}
where the infimum is taken over all probability measures on $\M{D}$~\cite{Convex_envelope_book}. 

We now define an important tool in convex analysis that allows us to compute the convex envelope of a function.
\begin{definition}[Legendre-Fenchel transform] The Legendre-Fenchel transform of $f$ is $f^*: \mathbb{R}^{n} \to \mathbb{R}$ is defined by
\begin{eqnarray}
    f^*(\mathbf{k}) := \sup_{\mathbf{x} \in \M{D}} \left(\B{k}.\B{x} - f(\B{x}) \right) .
\end{eqnarray}
\end{definition}
The usefulness of this is given in the following lemma whose proof can be found in textbooks of convex analysis such as~\cite{Convex_envelope_book}.
\begin{lemma}
Let $f: \M{D} \to \mathbb{R}$ be bounded. Then the following holds 
\begin{itemize}
    \item $f^{*}$ is convex
    \item $(f^{*})^*(\B{x}) = \mathrm{convenv}(f)(\B{x})$ for all $\B{x}\in\M{D}$.
\end{itemize}
\end{lemma}
There are algorithms in available in the literature~\cite{lucet1997faster,contento2015linear} to compute the convex envelopes by computing the Legendre-Fenchel conjugate of the function twice. We compute the convex envelope using the method of~\cite{contento2015linear}, the code for which was generously provided by the authors to us.

\section{Monotonicity of the objective function}\label{app: useful claims}
\begin{lemma}\label{lem:16}
For all $\theta \in \mathbb{R}$, the function $f: [0 , 1] \times \mathbb{R} \to \mathbb{R}$ defined by $f(x, \theta)  = \Phi(x\cos\theta)-\Phi(x)$ is nondecreasing in $x$, and strictly increasing unless $x = 0$ or $\sin(\theta) = 0$.
\end{lemma}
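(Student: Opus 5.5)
The plan is to use the power series for $\Phi$ already obtained in the proof of Lemma~\ref{lemm: objective_polynomial}. There we have, for $|x|\le 1$,
\begin{eqnarray*}
f(x,\theta)=\Phi(x\cos\theta)-\Phi(x)=\sum_{k=1}^{\infty} C_k\, x^{2k}\left(1-\cos^{2k}\theta\right),
\end{eqnarray*}
with $C_k=I_{k-1}-I_k\ge 0$. This identity is valid because $x\cos\theta\in[-1,1]$ whenever $x\in[0,1]$.

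For each fixed $\theta$, every term $C_k(1-\cos^{2k}\theta)\,x^{2k}$ is a nonnegative coefficient times $x^{2k}$. Each term is therefore nondecreasing in $x$ on $[0,1]$, and so is the partial sum. The pointwise limit of nondecreasing functions is nondecreasing, which gives monotonicity at once. The boundary point $x=1$ needs one remark: the series still converges there, since it equals $\Phi(\cos\theta)-\Phi(1)$, which is finite. Alternatively, continuity of $\Phi$ handles it.

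For strict monotonicity, suppose $\sin\theta\neq 0$, so that $\cos^2\theta<1$ and hence $1-\cos^{2k}\theta>0$ for all $k\ge1$. It then suffices that $C_1>0$. Indeed, $C_1=I_0-I_1=\frac{1}{2\ln 2}>0$ from the computed values. Then for $0\le x<x'\le 1$,
\begin{eqnarray*}
f(x',\theta)-f(x,\theta)\ge C_1\left(1-\cos^2\theta\right)\left(x'^2-x^2\right)>0,
\end{eqnarray*}
because all the other terms contribute nonnegatively. This gives strict increase on all of $[0,1]$ whenever $\sin\theta\ne0$. The case $\sin\theta=0$ has $\cos^2\theta=1$, so $f\equiv 0$ and the function is constant, consistent with the exception in the statement. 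The ``$x=0$'' exception just reflects that strictness at a single point is vacuous.

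The main thing I would check carefully is that termwise manipulation of the series is legitimate. This means confirming that $\sum_k C_k x^{2k}$ converges absolutely on $[-1,1]$, which holds since the coefficients are nonnegative and the sum telescopes to $I_0-\lim I_k$ at $x=1$. One also needs $I_k\to0$, which follows by dominated convergence in~\eqref{eq:Ik}, since $((1-z)/z)^{2k}\to 0$ for $z\in(1/2,1]$.

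As a fallback that avoids series, I would differentiate directly:
\begin{eqnarray*}
\partial_x f=\cos\theta\,\Phi'(x\cos\theta)-\Phi'(x),\qquad \Phi'(u)=-\frac{1}{\ln 2}\,\mathrm{artanh}(u).
\end{eqnarray*}
Monotonicity then reduces to $u\,\mathrm{artanh}(u)$ being increasing in $|u|$, which holds because $u\,\mathrm{artanh}(u)$ is even and increasing on $[0,1)$.
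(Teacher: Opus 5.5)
Your proof is correct, but your main route differs from the paper's. The paper differentiates directly. Using $\Phi'(z)=-\frac{1}{\ln 2}\int_0^1\frac{z\,\mathrm{d}u}{1-z^2u^2}$ for $|z|<1$, the two terms of $\cos\theta\,\Phi'(x\cos\theta)-\Phi'(x)$ combine over a common denominator into
\[
\frac{\partial f}{\partial x}(x,\theta)=\frac{\sin^2\theta}{\ln 2}\int_0^1\frac{x\,\mathrm{d}u}{(1-x^2u^2)(1-x^2\cos^2\theta\,u^2)}.
\]
This is manifestly nonnegative and vanishes only when $x=0$ or $\sin\theta=0$, which is where the ``unless $x=0$'' in the statement comes from. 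Your fallback is essentially this argument. The integral is $\mathrm{artanh}(z)$, and monotonicity of $u\,\mathrm{artanh}(u)$ in $|u|$ is the same inequality after multiplying $\partial_x f$ by $x$.

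Your primary argument instead uses the expansion $f(x,\theta)=\sum_{k\ge1}C_k(1-\cos^{2k}\theta)x^{2k}$ with $C_k\ge0$ and $C_1=\frac{1}{2\ln 2}>0$, and reads off monotonicity and strictness coefficientwise. This buys you three things:
\begin{enumerate}
\item it needs no differentiation;
\item it gives strict increase of the function on all of $[0,1]$ directly, which makes clear that the ``$x=0$'' exception only concerns the derivative;
\item it makes the paper's next lemma a one-liner, since $g(x,\eta,\theta)=\sum_{k\ge1}C_k(1-\cos^{2k}\theta)x^{2k}(\eta-\eta^{2k})\ge0$ for $\eta\in[0,1]$.
\end{enumerate}

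The price is convergence bookkeeping. For $x<1$ the rearrangement $\Phi(y)=I_0+\sum_{k\ge1}(I_k-I_{k-1})y^{2k}$ is unproblematic. At $|y|=1$ it requires $I_k\to0$, which you correctly obtain by dominated convergence; continuity of $\Phi$ would also do. Your first remark at $x=1$ (``the series converges since it equals $\Phi(\cos\theta)-\Phi(1)$'') presupposes the identity at that point, so it is the telescoping argument that actually carries it. The paper's integral formula, by contrast, is self-contained. It is also what the paper reuses in the next lemma, by comparing integrands at $x$ and $\eta x$.
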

\begin{proof}
We start with the derivative:
\[
\frac{\partial}{\partial x}\big(\Phi(x\cos\theta)-\Phi(x)\big)
= \cos\theta\,\Phi'(x\cos\theta)-\Phi'(x).
\]
Using the identity $\Phi'(z)=-\frac{1}{\ln(2)}\int_{0}^{1}\frac{z\,du}{1-z^{2}u^{2}}$ for $|z|<1$, we obtain
\[
\frac{\partial}{\partial x}\big(\Phi(x\cos\theta)-\Phi(x)\big)
= \frac{\sin^{2}\theta}{\ln(2)} \int_{0}^{1} \frac{x }{(1-x^{2}u^{2})(1-x^{2}\cos^{2}\theta\,u^{2})}\,\mathrm{d}u\ge0,
\]
with equality if and only if $\sin(\theta)=0$ or $x=0$.
\end{proof}

The next lemma concerns the function $g:[0,1]\times[0,1]\times[0,2\pi]\to\mathbb{R}$ defined by $g(x,\eta,\theta)=\eta f(x,\theta)-f(\eta x,\theta)$, where $f(x,\theta)$ is the function defined in Lemma~\ref{lem:16}.
\begin{lemma}\label{Claim: monotonicity in x}
The function $g(x,\eta,\theta)$ is non-negative.
\end{lemma}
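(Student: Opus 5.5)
The plan is to reduce the claim to convexity of $f(\cdot,\theta)$ on $[0,1]$, together with the observation that $f(0,\theta)=\Phi(0)-\Phi(0)=0$. If $x\mapsto f(x,\theta)$ is convex on $[0,1]$, then for $\eta\in[0,1]$ the point $\eta x$ is the convex combination $\eta\cdot x+(1-\eta)\cdot 0$. Hence
\[
f(\eta x,\theta)\le \eta f(x,\theta)+(1-\eta)f(0,\theta)=\eta f(x,\theta),
\]
which is exactly $g(x,\eta,\theta)\ge 0$. So everything rests on showing convexity in $x$ for each fixed $\theta$.

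For convexity I would reuse the derivative formula from the proof of Lemma~\ref{lem:16}. For $x\in[0,1)$,
\[
\frac{\partial f}{\partial x}(x,\theta)=\frac{\sin^2\theta}{\ln 2}\int_0^1 \frac{x}{(1-x^2u^2)(1-x^2\cos^2\theta\,u^2)}\,\mathrm{d}u .
\]
For each fixed $u\in[0,1]$ the integrand is nondecreasing in $x$ on $[0,1)$. Its numerator $x$ is nonnegative and increasing, and both factors in the denominator are positive and nonincreasing in $x$. Therefore $\partial_x f$ is nondecreasing on $[0,1)$, so $f(\cdot,\theta)$ is convex on $[0,1)$. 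Since $\Phi$ is continuous on $[-1,1]$, $f$ is continuous at $x=1$, and convexity extends to the closed interval $[0,1]$ by taking limits in the convexity inequality. The cases $\sin\theta=0$, where $f\equiv 0$, and $\eta\in\{0,1\}$, where the inequality is an equality, are trivial and need no separate treatment.

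As a cross-check, or as an alternative route avoiding derivatives, I would use the series from the proof of Lemma~\ref{lemm: objective_polynomial}:
\[
f(x,\theta)=\sum_{k\ge1} C_k\, x^{2k}\bigl(1-\cos^{2k}\theta\bigr), \qquad C_k\ge 0 .
\]
Every term has nonnegative coefficient and degree at least $2$ in $x$, so $(\eta x)^{2k}\le \eta\, x^{2k}$ termwise for $\eta\in[0,1]$. Summing gives $f(\eta x,\theta)\le\eta f(x,\theta)$ directly; absolute convergence on $[-1,1]$ was already established there.

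The only delicate point is the endpoint $x=1$, where the derivative representation degenerates, and handling it by continuity is routine. I therefore expect no substantial obstacle; the main care needed is to state that convexity holds uniformly for every fixed $\theta$.
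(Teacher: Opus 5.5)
Your proof is correct and is essentially the paper's argument: the paper uses $g(0,\eta,\theta)=0$ and shows $\partial_x g=\eta\bigl[\partial_x f(x,\theta)-\partial_x f(\eta x,\theta)\bigr]\ge 0$ from the monotonicity in $x$ of the integrand in the derivative formula of Lemma~\ref{lem:16}. That is the same convexity of $f(\cdot,\theta)$ you use, integrated from $0$ rather than applied as a chord inequality. Your series alternative, using $\eta^{2k}\le\eta$ termwise with $C_k\ge 0$, is also valid and has the small advantage of avoiding the endpoint $x=1$ altogether.
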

\begin{proof}
    From the definition we have $g(0,\eta,\theta)=0$ for all $\eta$ and $\theta$. It hence suffices to show $\frac{\partial g}{\partial x}\geq0$ for all $x$, $\eta$ and $\theta$. We can compute
    \begin{align*}
        \frac{\partial g}{\partial x}&=\eta\frac{\partial g}{\partial x}(x,\theta)-\eta\frac{\partial g}{\partial x}(\eta x,\theta)\\
        &=\eta\frac{\sin^2(\theta)}{ \ln(2)}\int_0^1\frac{x}{(1-x^2u^2)(1-x^{2}\cos^2(\theta)u^2)}-\frac{\eta x}{(1-(\eta x)^2u^2)(1-(\eta x)^2\cos^2(\theta)u^2)}\,\mathrm{d}u\\
        &\geq0,
    \end{align*}
where the last line follows because $x\geq\eta x$, $(1-x^2u^2)^{-1}\geq(1-(\eta x)^2u^2)^{-1}$ and $(1-x^2\cos^2(\theta)u^2)^{-1}\geq(1-(\eta x)^2\cos^2(\theta)u^2)^{-1}$, so the integrand is positive.
\end{proof}

\section{Entropy accumulation theorem}\label{app: EAT} 
To derive the rates, we use the version of the Entropy Accumulation Theorem (EAT) presented in~\cite{LLR&,BhRC}. Specifically, we apply the formulation given in Theorem~3 of~\cite{BhRC}. In this framework, the EAT is expressed in terms of a collection of EAT channels $\{\mathcal{N}_i\}_i$, where each channel is of the form
\[
\mathcal{N}_i : \hat{R}_{i-1} \to \hat{R}_iA_iD_iU_i.
\]
These registers often have the following associations (although we will use them in different ways below):
\begin{itemize}
    \item $A_i$ is a classical register representing the outputs of the protocol in round $i$
    \item $D_i$ is a classical register representing the inputs of the protocol in round $i$
    \item $\hat{R}_{i-1}$ and $\hat{R}_i$ denote the systems held by the devices before and after round $i$
    \item $U_i$ is a classical register that stores a deterministic function of $A_i$ and $D_i$, usually a score.
\end{itemize}

To apply the EAT, it must hold that
\[
I(A_1^{i-1} : D_i \mid D_1^{i-1}, E) = 0,
\]
where \(A_1^{i-1} = A_1 A_2 \ldots A_{i-1}\), \(D_1^{i-1} = D_1 D_2 \ldots D_{i-1}\), and \(I\) denotes the conditional mutual information.
This condition requires that the inputs of the current round are independent of the outputs of all previous rounds, given the previous inputs, the previous outputs, and any side information \(E\).

The EAT then provides a bound on the conditional smooth min-entropy of the outputs $\mathbf{A} = (A_1, A_2, \dots, A_n)$ conditioned on the inputs $\mathbf{D} = (D_1, D_2, \dots, D_n)$ and side information $E$, i.e., it bounds $H^{\epsilon}_{\min}(\mathbf{A} | \mathbf{D} E)_{\rho}$ for $\epsilon>0$, 
where $\rho = \mathcal{N}_n \circ \mathcal{N}_{n-1} \circ \cdots \circ \mathcal{N}_1(\rho_{\hat{R}_0 E})$, for a collection of EAT channels $\{\mathcal{N}_i\}_{i=1}^n$ and an initial state $\rho_{\hat{R}_0 E}$.

Roughly speaking, the EAT gives a bound of the form 
\[
H^{\epsilon}_{\min}(\mathbf{A}|\mathbf{D} E)_{\rho} \geq n r - \sqrt{n} v,
\]
where $r$ is a lower bound on the conditional von Neumann entropy of a representative round under an EAT channel, and $v$ is an error term.

For our protocol, the single-round channels are defined as
\[
\mathcal{N}_i : R_i C_i \to T_iX_iY_iR_{i+1}C_{i+1}.
\]
We define an additional variable $U_i = (T_i , X_i , Y_i)$. We then map these to the registers of the EAT as follows:
\begin{itemize}
    \item \textbf{Protocol~\ref{protocol: Semi-DI recycling}} (recycled inputs): $A_i = (T_i, X_i, Y_i)$, and $D_i$ trivial.
    \item \textbf{Protocol~\ref{protocol: private to public}} (inputs not recycled): $A_i = Y_i$, and $D_i = (X_i, T_i)$.
\end{itemize}

In both cases, $U_i$ is a deterministic function of $A_i$ and $D_i$. In the first case, the condition $I(A^{i-1} : D_i|D^{i-1}, E) = 0$ trivially holds; in the second case, it holds because of the assumption that $T_i$ and $X_i$ are chosen randomly in each round. Identifying $\hat{R}_i \equiv (R_i, C_i)$, we conclude that $\mathcal{N}_i$ is an EAT channel.

The EAT then provides a bound on the conditional min-entropy $H_{\min}(\mathbf{A}|\mathbf{D} E)_{\rho}$, where $\rho = \mathcal{N}_n \circ \mathcal{N}_{n-1} \circ \cdots \circ \mathcal{N}_1(\rho_{\hat{R}_0 E})$, for a collection of EAT channels $\{\mathcal{N}_i\}_{i=1}^n$ and an initial state $\rho_{\hat{R}_0 E}$.

\subsection{Min-tradeoff function} 
A min-tradeoff function is an affine function that is always below the rate function, where by rate function we mean a lower bound on the single-round conditional von Neumann entropy over all EAT channels and input states that give the observed distribution $q$ over $U=(T,X,Y)$. Given such a distribution the score overlap and test probability are implied:
\[
\score_q = \frac{q(0,0,0)}{2(1-\gamma)p_X(0)} + \frac{q(0,1,1)}{2(1-\gamma)p_X(1)}, \quad
\overlap_q = \frac{q(1,0,0)}{2\gamma p_X(0)} + \frac{q(1,1,0)}{2\gamma p_X(1)}, \quad
\gamma_q = \sum_{x,y} q(1,x,y),
\]
and the rate function can be taken as 
\[
\rate(q) = (1-\gamma_q)F_{p_X}(\score_q, \overlap_q).
\]
Note that $\gamma_q$ is the probability of a generation rounds and so equals $\gamma$ by the design of the protocol. Any affine function $\tilde{f}(\gamma,\score,\overlap)$ for which $f(q):=\tilde{f}(\gamma_q,\score_q,\overlap_q)$ lower bounds $\rate(q)$ is a valid min-tradeoff function. Given a min-tradeoff function $f(q)$ the quantities $\mathrm{Max}[f]$, $\mathrm{Min}_{\M{Q}}[f]$ and $\mathrm{Var}[f]$ are of interest. The latter is the variance of $f$, while the first two are defined by
\begin{align*}
\mathrm{Max}[f]&= \max_{\gamma,\score,\overlap \in [0,1]} \tilde{f}(\gamma,\score,\overlap),\\
\mathrm{Min}_{\mathcal{Q}}[f]&=\min_{(\gamma,\score,\overlap)\in\mathcal{Q}} \tilde{f}(\gamma,\score,\overlap),
\end{align*}
where $\mathcal{Q}$ is the set of $(\gamma,\score,\overlap)$ allowed in quantum theory.

Finally, we can bound the variance using the Bhatia–Davis inequality~\cite{BhatiaDavis}:
\[
\mathrm{Var}[f] \le (M - \mu)(\mu - m),
\]
where $M$ is the maximum value of $f$, $m$ is the minimum, and $\mu$ is the expectation, as done in~\cite{BhRC}.   

We consider computing $F_{p_{X}}(\score , \overlap)$ on a discrete grid induced by a gridding $\M{P}$ of the domain $[0, 1 ] \times [0 ,1] $ \footnote{we set $F_{p_{X} }(\score , \overlap) = 0$, whenever $\score \leq 1/2$ or $\overlap \leq 1/2$.} of $(\score,\overlap)$ used to compute the rate function. We can then construct a gridding of the set $\tilde{\M{P}}$ of the domain $[0 ,1]^{\times 3}$ of $\tilde{f}$ to obtain the values of the function $(1-\gamma) F_{p_{X}}(\score , \overlap)$ on a grid of points. 

\subsubsection{General problem}

We discuss here how to construct a min-tradeoff function based on a grid. Let
\(
\mathcal{C} := [a^{(1)},b^{(1)}] \times [a^{(2)}, b^{(2)}] \times \dots \times [a^{(n)}, b^{(n)}] \)
be a (hyper) rectangle, and let $\mathcal{P}$ be a gridding of $\mathcal{C}$ into a finite grid with the grid sets $\M{X}^{(i)}=\{x_i^{(i)},\ldots,x_{n_i}^{(i)}\}$ as in Appendix~\ref{app:grid}. Suppose \(H : \mathcal{C} \to \mathbb{R}\) is coordinate-wise non-decreasing, i.e.,
\[
H(\mathbf{y}) \ge H(\mathbf{x}) \quad \text{whenever } \mathbf{y} \ge \mathbf{x},
\]
where $\mathbf{y} = (y_1, \dots, y_n) \ge \mathbf{x} = (x_1, \dots, x_n)$ means $y_i \ge x_i$ for all $i$. 

\subsubsection{Convex Lower Bound.}
Let $g : \mathcal{C} \to \mathbb{R}$ be a convex function such that for every grid point $\mathbf{x} \in \mathcal{P}$,
\[
g(\mathbf{x})\begin{cases}\leq H(\mathbf{x}_{-})&\text{if }\mathbf{x}_{-}\text{ exists}\\=c&\text{otherwise}\end{cases}\,,
\]
where for $\mathbf{x}=(x_{i_1}^{(1)},x_{i_2}^{(2)},\ldots,x_{i_n}^{(n)})$, we define $\mathbf{x}_{-} := (x^{(1)}_{i_1-1}, x^{(2)}_{i_2-1},\ldots,x^{(n)}_{i_n-1})$ whenever the predecessor indices exist (i.e., if $i_j\geq2$ for all $j$), and $c < \min_{\mathbf{x} \in \mathcal{C}} H(\mathbf{x})$ is a constant.

Let ${\mathbf{i}}=(i_1,i_2,\ldots,i_n)$ For any $\mathbf{x}$ in the hyper-rectangle
\[
R_{\mathbf{i}}:=[x_{i_1}^{(1)}, x_{i_1+1}^{(1)}] \times [x_{i_2}^{(2)}, x_{i_2+1}^{(2)}] \times \ldots \times [x_{i_n}^{(n)}, x_{i_n+1}^{(n)}],
\]
convexity implies
\begin{eqnarray}
g(\mathbf{x}) &\le& \max_{\mathbf{y}\in\mathrm{Vertices}(R_{\mathbf{i}})}g(\mathbf{y})\\
&\le& \max_{\mathbf{y}\in\mathrm{Vertices}(R_{\mathbf{i}})} H(\mathbf{y}_-)\\
&=& H(x_{i_1}^{(1)}, x_{i_2}^{(2)}, \dots, x_{i_n}^{(n)}),
\end{eqnarray}
where the last equality follows since $H$ is non-decreasing in every coordinate.

Thus, for any $\mathbf{x} \in \mathcal{C}$,
\[
H^{\mathcal{P}}(\mathbf{x}) - g(\mathbf{x}) = H(\lfloor \mathbf{x} \rfloor_{\mathcal{P}}) - g(\mathbf{x}) \ge 0,
\]
so $g$ is a valid lower bound on $H^{\mathcal{P}}$.

\subsubsection{Affine Min-Tradeoff Function.}
We define the min-tradeoff function to be any affine function $f$ satisfying
\[
H(\mathbf{x}) \ge f(\mathbf{x}_{-})\quad\forall \ \mathbf{x}\in\mathcal{P}.
\]
A good choice of $f(\mathbf{x}) = \mathbf{c}.\mathbf{x} + d$ can be computed by solving the linear program. Let $\B{x}^*$ be the observed experimental statistics, we solve the following optimization problem:
\begin{align*}
\min_{\mathbf{c},d}\ & H(\mathbf{x}^*) - f(\mathbf{x}^*) \\
\text{s.t.}\ & H(\mathbf{x}) \ge f(\mathbf{x}_{-}) \quad \forall\ \mathbf{x} \in \mathcal{P}.
\end{align*}

\subsection{Completeness error}
We now compute a bound on the probability that an honest protocol aborts. We use the same method as in~\cite[Section E.3.2]{BhRC}. This relies on Hoeffding's inequality~\cite{Hoeffding}, i.e., that if $\{X_i\}_{i=1}^n$ are i.i.d.\ random variables with $a\leq X_i\leq b$, $S=\sum_i X_i$ and $\mu=\mathbb{E}(S)$, then for $t>0$,
\[\mathbb{P}(S-\mu\geq t)\leq e^{-\frac{2t^2}{n(b-a)^2}}.\]

For the completeness error we want the probability of $\overlap_\#<\overlap_{\mathrm{exp}}-\delta_{\overlap}$ and $\score_\#<\score_{\mathrm{exp}}-\score_{\overlap}$. We consider both of these separately.

Recall that
\begin{eqnarray}
\overlap_\# := \frac{1}{2}\sum_{x} \frac{|\{i : U_{i} = (1 , x , 0) \}|}{n p_{X}(x)\gamma}, \quad\score_\# := \frac{1}{2}\sum_{x}\frac{|\{i :U_{i} = (0 , x , x)\}|}{n p_{X}(x)(1 - \gamma) }  \nonumber .
\end{eqnarray}
To put these in a suitable form to apply Hoeffding's inequality, consider
\begin{equation}
    Z_{i} = \begin{cases} \frac{1}{2n\gamma p_{X}(x)}  &\text{if } T_i=1,X_i=x,Y_i=0\\ 
    0       & \text{otherwise}
    \end{cases}
\end{equation}
and
\begin{eqnarray}
    W_{i} = \begin{cases} \frac{1}{2n(1 - \gamma) p_{X}(x)}  &\text{if } T_i=0,X_i=x,Y_i=x\\ 
    0       & \text{otherwise}
    \end{cases}
\end{eqnarray}
so that $\overlap_\#=\sum_iZ_i$ and $\score_\#=\sum_iW_i$. In an honest implementation of the protocol, we have
\begin{eqnarray}
 \mathbb{P}(U_i = (t_i , x_i , y_i ))  = p_{T}(t_i) p_{X}(x_i) p_{Y|t_i,x_i}(y_i). 
\end{eqnarray} 
and we set $\overlap_{\mathrm{exp}}=\mathbb{E}(\overlap_\#)$ and $\score_{\mathrm{exp}}=\mathbb{E}(\score_\#)$. We have
\begin{eqnarray}
    \mathbb{P}\left[\sum_{i} Z_{i} \leq\overlap_{\mathrm{exp}} - \delta_{\overlap} \right] &=&\mathbb{P}\left[\sum_{i}(-Z_{i})-(-\overlap_{\mathrm{exp}}) \geq\delta_{\overlap}\right],
\end{eqnarray}
where in effect we are taking random variables $-Z_i$ with expectation $-\overlap_{\mathrm{exp}}$. Since $-\frac{1}{2n\gamma\min_x[P_X(x)]}\leq(-Z_i)\leq0$, using Hoeffding's inequality gives
\begin{eqnarray}
    \mathbb{P}\left[\sum_{i} Z_{i}\leq\overlap_{\mathrm{exp}} - \delta_{\overlap} \right] &\leq&e^{-8n (\delta_{\overlap}\gamma p_{X}(0))^2 },
\end{eqnarray}
where we take $P_X(0)\leq P_X(1)$ (without loss of generality).
Similarly,
\begin{eqnarray}
    \mathbb{P}\left[\sum_{i} W_{i} \leq\score - \delta_{\score}\right] = e^{-8n(\delta_{\score}(1-\gamma)p_{X}(0))^2 }
\end{eqnarray}

We need to compute the probability of the event that both $\sum_iZ_{i} \leq \overlap - \delta_{\overlap}$ and $\sum_iW_{i} \leq \score - \delta_{\score}$. Using the union bound we have that the probability that the protocol does not abort satisfies
\begin{eqnarray}
     p_{\bar{\Omega}} \leq e^{-8n(\delta_{\overlap} \gamma p_{X}(0) )^2 } +e^{-8n(\delta_{\score} (1 - \gamma) p_{X}(0) )^2 }. 
\end{eqnarray}
Given a desired completeness error $\epsilon_C$, we choose $\delta_{\score} > 0$ and $\delta_{\overlap} > 0$ so that $F_{p_{X}}(\score - \delta_{\score} , \overlap - \delta_{\overlap})$ is maximized.

\section{Remarks on Assumption \ref{ass: carrier}}\label{app: discussion_on_carrier_information} 

This appendix addresses Assumption~\ref{ass: carrier} and implementation-level measures that can reduce the risk that information about $X$ is conveyed via unintended side channels, acknowledging that such measures do not completely eliminate the problem. In particular, if the source is optical and emits additional signals beyond the intended optical band, then these signals may still be photonic (e.g., out-of-band light, parasitic spectral components, or other unmonitored optical degrees of freedom) and hence propagate at (or extremely close to) the speed of light. In the presence of such additional signals, timing- or velocity-based arguments are not sufficient to exclude side-channel leakage, especially once realistic processing and detection latencies are included.

Accordingly, we treat Assumption~\ref{ass: carrier} as an implementation constraint that can be enforced approximately, using countermeasures that reduce the accessible side-channel manifold. Concretely, the admissible optical band can be restricted using well-characterised spectral filtering, dichroic elements, and wavelength-selective coatings, together with spatial-mode constraints such as single-mode coupling and apertures, thereby strongly attenuating emission outside the intended band and mode. In addition, it is natural to complement the primary power meter with broader monitoring (e.g., a broadband photodetector and/or a spectrally resolving monitor on a tap of the outgoing light) and to impose an explicit abort condition whenever out-of-band energy or related statistics exceed predefined thresholds. Since side-channel leakage can manifest through changes in temporal structure even at fixed total power, one can monitor the distribution of detection times and inter-pulse timing at trusted monitoring points and abort upon statistically significant deviations from calibrated behaviour, which also constrains time-encoding strategies. We restrict the discussion to these high-level mitigations because a comprehensive treatment of hidden photonic side channels is inherently system-specific and lies outside the scope of the present security proof.

\bibliography{ref}

\end{document}